\documentclass[a4paper,UKenglish,cleveref, autoref, thm-restate]{lipics-v2021}
%This is a template for producing LIPIcs articles. 
%See lipics-v2021-authors-guidelines.pdf for further information.
%for A4 paper format use option "a4paper", for US-letter use option "letterpaper"
%for british hyphenation rules use option "UKenglish", for american hyphenation rules use option "USenglish"
%for section-numbered lemmas etc., use "numberwithinsect"
%for enabling cleveref support, use "cleveref"
%for enabling autoref support, use "autoref"
%for anonymousing the authors (e.g. for double-blind review), add "anonymous"
%for enabling thm-restate support, use "thm-restate"
%for enabling a two-column layout for the author/affilation part (only applicable for > 6 authors), use "authorcolumns"
%for producing a PDF according the PDF/A standard, add "pdfa"

\pdfoutput=1 %uncomment to ensure pdflatex processing (mandatatory e.g. to submit to arXiv)
\hideLIPIcs  %uncomment to remove references to LIPIcs series (logo, DOI, ...), e.g. when preparing a pre-final version to be uploaded to arXiv or another public repository

%\graphicspath{{./graphics/}}%helpful if your graphic files are in another directory

%% MW: I kept both versions for now - we can produce a "long only" if arXiv needs it for the source files
%short-long-version switch
\newif\iflong
\newif\ifshort
\newif\ifstar
\newif\ifnostar
% comment the below line out for short version
\longtrue

\iflong
\else
\shorttrue
\fi

\ifstar
\else
\nostartrue
\fi

\bibliographystyle{plainurl}% the mandatory bibstyle

\title{Polynomial Kernel and Incompressibility for Prison-Free Edge Deletion and Completion} %TODO Please add

\titlerunning{Polynomial Kernel and Incompressibility for Prison-Free Edge Deletion and Completion} %TODO optional, please use if title is longer than one line
% \author{}{}{}{}{}
% \author{Eduard Eiben}{Department of Computer Science, Royal Holloway University of London, UK}{eduard.eiben@rhul.ac.uk}{0000-0003-2628-3435}{}%TODO mandatory, please use full name; only 1 author per \author macro; first two parameters are mandatory, other parameters can be empty. Please provide at least the name of the affiliation and the country. The full address is optional. Use additional curly braces to indicate the correct name splitting when the last name consists of multiple name parts.
% \author{Séhane Bel Houari-Durand}{ENS Lyon, France}{sehane.bel_houari-durand@ens-lyon.fr}{}{}

% \author{Magnus Wahlström}{Department of Computer Science, Royal Holloway University of London, UK}{magnus.wahlstrom@rhul.ac.uk}{}{}

 \author{Séhane Bel Houari-Durand}{ENS Lyon, France}{sehane.bel_houari-durand@ens-lyon.fr}{0009-0003-6661-5171}{}
 
 \author{Eduard Eiben}{Department of Computer Science, Royal Holloway University of London, UK}{eduard.eiben@rhul.ac.uk}{0000-0003-2628-3435}{}

 \author{Magnus Wahlström}{Department of Computer Science, Royal Holloway University of London, UK}{magnus.wahlstrom@rhul.ac.uk}{0000-0002-0933-4504}{}

\authorrunning{S. Bel Houari-Durand, E. Eiben, and M. Wahlström} %TODO mandatory. First: Use abbreviated first/middle names. Second (only in severe cases): Use first author plus 'et al.'

\Copyright{Séhane Bel Houari-Durand, Eduard Eiben, and Magnus Wahlström} %TODO mandatory, please use full first names. LIPIcs license is "CC-BY";  http://creativecommons.org/licenses/by/3.0/

\ccsdesc[500]{Theory of computation~Parameterized complexity and exact algorithms} %TODO mandatory: Please choose ACM 2012 classifications from https://dl.acm.org/ccs/ccs_flat.cfm 

\keywords{Graph modification problems, parameterized complexity, polynomial kernelization} %TODO mandatory; please add comma-separated list of keywords

\category{} %optional, e.g. invited paper

\relatedversion{} %optional, e.g. full version hosted on arXiv, HAL, or other respository/website
%\relatedversiondetails[linktext={opt. text shown instead of the URL}, cite=DBLP:books/mk/GrayR93]{Classification (e.g. Full Version, Extended Version, Previous Version}{URL to related version} %linktext and cite are optional

%\supplement{}%optional, e.g. related research data, source code, ... hosted on a repository like zenodo, figshare, GitHub, ...
%\supplementdetails[linktext={opt. text shown instead of the URL}, cite=DBLP:books/mk/GrayR93, subcategory={Description, Subcategory}, swhid={Software Heritage Identifier}]{General Classification (e.g. Software, Dataset, Model, ...)}{URL to related version} %linktext, cite, and subcategory are optional

%\funding{(Optional) general funding statement \dots}%optional, to capture a funding statement, which applies to all authors. Please enter author specific funding statements as fifth argument of the \author macro.

%\acknowledgements{I want to thank \dots}%optional

\nolinenumbers %uncomment to disable line numbering

%Editor-only macros:: begin (do not touch as author)%%%%%%%%%%%%%%%%%%%%%%%%%%%%%%%%%%
\EventEditors{Olaf Beyersdorff, Micha\l{} Pilipczuk, Elaine Pimentel, and Nguyen Kim Thang}
\EventNoEds{4}
\EventLongTitle{42nd International Symposium on Theoretical Aspects of Computer Science (STACS 2025)}
\EventShortTitle{STACS 2025}
\EventAcronym{STACS}
\EventYear{2025}
\EventDate{March 4--7, 2025}
\EventLocation{Jena, Germany}
\EventLogo{}
\SeriesVolume{327}
\ArticleNo{21}
%%%%%%%%%%%%%%%%%%%%%%%%%%%%%%%%%%%%%%%%%%%%%%%%%%%%%%
%%%%%%%%%%%%%%%%%%%%%%%%%%%%%%%%%%%%%%%%%%%%%%%%%%%%%%

\usepackage{xspace}
\usepackage{todonotes}

\newtheorem{reductionRule}{Reduction Rule}

%%%%%%%%%%%%%%%%%%%%%%%%%%%%%%%%%%%%%%%%%%%%%
\newcommand{\cclass}[1]{\ensuremath{\mbox{\textup{#1}}}\xspace}
\newcommand{\NP}{\cclass{NP}}
\newcommand{\coNP}{\cclass{coNP}}

\newcommand{\containment}{\NP~$\subseteq$~\coNP/poly\xspace}

\newcommand{\delProblem}{\textsc{Prison-Free Edge Deletion}\xspace}

\newcommand{\cmd}{\ensuremath{\mathrm{cmd}}}

\newcommand{\FFF}{\ensuremath{\mathcal{F}}}
\newcommand{\PPP}{\ensuremath{\mathcal{P}}}
\newcommand{\SSS}{\ensuremath{\mathcal{S}}}
\newcommand{\cG}{\mathcal{G}\xspace}
\newcommand{\cH}{\mathcal{H}\xspace}
\newcommand{\cmsG}{\mathcal{F}}

\newcommand{\bigoh}{\mathcal{O}}

\newcommand{\typeOne}[1]{\ensuremath{\mathcal{T}^1_{#1}}}
\newcommand{\typeThree}[1]{\ensuremath{\mathcal{T}^2_{#1}}}

%%%%%%%%%%%%%%%%%%%%%%%%%%%%%%%%%%%%%%%%%%%%%%%%%%%%%%

%%%%%%%%%%%%%%%%%%%%%%%%%%%%%%%%%%%%%%%%%%%%%%%%%%%%%%%%%%

\begin{document}

\maketitle

%TODO mandatory: add short abstract of the document
\begin{abstract}
    Given a graph $G$ and an integer $k$, the \textsc{$H$-free Edge
      Deletion} problem asks whether there exists a set of at most $k$
    edges of $G$ whose deletion makes $G$ free of induced copies of
    $H$. Significant attention has been given to the \emph{kernelizability} 
    aspects of this problem -- i.e., for which graphs $H$ does the problem
    admit an ``efficient preprocessing'' procedure, known as a \emph{polynomial kernelization},
    where an instance $I$ of the problem with parameter $k$ is reduced
    to an equivalent instance $I'$ whose size and parameter value are bounded polynomially in $k$? 
    Although such routines are known for many graphs $H$ where the class
    of $H$-free graphs has significant restricted structure, 
    it is also clear that for most graphs $H$ the problem is \emph{incompressible},
    i.e., admits no polynomial kernelization parameterized by $k$ unless the polynomial hierarchy collapses.
    These results led Marx and Sandeep to the conjecture that \textsc{$H$-free
      Edge Deletion} is incompressible for any graph $H$ with at least five vertices, unless $H$
    is complete or has at most one edge (JCSS 2022). This conjecture was reduced
    to the incompressibility of \textsc{$H$-free Edge Deletion} for a
    finite list of graphs $H$. We consider one of these graphs, which we dub the
    \emph{prison}, and show that \textsc{Prison-Free Edge Deletion}
    has a polynomial kernel, refuting the conjecture. On the other
    hand, the same problem for the complement of the prison is
    incompressible.   
\end{abstract}
\newpage
\section{Introduction}

Let $H$ be a graph. A graph $G$ is \emph{$H$-free} if it does not
contain $H$ as an induced subgraph. More generally, let $\cH$ be a
collection of graphs. A graph is \emph{$\cH$-free} if it is
$H$-free for every $H \in \cH$. In the \textsc{$H$-free Edge Editing}
(respectively \textsc{$\cH$-free Edge Editing}) problem, 
given a graph $G$ and an integer $k$, the task is to add or
delete at most $k$ edges from $G$ such that the result is $H$-free
(respectively $\cH$-free). 
The \textsc{Edge Deletion} and \textsc{Edge Completion} variants
are the variants where only deletions, respectively only adding edges
is allowed. These are special cases of the much more general
\emph{graph modification} problem class, where a problem is defined by
a graph class $\cG$ and the natural problem variants
(\textsc{$\cG$ Edge Editing/Deletion/Completion} respectively
\textsc{$\cG$ Vertex Deletion}) are where the input graph $G$ is to be
modified so that the result is a member of $\cG$.

As Cai~\cite{Cai96} noted, for every finite $\cH$, the $\cH$-free
graph modification problems are FPT with a running time of $O^*(2^{O(k)})$
by a simple branching algorithm. However, the question of \emph{kernelization}
is much more subtle. A \emph{polynomial kernelization}
for a parameterized problem is a polynomial-time procedure
that takes as input an instance of the problem,
for example, $I=(G,k)$ in the case of a graph modification problem,
and outputs an instance $(G',k')$ of the same problem
such that $|V(G')|, k' \leq p(k)$ for some polynomially bounded
function $p(k)$ of $k$, and such that $(G',k')$ is a
yes-instance if and only if $(G,k)$ is a yes-instance. 
If so, we say that the problem has a \emph{polynomial kernel}. 
This has been used as a way to capture the notion of
efficient instance simplification and preprocessing,
and deep and extensive work has been done on determining
whether various parameterized problems have polynomial kernels
or not (under standard complexity-theoretical assumptions).
See the book by Fomin et al.~\cite{FominLSZkernelsbook}.

For many graph modification problems, both those characterized by
finite and infinite families $\cH$, polynomial kernelization is known,
but for many others, the question is wide open; see the survey of
Crespelle et al.~on parameterized edge modification
problems~\cite{CrespelleDFG23}.
For the structurally simpler case of \textsc{$H$-free Edge Deletion},
if $H$ is a clique then the problem reduces
to \textsc{$d$-Hitting Set} for $d=|E(H)|$
and has a polynomial kernel by the sunflower lemma,
and if $|E(H) \leq 1$ then the problem is trivial.
For the same reason, \textsc{$H$-free Vertex Deletion}
has a polynomial kernel for every fixed $H$. 
But in all other cases, the question is more intricate,
since deleting an edge in one copy of $H$ in $G$
can cause another copy of $H$ to occur, implying
a dependency between modifications that is not present
in the simpler cases. Beyond cliques and near-empty graphs,
polynomial kernels are known when $H$ is $P_3$ (i.e., $H$-free graphs are cluster graphs)~\cite{GrammGHN05},
$P_4$ (i.e., $H$-free graphs are cographs)~\cite{GuillemotHPP13}, the paw~\cite{2} and the diamond~\cite{CaoRSY22} (see Figure~\ref{fig:prison}).
Kernels are also known for several simple classes characterized by finite sets
$\cH$. But there are significant open cases; Crespelle et al.~\cite{CrespelleDFG23}
highlight the classes of claw-free graphs and line graphs, although the case of line graphs has since been resolved~\cite{EibenLochet20}.
Initially, progress led Fellows et al.~\cite{FellowsLRS07} to ask
(very speculatively) whether $\cH$-free graph modification problems
have polynomial kernels for all finite $\cH$.
This was refuted by Kratsch and Wahlström~\cite{KratschW13hfree},
and after a series of lower bounds, most importantly by Cai and Cai~\cite{3},
the answer now appears to be the opposite -- the
$\cH$-free graph modification problems have polynomial kernels
only for particularly restrictive choices of $\cH$.
Furthermore, in all such cases the kernel depends
intimately on the structural characterization
of the graph class, such as structural decomposition results.
However, it would appear unlikely that such a structural characterization 
of $H$-free graphs should exist for any arbitrary graph $H$, and correspondingly,
we would expect $H$-free edge modification problems 
not to admit polynomial kernelization.
For example, despite the above-mentioned positive results, \textsc{$H$-free Edge Deletion} has no kernel
for $H=P_\ell$ where $\ell \geq 5$, for $H=C_\ell$ for $\ell \geq 4$
or for any $H$ such that $H$ or its complement is 3-connected
(excepting the trivial cases)~\cite{3}.
Marx and Sandeep~\cite{1} pushed the pendulum in the other direction
and conjectured that for graphs $H$ on at least five vertices,
only the above-mentioned immediate kernels exist, conjecturing the following.

\begin{conjecture}[Conjecture~2 of~{\cite{1}}]
  \textsc{$H$-free Edge Deletion} does not have a polynomial kernel
  for any graph $H$ on at least 5 vertices, unless
  $H$ is complete or has at most one edge.
\end{conjecture}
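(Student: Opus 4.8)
The statement as written turns out to be \emph{false}, and the plan is to disprove it. By the reduction of Marx and Sandeep~\cite{1}, the conjecture is equivalent to the incompressibility of \textsc{$H$-free Edge Deletion} for a finite list of graphs $H$, so it suffices to exhibit one such $H$ that is neither complete nor has at most one edge and for which the problem admits a polynomial kernel. We take $H$ to be the \emph{prison} of Figure~\ref{fig:prison}; the problem in question is then \delProblem. The first step is a structural description of prison-free graphs. Observe that neither the prison nor its complement can be $3$-connected, for otherwise the problem would be incompressible by Cai and Cai~\cite{3}, contradicting the kernel we are after. Hence a prison-free graph decomposes along small vertex separators and modules into pieces of a restricted type, and --- the point that makes a kernel plausible at all, in contrast with the ``unstructured'' graphs $H$ for which incompressibility is known --- induced copies of the prison are \emph{confined}: they cannot overlap in complicated ways, and away from a bounded-complexity ``conflict part'' the graph reduces to simple attachments.

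Building on this, the second step is a short list of provably safe reduction rules, from which the kernel follows. The generic obstacle to kernelizing \textsc{$H$-free Edge Deletion} --- that deleting an edge of one induced copy of $H$ can create a new induced copy, so that modifications interact globally --- is tamed here by confinement, much as for the paw~\cite{2} and the diamond~\cite{CaoRSY22}. Concretely I would: (i) collapse every large class of true twins and every large class of false twins to bounded size, using the fact that a minimal deletion set treats members of a large twin class essentially uniformly; (ii) mark a set of ``relevant'' vertices and edges of size polynomial in $k$, namely those lying in some induced prison, or in a short certificate that an induced prison can be created by few deletions, together with bounded neighbourhoods, and show that everything else is irrelevant, i.e.\ can be deleted or replaced by a bounded gadget without changing the answer; and (iii) cap $k$ once it exceeds the number of relevant edges. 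Proving safeness of the ``irrelevant part'' rules, which is where the cascading-copies phenomenon must be confronted directly, and then showing that after exhaustive application $|V(G)|$ and $k'$ are bounded by a fixed polynomial in $k$, is the core of the argument and, I expect, the main obstacle.

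For the complementary lower bound we show that \textsc{$H$-free Edge Deletion} for $H$ the complement of the prison is incompressible, and hence has no polynomial kernel unless \containment. To prove this, the plan is a cross-composition from a suitable NP-hard problem (equivalently, a polynomial-parameter transformation from a problem already known to be incompressible, such as \textsc{$P_\ell$-free Edge Deletion} for $\ell \ge 5$~\cite{3}). Note that the complement of the prison is again not $3$-connected, so the Cai--Cai framework does not apply off the shelf and a tailored construction is required: one designs variable-gadgets and clause/consistency-gadgets in which the only beneficial edge deletions are exactly those that create or destroy induced copies of the complement of the prison, so that the composed instance admits a deletion set within the prescribed budget precisely when one of the input instances is a yes-instance. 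I expect the main difficulty here to be the gadget design itself --- extracting enough ``leverage'' from a graph that is deliberately low on connectivity --- together with the bookkeeping needed to certify that no induced copy of the complement of the prison appears accidentally, whether between two gadgets or between a gadget and the selection part of the construction.
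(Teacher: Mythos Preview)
Your high-level plan is right: the conjecture is refuted by showing that \delProblem{} has a polynomial kernel, and the complementary hardness is for \textsc{Prison-free Edge Completion}. But both halves of your sketch miss the specific ideas that make the arguments go through.

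For the kernel, your structural picture (``decomposes along small vertex separators and modules'', twin reduction, generic marking) is not the mechanism. The actual structure theorem is that in a prison-free graph every $K_4$ sits inside a unique maximal induced \emph{complete multipartite} subgraph with at least four parts, and any vertex outside such a subgraph sees at most one of its parts. The kernel then proceeds in two stages that your outline does not anticipate: first, the sunflower lemma is used to extract a vertex set $S$ of size $k^{O(1)}$ so that any $k$-edge set hitting all prisons inside $G[S]$ already hits all prisons of $G$; second, one shows that the edges of $G[\bar S]$ partition into at most $|S|^{O(1)}$ maximal complete multipartite pieces, each of which interacts with $S$ in one of a bounded number of ways, and inside each piece one marks $O(k)$ vertices per neighbourhood pattern in small subsets of $S$. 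Twin reduction plays no role, and a generic ``mark what lies in some prison'' rule would not control the pieces of $G[\bar S]$ or their interface with $S$; the complete multipartite decomposition is doing the real work.

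For the lower bound, a direct cross-composition or a PPT from $P_\ell$-free deletion runs into exactly the obstacle the paper isolates: in \textsc{Prison-free Edge Completion} one \emph{cannot} confine a minimal completion to the selected input instance --- any completion must propagate back through the selection tree and merge many gadget vertices into a single complete multipartite block, so the budget leaks by an uncontrolled additive term. The paper's fix is to first prove an \emph{additive-gap} NP-hardness (it is hard to distinguish cost $\le k$ from cost $\ge k+g$ with $g=\Theta(n^{2-\varepsilon})$), and then set the gap large enough that the leakage along a root--leaf path of the composition tree is a lower-order term absorbed by $g$. Without this gap step, your proposed composition would not give a tight budget separating yes- from no-instances.
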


They showed that this conjecture is equivalent to the statement
that \textsc{$H$-free Edge Deletion} admits no polynomial kernelization
if $H$ is one of nineteen specific graphs on five or six vertices. 
They also gave a corresponding conjecture for \textsc{$H$-free Edge Editing}
(where $|E(H)|=1$ is no longer a trivial case), and the case
of \textsc{$H$-free Edge Completion} follows by dualization.

In this paper, we refute this conjecture. We study the graph $H$ shown
in Figure~\ref{fig:prison} (the complement of $P_3+2K_1$), which we dub
a \emph{prison} (given that it can be drawn as the 5-vertex
``house'' graph with additional crossbars added). 
This is the first graph in the set $\cH$ in~\cite{1}. 
We show that \textsc{Prison-free Edge Deletion} has a polynomial
kernel. On the other hand, \textsc{Prison-free Edge Completion}
admits no polynomial kernelization unless the polynomial hierarchy
collapses. We leave \textsc{Prison-free Edge Editing} open for future work.

\subsection{Our results}

As expected, our result builds on a characterization of prison-free
graphs. We then derive the kernelization and lower bound results
working over this characterization. 

\subparagraph*{Prison-free graphs.} The start of our results is a
structure theorem for prison-free graphs. To begin with, note that
the 4-vertex induced subgraphs of a prison are $K_4$,
the diamond and the paw; see Figure~\ref{fig:prison}. The structure of diamond-free
and paw-free graphs are known: A graph is diamond-free if and
only if it is \emph{strictly clique irreducible}, i.e., every edge of
the graph lies in a unique maximal clique \cite{WallisZ90},
and a graph is paw-free if and only if every connected component
is either triangle-free or complete multipartite~\cite{4}. 
The structure of prison-free graphs generalizes both.

A \emph{complete multipartite graph} with classes $P_1, \ldots, P_m$
is a graph whose vertex set is the disjoint union $P_1 \cup \ldots \cup P_m$
and with an edge $uv$ for $u \in P_i$ and $v \in P_j$ if and only if $i \neq j$. 
Note that a clique is a complete multipartite graph where every part
is a singleton. We show the following. 
\begin{restatable}{theorem}{thmstructure} 
\label{the:handbag}
% \ifstar$(\star)$\fi
  A graph $G=(V,E)$ is prison-free if and only if the following holds:
  Let $F \subseteq V$ be an inclusion-wise maximal set such that
  $G[F]$ is complete multipartite with at least 4 parts, and
  let $v \in V \setminus F$.
  Then $N(v)$ intersects at most one part of $F$. 
\end{restatable}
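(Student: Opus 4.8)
The plan is to prove both directions by contraposition, relating induced prisons directly to the displayed local condition. Throughout I would use two elementary facts about a complete multipartite graph $G[F]$ with parts $P_1,\dots,P_m$: two vertices of $F$ are adjacent if and only if they lie in different parts, and hence one vertex from each of any four distinct parts induces a $K_4$. I would also record that the prison is precisely ``a $K_4$ together with one further vertex adjacent to exactly the two endpoints of one edge of the $K_4$''; equivalently, any five vertices of which four induce a $K_4$ and the fifth is adjacent to exactly two of them induce a prison.

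For the direction ``condition $\Rightarrow$ prison-free'' I would argue the contrapositive. Suppose $G$ contains an induced prison on $\{a,c,d,e,b\}$, with $\{a,c,d,e\}$ inducing $K_4$ and $b$ adjacent to exactly $d$ and $e$. Since $\{a,c,d,e\}$ already induces a complete multipartite graph with four (singleton) parts, greedily extend it to an inclusion-wise maximal set $F$ with $G[F]$ complete multipartite; each vertex added during the extension either forms a new part or joins an existing one, so $F$ still has at least four parts. Then $b\notin F$: if $b\in F$, then $b$ is non-adjacent to $a$ and to $c$, forcing $b,a,c$ into one part, contradicting $a\sim c$. Since $N(b)\supseteq\{d,e\}$ and $d,e$ lie in different parts of $F$ (as $d\sim e$), the set $N(b)$ meets at least two parts of $F$, so the condition fails for this $F$ and $v=b$.

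For the converse direction, suppose $F$ is inclusion-wise maximal with $G[F]$ complete multipartite on $m\ge 4$ parts and some $v\notin F$ has $N(v)$ meeting at least two parts of $F$; I would produce an induced prison. First, $v$ is not adjacent to all of $F$, since otherwise $G[F\cup\{v\}]$ would be complete multipartite with $m+1\ge 5$ parts, contradicting maximality. Call a part \emph{full} if it lies in $N(v)$, \emph{mixed} if it meets but is not contained in $N(v)$, and \emph{empty} if it is disjoint from $N(v)$, with counts $\alpha,\beta,\gamma$; then $\alpha+\beta+\gamma=m\ge 4$, $\alpha+\beta\ge 2$, $\beta+\gamma\ge 1$, and maximality also excludes $\beta=0,\gamma=1$ (otherwise $v$ could be added to the unique empty part). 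If $\beta=0$, then $\alpha\ge 2$ and $\gamma\ge 2$, and one vertex from each of two full parts and two empty parts, together with $v$, induces a prison. If $\beta\ge 1$, fix a mixed part $P_1$ with $u_1\in P_1\cap N(v)$ and $w_1\in P_1\setminus N(v)$: if at least two parts other than $P_1$ meet $N(v)$, choose neighbours $d,e$ of $v$ in two such distinct parts and check that $\{v,u_1,w_1,d,e\}$ induces a prison (with $K_4$ on $\{v,u_1,d,e\}$ and $w_1$ adjacent to exactly $d$ and $e$); otherwise exactly one further part $P_2$ meets $N(v)$, so the remaining $m-2\ge 2$ parts are empty, and a vertex $u_2\in P_2\cap N(v)$ together with $u_1$ and one vertex from each of two empty parts forms a $K_4$ of which $v$ is adjacent to exactly two vertices, again a prison.

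The one genuinely delicate point is the case analysis in the converse direction: one must check that the two templates -- ``$v$ is the extra vertex'' (used when there are two empty parts) and ``a vertex of $F$ is the extra vertex'' (the mixed-part construction with centre $w_1$) -- together cover all configurations allowed by $m\ge 4$, $\alpha+\beta\ge 2$, $\beta+\gamma\ge 1$ and maximality, and that each side condition in the split is forced. Everything else reduces to the routine adjacency bookkeeping enabled by the two facts recorded at the start, together with the observations about maximality of $F$.
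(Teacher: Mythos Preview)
Your proof is correct and takes essentially the same approach as the paper: both directions are argued exactly as you do, with the harder direction (producing a prison from a bad $v$) handled by the same full/mixed/empty classification of parts and the same two prison templates (one with $v$ as the pendant vertex, one with a vertex $w_1$ from a mixed part as the pendant). The paper packages the harder direction as a separate lemma and organises the case split via an ordering of the parts rather than your $(\alpha,\beta,\gamma)$ counts, but the content is identical.
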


Furthermore, let $cmd_p(G)$ for $p \in \mathbb{N}$ be the collection of all inclusion-wise
maximal $F \subseteq V(G)$ such that $G[F]$ is complete multipartite
with at least $p$ classes. We show that $cmd_4(G)$ induces a form of
partition of the cliques of $G$. 

\begin{restatable}{corollary}{corstructure}
\label{cor:cmd4}
  Let $G=(V,E)$ be a prison-free graph. The following hold.
  \begin{enumerate}
  \item If $F, F' \in cmd_4(G)$ are distinct and $F \cap F' \neq \emptyset$,
    then $F \cap F'$ intersects only one class $C$ of $F$ and
    $C'$ of $F'$, and there are no edges between $F \setminus C'$ 
    and $F' \setminus C$.
    In particular, $G[F \cap F']$ is edgeless.
  \item If $F, F' \in cmd_4(G)$ with $F \cap F' = \emptyset$,
    then for every $v \in F'$, $N(v)$ intersects at most one class of $F$
  \item Let $e \in E(G)$ be an edge that occurs in at least one $K_4$ in $G$. 
    Then there is a unique $F \in \cmd_4(G)$ such that $e$ occurs in $G[F]$. 
  \end{enumerate}
\end{restatable}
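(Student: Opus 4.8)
The plan is to derive each of the three items of Corollary~\ref{cor:cmd4} from the structure theorem (Theorem~\ref{the:handbag}) together with basic properties of complete multipartite graphs and the maximality of the sets in $\cmd_4(G)$. Throughout I will use the fact that a complete multipartite graph with classes $P_1,\ldots,P_m$ has the property that two vertices are non-adjacent if and only if they lie in the same class, and that in such a graph with $m \geq 3$ every edge lies in a triangle, so in particular every edge lies in a $K_4$ once $m \geq 4$.

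For item~(1), I would take distinct $F, F' \in \cmd_4(G)$ with a common vertex $w$, and let $C$ be the class of $F$ and $C'$ the class of $F'$ containing $w$. First I argue $F \cap F' \subseteq C$: if some $u \in F \cap F'$ lay in a class of $F$ other than $C$, then $uw$ would be an edge (distinct classes of $F$), but $u,w \in C'$ forces $u,w$ non-adjacent in $G[F']$, a contradiction; symmetrically $F \cap F' \subseteq C'$, so $F \cap F' \subseteq C \cap C'$, which immediately gives that $G[F\cap F']$ is edgeless. Next, to see there are no edges between $F \setminus C'$ and $F' \setminus C$, I would apply Theorem~\ref{the:handbag}: take $v \in F' \setminus C \subseteq V \setminus F$ (using $F\cap F' \subseteq C$, so $v \notin F$; here I must also handle $v$ possibly lying in $C'$, but $v \notin C$ and $C$ is the class met by $F\cap F'$, and I will argue $N(v)$ cannot meet two classes of $F$). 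By the theorem, $N(v)$ meets at most one class of $F$. Since $v$ is adjacent in $G[F']$ to $w \in C \cap F \cap F'$ whenever $v \notin C'$, that unique class is $C$ (for $v \in F'\setminus (C\cup C')$); and $v$ has a neighbour in every class of $F$ other than the one it meets only if... — more carefully, for $v \in F'\setminus C$ the theorem says $N(v)\cap F$ lies in a single class of $F$, hence $v$ has no neighbour in $F \setminus C'$ provided that single class is contained in $C'$. The cleanest route is: $F\cap F'\subseteq C\cap C'$, and the single class of $F$ that any $v\in F'\setminus F$ can meet must be $C$ (else, combined with the symmetric statement, $C\cup C'$ would already be completely joined to the rest and we could enlarge $F$ or $F'$, contradicting maximality). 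I expect this maximality-exploiting sub-argument to be the main obstacle and will need to be written with care.

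For item~(2), suppose $F, F' \in \cmd_4(G)$ are disjoint and fix $v \in F'$. Then $v \in V \setminus F$, so Theorem~\ref{the:handbag} directly yields that $N(v)$ intersects at most one class of $F$. This is immediate.

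For item~(3), let $e = xy$ be an edge lying in some $K_4$ of $G$, say on vertex set $\{x,y,a,b\}$. Existence: greedily extend $\{x,y,a,b\}$ to an inclusion-wise maximal complete multipartite induced subgraph $F$ with at least $4$ parts (the $K_4$ itself is complete multipartite with $4$ singleton parts, so such maximal $F$ exists and lies in $\cmd_4(G)$), and $e$ occurs in $G[F]$. Uniqueness: suppose $e$ occurs in both $F$ and $F'$ in $\cmd_4(G)$ with $F \neq F'$. Then $x,y \in F \cap F'$, and since $xy \in E(G)$, $x$ and $y$ lie in different classes of $F$; but by item~(1), $F \cap F'$ meets only one class of $F$, so $x$ and $y$ lie in the same class of $F$ — contradiction. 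Hence $F$ is unique. I will present items (3) and (2) first as quick consequences, then give item (1) as the substantive argument, and finish by noting that item (1) is exactly what is invoked in the uniqueness part of item (3).
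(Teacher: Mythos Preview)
Your treatments of items~(2) and~(3) are correct and match the paper. The problem is item~(1).

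Your argument that $F\cap F'\subseteq C$ is backwards. You write: ``if some $u\in F\cap F'$ lay in a class of $F$ other than $C$, then $uw$ would be an edge, but $u,w\in C'$ forces $u,w$ non-adjacent''. But you have no reason to put $u$ in $C'$; that is precisely the symmetric statement you have not yet proved. In fact the elementary observation cuts the other way: adjacency is a property of $G$, so if $u,w\in F\cap F'$ lie in different classes of $F$ then $uw\in E(G)$, hence $u,w$ lie in \emph{different} classes of $F'$ as well. Thus the partition of $F\cap F'$ induced by $F$ coincides with that induced by $F'$, and nothing so far prevents this common partition from having several classes. The single-class conclusion genuinely needs Theorem~\ref{the:handbag}: the paper takes a vertex $v\in F\setminus F'$ (which exists by maximality, since distinct maximal sets cannot be nested) and, using that $F\cap F'$ meets at least two classes, exhibits two neighbours of $v$ in $F\cap F'$ lying in distinct classes of $F'$, contradicting the theorem applied to $F'$. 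A short case split (three or more classes versus exactly two) handles the choice of $v$.

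For the ``no edges between $F\setminus C'$ and $F'\setminus C$'' part, the paper again goes through Theorem~\ref{the:handbag} rather than a maximality argument: given such an edge $uv$ and any $x\in F\cap F'$, one checks that $x$ and $u$ are both neighbours of $v$ in $F$ lying in different classes (using $F\cap F'=C\cap C'$), contradicting the theorem. Your sketch correctly identifies that the single class met by $N(v)$ must be pinned down, but the route you propose (``enlarge $F$ or $F'$, contradicting maximality'') is not what is needed and you do not carry it out; the direct appeal to the structure theorem with a witness $x\in F\cap F'$ is both simpler and complete.
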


\medskip

In particular, every $K_p$ in $G$, $p \geq 4$ is contained in a unique $F \in \cmd_4(G)$.
It also follows that $cmd_4(G)$ can be enumerated in polynomial time in prison-free graphs. 

\subparagraph*{Lower bound for prison-free completion.}
We show that \textsc{Prison-free Edge Completion} is incompressible, i.e.,
admits no polynomial kernel parameterized by $k$ unless the polynomial hierarchy collapses.
Counterintuitively, this result
exploits the property that minimum solutions for the problem can be
\emph{extremely expensive} -- we can design a sparse graph $G$ such that
every prison-free supergraph of $G$ contains $\Theta(n^2)$ edges
(for example, by ensuring that the only possible $cmd_4$-decomposition
of a prison-free supergraph of $G$ consists of a single component $F=V(G)$). 
More strongly, we use this to show an additive \emph{gap hardness}
version of the problem.

\begin{restatable}{theorem}{thmgap} \label{thm:gap}
  For any $\varepsilon > 0$, it is NP-hard to approximate 
  \textsc{Prison-free Edge Completion}
  up to an additive gap of $g=\Theta(n^{2-\varepsilon})$,
  even if $G$ has an edge $e$ such that $G-e$ is $K_4$-free.
\end{restatable}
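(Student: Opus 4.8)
The plan is to reduce from a dense NP-hard problem with an additive gap that amplifies to $\Theta(n^{2-\varepsilon})$ under the reduction. A natural starting point is a gap version of a constraint satisfaction or graph partitioning problem; concretely, I would use the fact that it is NP-hard to distinguish whether a graph can be made edgeless by deleting few edges versus needing many — or, dualized, I would build the reduction around making some target set of vertices into a complete multipartite graph with prescribed parts. The key structural lever is Corollary~\ref{cor:cmd4} together with Theorem~\ref{the:handbag}: if we can force the unique $\cmd_4$-decomposition of every prison-free supergraph of our constructed graph $G$ to consist of a single part $F = V(G)$, then $G$ must be completed all the way to a complete multipartite graph on $V(G)$, and the cost is $\binom{n}{2}$ minus the number of edges we can "save" by placing vertices in a common class. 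Saving an edge-slot between $u$ and $v$ requires $u,v$ to be non-adjacent in the final graph and hence in the same class of the complete multipartite structure, which in turn forces $N(u)$ and $N(v)$ (restricted to the rest) to be handled consistently. Thus the completion cost is essentially $\binom{n}{2}$ minus (sum over classes of $\binom{|class|}{2}$), i.e., minimizing the number of added edges is equivalent to maximizing $\sum_i \binom{|P_i|}{2}$ over valid partitions into independent sets of $G$ — a problem closely related to graph coloring / clique cover where the objective rewards large color classes.

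The first step is to engineer the "single-component gadget": I would add to $G$ a clique-like skeleton (for instance a large set $W$ that is already complete multipartite with at least four parts, attached to the rest of the graph in such a way that $N(v)$ for every $v \notin W$ is forced to meet more than one part of $W$ unless $v$ is absorbed into $W$'s structure). By Theorem~\ref{the:handbag}, any prison-free supergraph must then fold all of $V(G)$ into one maximal complete multipartite set $F$, pinning down the structure. The second step is to verify that the edge $e$ with $G - e$ being $K_4$-free can be accommodated: we include exactly one $K_4$ (using $e$ plus the skeleton), which both certifies $\cmd_4(G) \neq \emptyset$ and is consistent with $G - e$ having no $K_4$, satisfying the "even if" clause of the statement. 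The third step is the gap amplification: starting from a problem instance of size $m$ with a constant-factor (multiplicative) gap in an objective that is itself $\Theta(m^2)$ — such as MAX-agreement partition or a dense graph coloring variant — the completion cost inherits an additive gap of $\Theta(m^2)$. To get $\Theta(n^{2-\varepsilon})$ rather than $\Theta(n^2)$, I would pad: blow up into $n = m^{1/(1-\varepsilon/2)}$ vertices by adding "inert" vertices that join the complete multipartite structure trivially and contribute no gap, so that the $\Theta(m^2) = \Theta(n^{2-\varepsilon})$ gap survives while the instance size is $n$.

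Concretely, the reduction target I would try first is the following: it is NP-hard, for some constant $c > 1$, to distinguish graphs $H$ on $m$ vertices that have an independent-set partition with $\sum_i \binom{|P_i|}{2} \geq t$ from those where every such partition has $\sum_i \binom{|P_i|}{2} \leq t/c$; equivalently a gap version of minimizing $\binom{m}{2} - \sum_i \binom{|P_i|}{2}$, which is the number of "cross" pairs. This quantity is exactly the edge count of a complete multipartite graph on the classes, so it reads off directly as the completion cost once the single-component gadget forces $F = V(G)$. Hardness of such a gap statement should follow from known inapproximability of graph coloring-type objectives or of MAX-$k$-CUT-style problems on dense instances (where the optimum is quadratic); if a clean off-the-shelf statement is not available, I would instead route through a direct reduction from a PCP-hard problem, encoding variables and clauses by vertices whose forced adjacencies in the complete multipartite completion simulate the constraints, with satisfying assignments corresponding to cheap completions.

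I expect the main obstacle to be the \emph{rigidity} of the single-component gadget: I must ensure that \emph{no} prison-free supergraph of $G$ can "cheat" by using a $\cmd_4$-decomposition with more than one part, or with a part that excludes some skeleton vertices, as any such freedom would let a solver save quadratically many edges and destroy the gap. Establishing this requires a careful case analysis via Theorem~\ref{the:handbag} and Corollary~\ref{cor:cmd4}: showing that breaking $V(G)$ into two complete-multipartite parts $F, F'$ would, by part~(1) of the corollary, force $G[F \cap F']$ edgeless and no edges between $F \setminus C'$ and $F' \setminus C$, which contradicts the dense adjacency we built into the skeleton. The secondary obstacle is making the padding truly inert — the added vertices must attach so as to neither create new prisons nor offer new savings — which again is a structural check against the characterization theorem. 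Once rigidity is secured, the gap is a direct arithmetic consequence of the combinatorial gap in the source problem plus the padding calibration.
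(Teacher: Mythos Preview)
Your proposal has a genuine structural gap at the ``single-component gadget'' step, and it diverges from the paper's mechanism in a way that matters.

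You propose to force $F=V(G)$ by planting ``a large set $W$ that is already complete multipartite with at least four parts'' and then appealing to Theorem~\ref{the:handbag}. But any complete multipartite graph with at least four parts contains many $K_4$'s; you cannot have such a skeleton and simultaneously satisfy the hypothesis that $G-e$ is $K_4$-free for a single edge $e$. You note later that ``we include exactly one $K_4$,'' but this is incompatible with the skeleton as described. More fundamentally, if $G-e$ is $K_4$-free then the static adjacency structure of $G$ contains no maximal multipartite component with four parts at all, so nothing in Theorem~\ref{the:handbag} or Corollary~\ref{cor:cmd4} can \emph{statically} force vertices into a common component; any forcing must come from the dynamics of the completion itself. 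The paper handles this by a \emph{propagation} mechanism: a single seeded $K_4$ creates prisons whose repair creates further $K_4$'s, and dedicated gadgets (propagational components chained into cycles, plus forbidden-edge gadgets) steer this propagation. Lemma~\ref{lem:cardone} then guarantees the resulting multipartite component is unique. Your plan lacks any analogue of this propagation, and without it the rigidity you want does not hold.

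The paper also takes a different route on the source problem and the gap arithmetic. It does \emph{not} force $F=V(G)$ and then optimize over colorings; instead it reduces from \textsc{Vertex Cover} on cubic graphs, attaches a large ``cloning component'' of length $\ell$ to each vertex, and arranges that a minimal completion must absorb exactly the cloning components of the vertices in some vertex cover. The cost is then $\Theta((t\ell)^2)$ for a cover of size $t$, so the additive gap between $t$ and $t+1$ is $\Theta(t\ell^2)$, and tuning $\ell$ yields $\Theta(n^{2-\varepsilon})$. By contrast, your route needs an off-the-shelf constant-factor gap for maximizing $\sum_i\binom{|P_i|}{2}$ over proper colorings; you acknowledge this may not be available, and the fallback (``route through a direct reduction from a PCP-hard problem'') is not fleshed out. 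Even if that hardness were in hand, you would still need to solve the gadget problem above before the arithmetic goes through.
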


\medskip

With this in place, we can proceed with the lower bound using standard
methods, using the notion of cross-composition~\cite{BodlaenderJK14,FominLSZkernelsbook}.
We follow the method used in previous lower bounds against kernelization
of $H$-free edge modification problems~\cite{KratschW13hfree,3}.
Given a list $I_1, \ldots, I_t$ of instances of the above gap-version of
\textsc{Prison-Free Edge Completion} with parameter value $k$, our task is to produce an instance
of \textsc{Prison-Free Edge Completion} with parameter $(k'+\log t)^{O(1)}$
which corresponds to the OR of the instances $I_i$.
For this, we define a binary tree of
height $O(\log t)$ and place the instances at the leaves of the tree.
At the root of the tree, we place a single induced prison. 
For the internal nodes, we design \emph{propagational gadgets}
with the function that if the gadget at the node is edited,
then one of the gadgets at the children of the node must be edited as
well. Finally, for every instance $I_j=(G_j,k)$ with an edge $e_j$
such that $G_j-e$ is prison-free, we connect $e_j$ to the corresponding
gadget at the leaf of the tree and remove $e_j$ from $G_j$.
This forces at least one edge $e_j$, $j \in [t]$
to be added to the resulting graph
and the original instance $I_j=(G_j,k)$ must be solved.

The crux is that, unlike previous proofs (for example in Cai and Cai~\cite{3}
when $H$ is 3-connected) we cannot ``control'' the spread of the edge
completion solution to be confined to $G_j$. On the contrary, the
solution must spread all the way to the root and incorporate all
vertices from gadgets on the root-leaf path of the binary tree 
into a single complete multipartite component of the resulting
graph $G'$. Thus, we have no tight control over the number
of edges added in the corresponding solution. However, by the strong
lower bound on gap-hardness of Theorem~\ref{thm:gap}
we do not need tight control -- we can simply set the additive gap
$g$ large enough that the number of edges added outside of $G_j$ in
the resulting propagation is a lower-order term compared to $g$. 

\begin{restatable}{theorem}{thmprisonlb}
  \label{the:prisedit}
  \textsc{Prison-free Edge Completion} does not have a polynomial
  kernel parameterized by $k$ unless the polynomial hierarchy collapses. 
\end{restatable}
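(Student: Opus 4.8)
The plan is to prove Theorem~\ref{the:prisedit} by a cross-composition from the NP-hard gap version of \textsc{Prison-Free Edge Completion} guaranteed by Theorem~\ref{thm:gap}, following the template of \cite{KratschW13hfree,3} but exploiting that we do \emph{not} need to localise the edit solution. First I would fix an equivalence relation on instances of the gap problem so that all instances in one equivalence class share the same number of vertices $n$, the same parameter $k$ (where $k$ now plays the role of ``budget plus gap'', i.e. the threshold of the gap-decision problem), and the same marked edge $e_j$ with $G_j - e_j$ being $K_4$-free; padding ensures $t = 2^s$ leaves. The output parameter must be bounded by $(k + \log t)^{O(1)}$, so throughout I will only be allowed to pay an additive $\mathrm{poly}(k)\cdot\mathrm{poly}(\log t)$ in the budget, which is exactly where Theorem~\ref{thm:gap}'s large gap $g = \Theta(n^{2-\varepsilon})$ is used: the ``overhead'' edges forced outside $G_j$ will be $\mathrm{poly}(\log t)$ in number and hence negligible against $g$.

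The construction is a balanced binary tree $T$ of height $s = \log_2 t$. At each leaf $\ell_j$ I place the graph $G_j - e_j$ together with a fresh pair of vertices realising the role of the endpoints of $e_j$; at the root I plant a single induced prison that is ``broken'' unless one designated non-edge is completed. At every internal node I install a \emph{propagational gadget}: a constant-size gadget, built from complete multipartite pieces glued along independent sets as permitted by Corollary~\ref{cor:cmd4}, with the property that the gadget is ``unsatisfied'' as long as a certain non-edge at the node is present, and the only cheap way to satisfy it is to complete a corresponding non-edge at one of its two children (and then repair the newly created prisons locally, at constant cost, by merging the gadget into the child's complete-multipartite block). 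Concretely, I would design the gadget so that in any prison-free completion, the $cmd_4$-block containing the root's prison vertices must grow down some root--leaf path, forcing, at the leaf $\ell_j$, the non-edge $e_j$ to be added; then $G_j$ with $e_j$ added must itself be completed to be prison-free, costing at least the threshold value from the gap instance. Summing: a yes-instance $I_j$ yields a completion of total cost at most (cost inside $G_j$) $+$ $O(s)$ gadget-repair edges, which is below $k + O(s^{O(1)})$; conversely any completion of the whole instance of cost below that bound must, by the gap guarantee, have added only edges inside some single $G_j$ plus the forced propagation, certifying $I_j$ is a yes-instance. Setting $\varepsilon$ so that $g$ dominates the $O(s^{O(1)})$ overhead closes the argument, and the standard cross-composition lemma then yields incompressibility.

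The main obstacle, and the step I would spend the most care on, is the design and correctness of the propagational gadget: I must verify that (i) the gadget together with the root prison and the leaf attachments creates \emph{no} unintended induced prison that could be destroyed by a cheap edit bypassing the intended propagation, using Theorem~\ref{the:handbag} to certify prison-freeness of the ``satisfied'' configurations by checking the one-part-intersection condition for every maximal complete multipartite $F$; and (ii) the gadget is genuinely ``rigid'', i.e. there is no completion of the gadget cheaper than the intended one that nonetheless kills all its prisons without propagating to a child --- this is the analogue of the rigidity arguments in \cite{3} but is subtler here because completions, unlike deletions, can merge blocks and the decomposition $cmd_4$ can change drastically. The clean statement of Corollary~\ref{cor:cmd4}(3) --- every $K_4$-edge lies in a unique $cmd_4$-block --- is the tool I expect to use repeatedly to pin down what any prison-free completion must look like along the tree. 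A secondary technical point is the bookkeeping of the parameter: I need the per-node gadget to have $O(1)$ size and contribute $O(1)$ to both the overhead budget and (after summing over the path) $O(\log t)$ total, so that the output parameter is $k + O(\log t)$ up to a polynomial, comfortably within the cross-composition requirement; this is routine once the gadget is fixed.
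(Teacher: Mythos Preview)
Your high-level plan coincides with the paper's: a balanced binary tree of constant-size propagational gadgets, a trigger at the root, leaves identified with the activation edges $e_j$, and the gap from Theorem~\ref{thm:gap} used to absorb the unavoidable spread of the completion outside $G_j$. The paper's concrete gadget is the disjoint propagational component of Figure~\ref{fig:dis}, and the structural fact driving correctness is Lemma~\ref{lem:cardone}: since the composed graph has a unique $K_4$, every minimal completion lives inside a \emph{single} complete multipartite block $F$.

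The genuine gap is your cost accounting. You assert that each gadget on the active root--leaf path ``contributes $O(1)$ to the overhead budget'', so the total overhead is $O(\log t)$ or $\mathrm{poly}(\log t)$. This is false precisely because of Lemma~\ref{lem:cardone}: the $O(\log t)$ gadget vertices on the active path and the (at least) $\Theta(n_0)$ vertices of the solution inside $G_j$ all land in the same block $F$, and merging one new vertex into a complete multipartite block of size $|F|$ costs $\Theta(|F|)$ edges, not $O(1)$. The overhead is therefore $\Theta((n_0+\log t)\log t)$, with an $n_0\log t$ term you cannot avoid. Since you propose to compose fixed instances of the gap problem whose gap $g=\Theta(n_0^{2-\varepsilon})$ is determined before $t$ is known, and $\log t$ is unbounded relative to $n_0$, the overhead can exceed $g$ and the yes/no cases are no longer separated; ``setting $\varepsilon$'' cannot help, because no choice of $\varepsilon$ makes $n_0^{2-\varepsilon}$ dominate $n_0\log t$ for all $t$. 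The paper handles this by composing directly from cubic \textsc{Vertex Cover} and choosing the cloning-length parameter $\ell$ of the gap reduction \emph{after} seeing $t$, so that the gap (quadratic in $\ell$) dominates the overhead (linear in $\ell$, linear in $\log t$). Your plan becomes correct once you make the same adjustment, i.e., retain the freedom to scale the gap with $\log t$ inside the composition rather than treating the gap instances as immutable inputs.
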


\subparagraph*{Kernel for prison-free deletion.}
The kernelization algorithm depends directly on the structural characterization of prison-free graphs. 
We start by using the sunflower lemma to obtain a small set $\PPP'$ of prisons in the graph $G$ such that any set of edges of size at most $k$ that intersects all prisons in $\PPP'$ has to intersect all prisons in $G$. We let $S$ be the set of vertices of these prisons. Note that $|S|= \bigoh(k^8)$, and outside of $S$ we only need to be concerned by prisons that are created by deleting some edge from $G$. This lets us delete all edges not in $E(G[S])$ that do not belong to a strict supergraph of a prison. In addition, if a prison in $G$ contains a single edge inside $S$, then such an edge has to be included in every solution of size at most $k$, so it can be deleted and $k$ decreased. After exhaustive application of these two reduction rules, we show that the edges of $G[V(G)\setminus S]$ can be partitioned into maximal complete multipartite subgraphs of $G[V(G)\setminus S]$ (even those which do not occur in $cmd_4(G[V(G)\setminus S])$). For each of these maximal complete multipartite subgraphs, we can check whether it is in some larger complete multipartite subgraph $F$ in $G$ that is nicely separated from the rest of $G$, in the sense that every $x\in V(G)\setminus F$ neighbors vertices in at most one class of $F$. For any such $F$, no supergraph of a prison can contain edge both outside and inside of $F$ and edges inside of $F$ can be safely deleted from $G$. This allows us to bound the number of maximal complete multipartite subgraphs outside of $S$ by $\bigoh(|S|^3)$. In addition, we show that any supergraph of a prison in $G$ that contains an edge fully outside of $S$ is fully contained in $S\cup F$ for a single maximal multipartite subgraph $F$ of $G[V(G)\setminus S]$. This allows us to treat these multipartite subgraphs separately. Moreover, using the fact that for any edge $e\in G[S]$, the graph $G[V(G)\setminus (S\setminus e)]$ is still prison-free, we can show that the interaction between $S$ and a maximal multipartite subgraph $F$ of $G[V(G)\setminus S]$ is very structured. This allows us to reduce the size of each of these subgraphs and we obtain the following theorem. 

\begin{restatable}{theorem}{thmkernel} 
    \delProblem admits a polynomial kernel.
\end{restatable}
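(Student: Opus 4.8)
The plan is to build the kernel directly on the structural characterization of prison-free graphs (Theorem~\ref{the:handbag} and Corollary~\ref{cor:cmd4}), so that the kernelization ``follows the $\cmd_4$-decomposition''. The first step is the standard sunflower reduction: viewing a prison as an $8$-edge obstruction, I apply the sunflower lemma to the family of edge sets of prisons of $G$ to obtain a subfamily $\PPP'$ of $\bigoh(k^8)$ prisons with the property that every edge set of size at most $k$ that hits all prisons in $\PPP'$ hits every prison of $G$. Let $S$ be the union of the vertex sets of the prisons in $\PPP'$, so $|S| = \bigoh(k^8)$. After fixing $S$, the only way a size-$\le k$ solution can go wrong is by \emph{creating} a new prison through deletions, which in turn can only happen on a $5$-set whose induced subgraph already strictly contains a prison. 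This motivates two cleanup rules, applied exhaustively: (i) delete every edge $e \notin E(G[S])$ that lies in no strict supergraph of a prison in $G$, since such an $e$ is neither part of a prison nor useful in preventing one; and (ii) if some prison of $G$ has exactly one edge inside $S$, that edge lies in every solution of size at most $k$, so delete it and decrease $k$. Throughout, $k$ only decreases, so it remains bounded by the input value.

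The second step is to show that, once these rules are saturated, the graph outside $S$ is rigidly structured: using the cleanup rules together with the paw/diamond substructure of the prison and Corollary~\ref{cor:cmd4}, every edge of $G[V \setminus S]$ lies in a \emph{unique} inclusion-maximal complete multipartite subgraph of $G[V \setminus S]$ (including ``small'' blocks with fewer than four classes), mirroring the ``every edge is in a unique maximal clique'' property of diamond-free graphs at the level of multipartite blocks. Two companion facts make it possible to treat these blocks independently: after the reductions, for every edge $e$ with both endpoints in $S$, the graph obtained from $G[V \setminus S]$ by adding back the endpoints of $e$ and the edge $e$ is prison-free; and, more generally, any strict supergraph of a prison in $G$ that uses an edge fully outside $S$ is entirely contained in $S \cup F$ for a single maximal multipartite block $F$ of $G[V \setminus S]$.

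The third step bounds the number of blocks and then the size of each. For a maximal multipartite block $F_0$ of $G[V \setminus S]$, I would check whether it extends to a complete multipartite set $F \supseteq F_0$ in $G$ that is well separated in the sense of Theorem~\ref{the:handbag}, i.e.\ every $x \in V \setminus F$ meets at most one class of $F$. If so, no strict supergraph of a prison crosses the boundary of $F$, so all edges inside $F$ are irrelevant and can be deleted, after which the vertices of $F$ carry no further obstructions and can be cleaned away. If not, the obstruction to separating $F_0$ is witnessed by a bounded-size configuration inside $S$ (in fact a triple of vertices of $S$), so each surviving block can be charged to such a triple, leaving at most $\bigoh(|S|^3)$ blocks. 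Finally, for each surviving block $F$ I would use the ``prison-free after adding back one $S$-edge'' property to show that the adjacency pattern between $S$ and $F$ has only $\mathrm{poly}(|S|)$ many types per class of $F$; retaining a bounded number of representative vertices per type per class, and verifying that this preserves both the prisons already present and the prisons creatable by deletions, shrinks $F$ to $\mathrm{poly}(|S|)$ vertices. Summing over the $\bigoh(|S|^3)$ blocks with $|S| = \bigoh(k^8)$ yields $|V(G')| = \mathrm{poly}(k)$, and since $k$ never increases, this is the desired polynomial kernel.

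I expect the main obstacle to be this last step: controlling the interaction between the core $S$ and a multipartite block $F$ tightly enough to bound $|F|$. Unlike clique-based obstructions, a multipartite block may be arbitrarily large with arbitrarily large classes, and deleting a single edge inside $F$ can turn a diamond or paw of $F$ into a fresh prison once combined with vertices of $S$; so the notion of ``type'' has to simultaneously account for (a) prisons already present, (b) prisons created by deletions confined to $F$, and (c) prisons created by deletions that straddle $S$ and $F$. Designing the type so that it is polynomially bounded yet fine enough that a representative-preserving argument goes through for all three cases is the delicate part, and this is precisely where Theorem~\ref{the:handbag} and Corollary~\ref{cor:cmd4} are essential, since they force the edges between $S$ and $F$ to respect the class partition of $F$ except in very limited ways.
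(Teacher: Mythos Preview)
Your outline is essentially the paper's approach: sunflower to get $S$, the two cleanup rules, the partition of $E(G[\bar S])$ into maximal complete multipartite blocks, the ``well-separated $F$'' deletion rule, the $\bigoh(|S|^3)$ bound on surviving blocks, and a representative-marking step to shrink each block. The containment lemma (a supergraph of a prison with an edge outside $S$ lies in $S\cup F$ for a single block $F$) and the ``$G[\bar S\cup e]$ is prison-free for $e\in E(G[S])$'' observation are exactly the hinges the paper uses.

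There is one genuine gap in your final step. You write that ``the adjacency pattern between $S$ and $F$ has only $\mathrm{poly}(|S|)$ many types per class'', but the full neighbourhood-in-$S$ type space is $2^{|S|}$, not polynomial, so keeping boundedly many representatives \emph{per type} does not give a polynomial bound. The paper sidesteps this by never looking at full types: since a prison has five vertices, any prison in $G\Delta A$ meets $S$ in at most four vertices, so it suffices, for every $4$-subset $S'\subseteq S$ and every pattern $\xi\subseteq S'$, to keep $2k+5$ vertices of each Type~1 class with that pattern in $S'$ (and $2k+5$ vertices from each of $2k+5$ Type~2 classes). The replacement argument then only needs to match neighbourhoods on the at most four $S$-vertices of a putative prison, which this marking guarantees. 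Two smaller points you gloss over but the paper handles explicitly: edges of $G[\bar S]$ lying in no triangle of $G[\bar S]$ are treated separately via the bipartite blocks $B_e$ (indexed by edges $e\in E(G[S])$), and one also needs an extra marked set $M_\emptyset$ for vertices of $\bar S$ that lie in no block at all, again marked by neighbourhood patterns on $4$-subsets of $S$.
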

\subparagraph*{Structure of the paper.}
In Section~\ref{sec:prison} we derive the basic facts about
prison-free graphs. Section~\ref{sec:lb} contains the lower bound
against \textsc{Prison-free Edge Completion}
and Section~\ref{sec:kernel} contains the polynomial kernel for
\textsc{Prison-free Edge Deletion}. We conclude in Section~\ref{sec:conc}.
\section{Structure of prison-free graphs}
\label{sec:prison}

We begin our study by characterizing the structure of prison-free
graphs. This generalizes two structures. The most closely related is
for \emph{paw-free graphs}; note that the paw is the subgraph of the
prison produced by deleting an apex vertex. Olariu~\cite{4} showed
that a graph is paw-free if and only if every connected component is
either triangle-free or a complete multipartite graph. 
The paw-free graph modification problems have polynomial kernels due
to Eiben et al.~\cite{2}. 
Second, less closely related but still relevant, the \emph{diamond}
$K_4-e$ is a subgraph of the prison  produced by deleting a vertex of
degree 3. It is known that a graph is \emph{diamond-free} if and only
if every edge occurs in only one maximal clique~\cite{WallisZ90}. 
The diamond-free graph modification problems have polynomial kernels
due to Cao et al.~\cite{CaoRSY22}.
In a sense, the structure of prison-free graphs generalizes both, as
the edge-sets of cliques $K_p$, $p \geq 4$ in a prison-free graph $G$
decomposes into complete multipartite induced subgraphs of $G$. 

We first prove Theorem~\ref{the:handbag} from the introduction,
then use it to derive the more informative Corollary~\ref{cor:cmd4}.
%In the following, we will call complete multipartite a graph G admitting a partition of the vertices in non empty classes $P_1,...,P_m$ such that $E(G)=\{uv| (\exists i \neq j) (u\in P_i, v\in P_j)\}$. We call with of such a graph the number of it's classes, noted $w(G)$ 

\begin{figure}
  \centering
\begin{subfigure}[1]{0.3\textwidth}
  \begin{tikzpicture}
    % Vertices
    \foreach \x in {1,...,5}
    \node[circle, draw] (\x) at ({(- \x -2)*72 + 17}:1.2) {};
    
    % Edges
    \draw (5) -- (4) -- (1) -- (2) -- (3) -- (5);
    \draw (5) -- (1) -- (3) -- (4);
  \end{tikzpicture}
\end{subfigure}
\begin{subfigure}[1]{0.3\textwidth}
  \begin{tikzpicture}
    % Vertices
    \foreach \x in {2,...,5}
    \node[circle, draw] (\x) at ({(- \x -2)*72 + 17}:1.2) {};
    
    % Edges
    \draw (5) -- (4);  \draw (2) -- (3) -- (5);
    \draw (3) -- (4);
  \end{tikzpicture}
\end{subfigure}
\begin{subfigure}[1]{0.3\textwidth}
  \begin{tikzpicture}
    % Vertices
    \foreach \x in {1,...,4}
    \node[circle, draw] (\x) at ({(- \x -2)*72 + 17}:1.2) {};
    
    % Edges
    \draw (4) -- (1) -- (2) -- (3);
    \draw (1) -- (3) -- (4);
  \end{tikzpicture}
\end{subfigure}
\caption{Three graphs: The prison, the paw, and the diamond (as subgraphs of the prison).}
  \label{fig:prison}
\end{figure}
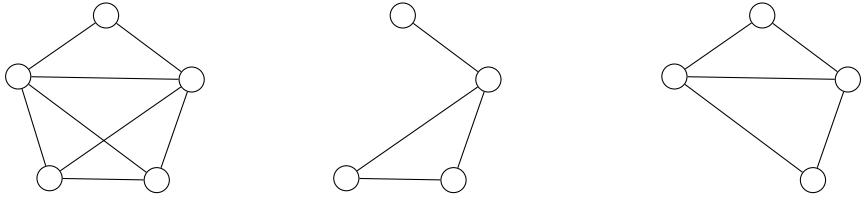

\iflong
\begin{lemma} \label{lem:kcomp}
\fi
\ifshort
\begin{lemma}[{$\star$\protect\footnote{Results marked with $\star$ have their proofs deferred to the full version.}}] \label{lem:kcomp}
\fi
Let $G$ be a prison-free graph and let $K \subseteq V(G)$ induce a $K_4$ in $G$. 
  Then there is a complete multipartite graph $G[F]$ in $G$ with $K \subseteq F$. 
  Furthermore, if $F$ is maximal under this condition, then for any $v \in V(G) \setminus F$, 
  $N(v)$ intersects at most one component of $F$.
\end{lemma}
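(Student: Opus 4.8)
The plan is to isolate the one structural fact that drives everything: if $T\subseteq V(G)$ induces a $K_4$ and $u\in V(G)\setminus T$ has \emph{exactly two} neighbours in $T$, then $G[T\cup\{u\}]$ is an induced prison. Indeed this induced subgraph has five vertices and exactly $6+2=8$ edges, arranged as a $K_4$ together with a single extra vertex joined to two of its vertices, which is precisely the prison (cf.\ Figure~\ref{fig:prison}). So, since $G$ is prison-free, \emph{no} such configuration occurs in $G$; each case of the proof will end by exhibiting one.

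For the first statement, note that $K_4$ is itself complete multipartite with four singleton parts, so there is an $F\supseteq K$ with $G[F]$ complete multipartite; fix one that is inclusion-wise maximal and write $P_1,\dots,P_m$ for its parts. Since the four vertices of $K$ lie in pairwise distinct parts, $m\ge 4$.

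For the ``furthermore'', suppose towards a contradiction that some $v\in V(G)\setminus F$ has neighbours in at least two parts. First, $v$ must have a non-neighbour in $F$: otherwise $G[F\cup\{v\}]$ is complete multipartite with $\{v\}$ as an extra part, contradicting maximality. Let $r$ be the number of parts containing a neighbour of $v$, so $r\ge 2$, and split on $r=2$ versus $r\ge 3$. If $r=2$, say the neighbours of $v$ lie only in $P_1,P_2$; pick a neighbour $a\in P_1$, $b\in P_2$, and (using $m\ge 4$) arbitrary $c\in P_3$, $d\in P_4$, which are non-neighbours of $v$; then $\{a,b,c,d\}$ induces a $K_4$ in which $v$ has exactly the two neighbours $a,b$ — the forbidden configuration. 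If $r\ge 3$, pick neighbours $a_1\in P_1,a_2\in P_2,a_3\in P_3$ of $v$, so $T=\{v,a_1,a_2,a_3\}$ induces a $K_4$. If $v$ has a non-neighbour $w$ in $P_1\cup P_2\cup P_3$, then $w$ is adjacent to exactly the two vertices of $T$ outside $w$'s part and non-adjacent to $v$ — forbidden again. Otherwise $v$ is adjacent to all of $P_1\cup P_2\cup P_3$, so all non-neighbours of $v$ lie in $P_4\cup\dots\cup P_m$; if two distinct such parts $P_i,P_j$ contain non-neighbours $w_i,w_j$, then $\{a_1,a_2,w_i,w_j\}$ induces a $K_4$ in which $v$ has exactly the neighbours $a_1,a_2$ — forbidden. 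The remaining case is that all non-neighbours of $v$ lie in a single part $P_j$, $j\ge 4$: if $P_j$ also contains a neighbour $w$ of $v$, replace $a_3$ by $w$ in $T$ and take a non-neighbour of $v$ in $P_j$ as the fifth vertex to obtain the forbidden configuration; if $P_j$ contains no neighbour of $v$, then $v$ is non-adjacent to all of $P_j$ and adjacent to all of $F\setminus P_j$, so $G[F\cup\{v\}]$ is complete multipartite with parts $P_1,\dots,P_{j-1},P_j\cup\{v\},P_{j+1},\dots,P_m$, contradicting maximality. Every case yields a contradiction, so $N(v)$ meets at most one part.

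The point where care is needed is the bookkeeping in the case $r\ge 3$: one cannot always keep the prison's $K_4$ inside $F$, and must instead sometimes use $v$ itself as a $K_4$-vertex with a non-neighbour of $v$ playing the role of the degree-two vertex; in each sub-case one has to check that the chosen fifth vertex is genuinely non-adjacent to $v$ and sits in the part that makes it adjacent to exactly two of the four $K_4$-vertices. This is routine, but it is the only delicate part. The sole way to avoid producing a prison is precisely the configuration in which $v$ can be absorbed into a part of $F$ (all of its non-neighbours confined to one part, everything else dominated) — and that is exactly what maximality of $F$ rules out.
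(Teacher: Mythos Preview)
Your proof is correct and follows essentially the same approach as the paper: both isolate the key observation that a $K_4$ together with a vertex of exactly two neighbours in it is a prison, then perform a case analysis on how $N(v)$ meets the parts of $F$, using maximality of $F$ to rule out the one non-prison configuration. The paper compresses your $r=2$/$r\ge 3$ split into a single ordering argument (sort parts as ``included in $N(v)$'', then ``mixed'', then ``disjoint'') which lets it read off two neighbours from $P_1,P_2$ and two non-neighbours from $P_{p-1},P_p$ in one stroke, arriving at the same mixed-part endgame as your final sub-case; your version is more explicit but equivalent.
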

\iflong
\begin{proof}
  We note that $F$ exists by assumption, since a clique is a complete
  multipartite graph. Let $F \subseteq V(G)$ be an arbitrary maximal
  set such that $K \subseteq F$ and $G[F]$ is complete multipartite.
  Write the components of $F$ as $F=P_1 \cup \ldots P_p$, and note $p \geq 4$.
  Now assume that there is a vertex $v\in V(G) \setminus F$ such that
  $N(v)$ intersects at least 2 components of $F$.
  Note that if there are components $P_1, \ldots, P_4$ (up to reordering)
  and vertices $u_1 \in P_1$, \ldots, $u_4 \in P_4$ such that
  $u_1, u_2 \in N(v)$ and $u_3, u_4 \notin N(v)$
  then $\{v,u_1,\ldots,u_4\}$ induces a prison. We claim that we can
  find such a configuration. For every component $P_i$, $i \in [p]$,
  either $P_i$ is disjoint from $N(v)$, or $P_i \subset N(v)$,
  or $P_i$ is \emph{mixed}. Order the components of $F$ to begin with 
  components contained in $N(V)$, then mixed components, then components  
  disjoint from $N(v)$. Then we can find a prison configuration unless
  either $P_2$ is disjoint from $N(v)$ or $P_{p-1}$ is included in $N(v)$.
  But the former does not happen by assumption. In the latter case,
  note that $P_p$ must be mixed as otherwise $F \cup \{v\}$ is
  complete multipartite. But then let $u_i \in P_i$ for $i=1,2$
  and select $w, w' \in P_p$ such that $w \in N(v)$ and $w' \notin N(v)$.
  Then $\{u_1,u_2,v,w,w'\}$ induces a prison. 
\end{proof}
\fi

Inspired by this, for any graph $G$ and $p \geq 2$,
define $\cmd_p(G)$ to consist of all maximal subsets $F \subseteq V$
such that $G[F]$ is complete multipartite with at least $p$ parts. 
We refer to $\cmd_p(G)$ as the \emph{complete multipartite decomposition}
of $G$ (and we will note that it is indeed a decomposition for $p \leq 4$
if $G$ is prison-free).
The following theorem is now an extension of the previous lemma.
Proofs of Theorem~\ref{the:handbag} and Corollary~\ref{cor:cmd4}
are deferred to the full version. 

\thmstructure*
\iflong
\begin{proof}
  One implication is by Lemma~\ref{lem:kcomp}. For the other direction,
  let $K=\{v_1,\ldots,v_4\}$ be such that $G[K]$ is a clique and
  $v \in V \setminus K$ such that $G[K \cup \{v\}]$ is a prison.
  Then there is a complete multipartite component $F_i$, $i \in [m]$
  such that $K \subseteq F_i$, and clearly the neighbours of $v$ in $K$
  are in distinct parts of $F_i$.
\end{proof}
\fi
We will use this to derive a more useful description of prison-free
graphs. 

\corstructure*
\iflong
\begin{proof}
  For the first, let $F, F' \in \cmd_4(G)$ be distinct sets. First
  assume that $F \cap F'$ covers at least three classes; we note that
  the partition of $F \cap F'$ into classes is identical in $F$ and $F'$.
  Let $v \in F \setminus F'$. Then there exist vertices $u_1, u_2 \in F \cap F'$
  such that $v$, $u_1$ and $u_2$ occur in distinct classes. Then $v \in V \setminus F'$
  sees two distinct classes of $F'$, a contradiction by Lemma~\ref{lem:kcomp}.
  Otherwise, if $F \cap F'$ contains two classes, let $u_1, u_2$ occur
  in distinct classes of $F \cap F'$ and let $v$ be a vertex of a class
  of $F$ not represented in $F \cap F'$. Then again $v \in V \setminus F'$
  sees two distinct classes of $F'$. Thus $F \cap F'$ can only represent
  a single class of $F$ and $F'$. This implies that $G[F \cap F']$ is edgeless.

  Furthermore, assume that there is an edge $uv \in E(G)$ such that
  $u \in F \setminus C'$ and $v \in F' \setminus C$ where
  $F \cap F' =  C \cap C'$ is non-empty for classes $C$ in $F$ and $C'$ in $F'$.
  Let $x \in F \cap F'$. Then $x, u \in N(v)$ which contradicts Lemma~\ref{lem:kcomp}.

  The second item is a direct consequence of Lemma~\ref{lem:kcomp}
  and the third is a consequence of the first, since any edge
  $e=uv \in E(G[F]) \cap E(G[F'])$ would imply that $u$ and $v$
  are in distinct classes of $F$ and $F'$ but $u, v \in F \cap F'$.    
  Note that at least one $F$ containing the clique exists by definition.
\end{proof}
\fi

In particular, the last item implies that every $K_p$, $p \geq 4$ is
contained in a unique multipartite component $F \in \cmd_4(G)$.
Since every $F \in \cmd_4(G)$ contains a $K_4$, and each
maximal component $F$ can be found greedily, $cmd_4(G)$ 
can be computed efficiently.

\section{Incompressibility of Prison-free Edge Completion}

% \section{Incompressibility of Prison-free Edge Completion}
\label{sec:lb}
In this section, we show that \textsc{Prison-free Edge Completion} admits
no polynomial kernel unless the polynomial hierarchy collapses.
The proof is in two parts. First we show a strong inapproximability result -- it is NP-hard 
to approximate \textsc{Prison-free Edge Completion} within an additive
gap of $g=O(n^{2-\varepsilon})$ for every $\varepsilon > 0$, even
for graphs with prison-free edge deletion number 1. 
We then use this to show a cross-composition (see below) for \textsc{Prison-free Edge Completion},
thereby ruling out polynomial kernels under standard complexity conjectures. 
This latter part roughly follows the outline of previous proofs of incompressibility~\cite{KratschW13hfree,3}.

\subsection{Initial observations and support gadgets}
\label{sec:subprel}
\newcommand{\mincext}{\ensuremath{\mathrm{mincext}}}

We begin with some useful statements.

\begin{proposition}
  For a complete multipartite graph $K$ with parts of sizes $a_1,a_2,...,a_m$, the number of edges of $K$ is 
  $\frac{1}{2}\sum_{\substack{i\neq j}}a_ia_j = \frac{1}{2}(|K|^2-\sum_{\substack{i}}a_i^2)$.
\end{proposition}

%%% MW: This definition was never used...
%\begin{definition}
%    Let $G$ be a graph. A complete multipartite extension of $G$ is a complete multipartite graph $K$ obtained by adding edges to $G$. %We let $\mincext(G)$ denote the miniumum number of edges in a complete multipartite extension of $G$.
%\end{definition}
%We remark that any graph has a complete multipartite extension since any graph can be extended into a clique $K_{|V(G)|}$, which is complete multipartite with $|V(G)|$ parts. %\todo{[MW] mincext not used after this?}
%%Moreover,  every clique $K_p$ is contained in a unique part $F$ of $\cmd_4(G)$ thanks to Corrolary \ref{cor:cmd4}
%%\todo[inline]{MW: Would be good to have explicitly stated that every clique $K_p$ is contained in a unique part $F$ of $\cmd_4(G)$.} 

For a graph $G=(V,E)$ and a set of edges $A$ over $V$, we let $G \cup A$ denote the graph $G'=(V, E \cup A)$. 
A \emph{prison-free completion set} for $G$ is an edge set $A$ over $V(G)$ such that $G \cup A$ is prison-free.
A \emph{solution} to $(G,k)$ is a prison-free completion set $A$ for $G$ with $|A| \leq k$. 
The following is essential in our lower bounds. 

\iflong
\begin{lemma} \label{lem:cardone}
\fi
\ifshort
\begin{lemma}[$\star$] \label{lem:cardone}
\fi
  Let $G$ be a graph with exactly one induced $K_4$ and let $A$ be a minimal prison-free completion set for $G$. 
  % Consider a graph $G$ with exactly one induced $K_4$, and $A \subseteq V(G)^2$ of minimal size such that $G \cup A$ is prison-free.
  Then $\cmd_4(G \cup A)$ has exactly 1 component and $A$ lives within that component.
\end{lemma}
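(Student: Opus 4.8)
The plan is to analyze the structure of $G \cup A$ using Corollary~\ref{cor:cmd4} and the minimality of $A$, arguing that $A$ cannot be split across two different complete multipartite components of $\cmd_4(G\cup A)$, nor can any edge of $A$ lie outside $\cmd_4(G\cup A)$ entirely. First I would set $G' = G \cup A$; since $G'$ is prison-free, $\cmd_4(G')$ is a well-defined decomposition of the $K_p$-edges ($p \ge 4$) of $G'$ by Corollary~\ref{cor:cmd4}(3). The unique $K_4$ of $G$, call it on vertex set $K$, still induces a $K_4$ in $G'$ (completion only adds edges), so $K$ lies in some $F_0 \in \cmd_4(G')$. I claim this $F_0$ is the \emph{only} component of $\cmd_4(G')$ and that $A \subseteq E(G'[F_0])$.

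The first key step is to show every edge of $A$ lies inside some component of $\cmd_4(G')$. Suppose $e = uv \in A$ with $e$ not contained in any $F \in \cmd_4(G')$; by Corollary~\ref{cor:cmd4}(3) this means $e$ is in no $K_4$ of $G'$. Then consider $A \setminus \{e\}$: I want to argue $G' - e = G \cup (A\setminus\{e\})$ is still prison-free, contradicting minimality of $A$. A prison contains a $K_4$, and deleting an edge that is in no $K_4$ of $G'$ cannot... — here I need to be careful, since deleting $e$ could in principle create a new induced prison elsewhere. The cleaner route is: any new induced prison $P$ in $G' - e$ must use the "non-edge" $uv$ as one of its two non-adjacent pairs; but a prison on $\{v_1,\dots,v_4,w\}$ where the $v_i$ form a $K_4$ requires $uv$ to be one of the two missing edges among the five vertices, so $\{u,v\}$ together with the $K_4$-part of $P$ forms a set of five vertices inducing (in $G'$) a graph with all edges of the prison plus $uv$, i.e. a $K_5$ minus one edge, which contains a $K_4$ through $e$ — contradiction. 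So $G' - e$ is prison-free, contradicting minimality. Hence every edge of $A$ lies in some $K_4$ of $G'$, and thus in a unique component of $\cmd_4(G')$ by Corollary~\ref{cor:cmd4}(3).

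The second key step is to rule out two distinct components. Suppose $A$ has an edge in $F_1 \in \cmd_4(G')$ and an edge in $F_2 \in \cmd_4(G')$ with $F_1 \ne F_2$, where we may take $F_1 = F_0$ to contain the original $K_4$ on $K$. Every edge of $G'$ inside a $\cmd_4$-component that is not already an edge of $G$ belongs to $A$; since $G$ has only one $K_4$, the component $F_2$ (which contains a $K_4$) must have at least one edge from $A$, in fact $G[F_2]$ contains no $K_4$, so the $K_4$ in $G'[F_2]$ uses $\ge 1$ edge of $A$. Now I use minimality together with Corollary~\ref{cor:cmd4}(1): take $A' = A \cap E(G'[F_2])$, the edges of $A$ living inside $F_2$, and I claim $A \setminus A'$ is already a prison-free completion set, i.e. $G'' := G' \setminus A'$ is prison-free. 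Any induced prison in $G''$ must involve a non-edge pair inside $F_2$ created by removing $A'$, hence has a $K_4$ meeting $F_2$; but removing $A'$ destroys every $K_4$ inside $F_2$ (since such a $K_4$ needs an $A$-edge), so the $K_4$ of the prison has a vertex $x \notin F_2$ adjacent, in $G''$ hence in $G'$, to $\ge 2$ vertices of $F_2$ lying in $\ge 2$ classes — contradicting Theorem~\ref{the:handbag}/Lemma~\ref{lem:kcomp} applied to $F_2 \in \cmd_4(G')$ (noting that edges between $F_2$ and the outside were untouched by removing $A' \subseteq E(G'[F_2])$, and $F_2$ remains a maximal complete multipartite set of $\ge 4$ parts only up to subtleties — actually we just need that $x$ sees two classes of the $K_4 \subseteq F_2$, a contradiction with prison-freeness of $G'$ directly since $\{x\}\cup(K_4)$ would be a prison in $G'$). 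This contradicts minimality of $A$. Hence $\cmd_4(G')$ has a single component containing all of $A$, namely $F_0$.

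**Main obstacle.** The delicate point throughout is controlling \emph{newly created} induced prisons when we delete a subset of $A$: a priori, removing edges can create prisons far from where the edges were removed. The argument above handles this by observing that any new prison must "use" one of the removed edges as a non-edge and hence contains a $K_4$ forcing a configuration already forbidden in the prison-free graph $G'$. Making this local-to-global reduction fully rigorous — in particular the case analysis of where the removed edge sits among the five prison vertices — is the step I expect to require the most care, and it is exactly where the hypothesis that $G$ has \emph{exactly one} $K_4$ does the real work (it forces the second component's $K_4$ to be "paid for" by $A$).
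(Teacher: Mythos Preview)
Your approach works in outline but is more roundabout than the paper's, and Step~2 has a gap in the justification. The paper handles everything in one stroke: with $F_0 \in \cmd_4(G \cup A)$ the component containing the unique $K_4$ of $G$, it sets $A' := A \cap F_0^2$ and shows directly that $G \cup A'$ is prison-free, forcing $A = A'$ by minimality. The point is that any $K_4$ in $G \cup A'$ other than the original one must use an $A'$-edge, hence has two adjacent vertices $v,w \in F_0$; any further vertex $u \notin F_0$ of this $K_4$ would then see two classes of $F_0$ already in $G \cup A$, contradicting Theorem~\ref{the:handbag}. Thus every $K_4$ (and hence every prison) of $G \cup A'$ lies inside $F_0$, where $G \cup A'$ and $G \cup A$ agree. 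This single restriction subsumes both of your steps at once and uses the ``exactly one $K_4$'' hypothesis only to force the non-original $K_4$ to contain an $A'$-edge.

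In your Step~2, the assertion that the $K_4$ of the new prison has a vertex $x \notin F_2$ adjacent to two classes of $F_2$ does not follow merely from the $K_4$ not being contained in $F_2$, and your parenthetical patch (``$\{x\} \cup (K_4)$ would be a prison in $G'$'') is wrong: that $K_4$ is not inside $F_2$, and you have not shown $x$ sees exactly two of its vertices. What actually works is a short case split that uses the fifth prison vertex. Label the prison so that $abcd$ is its $K_4$, $e$ is adjacent only to $c,d$, and the removed edge is $ea \in A'$, so $e,a \in F_2$ lie in distinct classes of $F_2$ in $G'$. If some $x \in \{c,d\}$ lies outside $F_2$ then $x$ is adjacent in $G'$ to both $a$ and $e$; otherwise $c,d \in F_2$, hence $b \notin F_2$ (else the $K_4$ on $a,b,c,d$ is a $K_4$ of $G$ inside $F_2$, impossible), and $b$ is adjacent in $G'$ to $a,c,d \in F_2$, which span at least two classes. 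Either way Theorem~\ref{the:handbag} yields the contradiction. So your plan is salvageable, but the paper's direct restriction to $F_0$ is shorter and avoids this case analysis entirely.
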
\iflong
  \begin{proof}
    Let $F$ be the element of $\cmd_4(G \cup A)$ that contains the four vertices of the induced $K_4$ of $G$, which is unique by Corollary~\ref{cor:cmd4}.
    Let $A'=A\cap F^2$. Then every $K_4$ in $G \cup A'$ is contained in $F$, and $G \cup A'$ is prison-free.

    Indeed, for the first, let $H=(G \cup A')[S]$ induce a $K_4$ and let $u \in S \setminus F$.
    Since $G[S]$ is not a $K_4$ by assumption, $H$ contains an edge from $A'$,
    hence there are at least two distinct vertices $v, w \in S \cap F$.
    But $F$ is complete multipartite and $u \notin F$ sees vertices
    from at least two different parts of $F$. The same will clearly hold
    in $G \cup A$, which (by Theorem~\ref{the:handbag}) contradicts 
    that $G \cup A$ is prison-free. Hence $H$ does not exist. By the same argument, any prison
    in $G \cup A'$ must be contained in $F$. But $(G \cup A')[F]=(G \cup A)[F]$
    is prison-free by assumption.
    Since $A$ is minimal, we have $A=A'$ and the claim follows.
  \end{proof}
\fi
%\begin{definition}
  %  Let $l \geq 4$ and $s \geq 1$. We call collection of necklaces of sizes s, l, and note it $Nec_{s,l}$ the graph $G$ with $3ls$ vertices such that $V(G)=\bigcup_{k=1}^{s} \{a_1^k,...,a_l^k,b_1^k,...,b_l^k,c_1^k,...,c_l^k\}$ and the edges such that for all $1\leq i\leq s$ and $1\leq k\leq l$, $G[\{a_i^k,a_{i+1}^k,b_i^k,c_i^k\}]$ is isomorphic to $K_4$. 
%\end{definition}

%\begin{theorem}\label{the:necklace}
  %  For all $l \geq 4$ even, $mincomp(Nec_l)=\frac{5}{2}(ls)^2$. 
%\end{theorem}

%\begin{proof}
  %  \todo{Find a proof}
  %\end{proof}

%From now on, for any graph $G=(V,E)$ and any set $A$ of edges over $V$, we will let the symmetric difference $G \Delta A$ denote the graph $G'$ over $V$ such that $uv \in E(G')$ if and only if $uv$ is in exactly one of the sets $E$ and $A$.
%We let $G \cup A$ denote the edge completion variant where edges from $A$ are added but not removed.

%\subsection{Support gadgets}

We next show a way to enforce \emph{forbidden edges}, i.e., non-edges $uv$ in $G$ such that no prison-free completion set for $G$ of at most $k$ edges contains $uv$. 

%Let $G_1=(V_1, E_1)$ and $G_2=(V_2, E_2)$ be graphs and let $X_1=(x_1^1, x_1^2,\ldots, x_1^t)$ and $X_2=(x_2^1, x_2^2,\ldots, x_2^t)$ be ordered sets of vertices in $G_1$ and $G_2$, respectively, such that $G_1[X_1]$ is isomorphic with $G_2[X_2]$ with isomorphism given by the bijection $f: X_1\rightarrow X_2$ with $f(x_1^i) = x_2^i$ for all $i\in [t]$. 
%such that $V_1\cap V_2 = S$ and $G_1[S] = G_2[S]$. 
%Let \emph{gluing} $(G_1,X_1)$ to $(G_2,X_2)$ refer to taking the disjoint union of $G_1$ and $G_2$, then identifying $x_1^i$ with $x_2^i$ for all $i\in [t]$. 

\iflong
\begin{lemma}
\fi
\ifshort
\begin{lemma}[$\star$]
\fi
  Let $G$ be a graph, $k\in \mathbb{N}$, and $u,v\in V(G)$ with $u \neq v$ such that $uv\notin E(G)$.
  There is a graph $G'$ on vertex set $V(G')=V(G) \cup F$ such that $G=G'-F$ and the following holds:
  the minimal solutions to $(G',k)$
  are precisely the minimal solutions $A$ to $(G,k)$ such that $uv \notin A$.
  Furthermore, every solution $A$ to $(G',k)$ satisfies $uv \notin A$. 
\end{lemma}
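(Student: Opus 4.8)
The plan is to build $G'$ from $G$ by attaching a gadget $F$ that forces a large complete multipartite structure involving $u$ and $v$ but in which the edge $uv$ cannot be added cheaply; concretely, I would attach to $u$ and $v$ a ``forbidden-edge gadget'' consisting of sufficiently many independent vertices so that adding $uv$ would trigger an induced prison that is too expensive to repair within budget $k$. The cleanest way to do this using the machinery already developed is as follows. Add to $V(G)$ a set $F$ of new vertices; make $u$ and $v$ both adjacent to all of $F$, and add enough structure inside $F \cup \{u,v\}$ (for instance, two new vertices $a,b$ with $a,b$ adjacent to everything in $F \cup \{u,v\}$ and to each other, plus a large independent set $I \subseteq F$) so that $\{u,v,a,b\} \cup I$ forces things: if the completion set $A$ contains $uv$, then $G' \cup A$ contains a $K_4$ on $\{u,v,a,b\}$ together with many non-adjacent ``apex'' vertices of $I$ each seeing exactly two parts, i.e.\ an induced prison, and moreover the only way to destroy all these prisons is to merge $I$ into the complete multipartite component, which costs $\binom{|I|}{2}$ edges; choosing $|I| > \sqrt{2k}$ makes this impossible. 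If on the other hand $uv \notin A$, the gadget can be made prison-free ``for free'' (the new vertices already form a complete multipartite piece together with whichever of $u,v$ they need), so $A$ remains a valid solution, and conversely any minimal solution of $(G',k)$ restricted to $V(G)$ is a minimal solution of $(G,k)$ avoiding $uv$.

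First I would fix the precise gadget and verify the forward direction: if $A$ is a minimal prison-free completion set for $G$ with $|A| \le k$ and $uv \notin A$, then $A$ (viewed as an edge set over $V(G')$) is a prison-free completion set for $G'$. Here I use Theorem~\ref{the:handbag}: one checks that every induced prison in $G' \cup A$ would have to use a vertex of $F$ as its apex or inside its $K_4$, and the construction of $F$ (as an independent set all of whose vertices see the same single part, unless $uv$ is present) ensures by the theorem's criterion that no such prison exists. Second, the reverse direction and the ``furthermore'' clause: suppose $A$ is any prison-free completion set for $G'$ with $|A| \le k$. I want to show $uv \notin A$ and that $A \cap \binom{V(G)}{2}$ is a prison-free completion set for $G$. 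For the latter, $G'[V(G)] = G$, and a subgraph of a prison-free graph induced on a vertex subset is prison-free, so $(G \cup A)[V(G)] = (G' \cup A)[V(G)]$ is prison-free — but I must be careful that $A$ may add edges incident to $F$; restricting to $\binom{V(G)}{2}$ removes those and only removes edges, and induced-subgraph-closedness of prison-freeness handles it. For $uv \notin A$: assume $uv \in A$; then inside $G' \cup A$ the gadget vertices together with $\{u,v,a,b\}$ produce, by the counting Proposition and Lemma~\ref{lem:cardone}-style reasoning, a complete multipartite component that must swallow the whole independent set $I$, forcing $|A| \ge \binom{|I|}{2} > k$, a contradiction. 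Finally, minimality transfers: a minimal solution to $(G',k)$ cannot contain any edge incident to $F$ that is not needed, and the gadget is arranged so that no such edge is ever needed, so minimal solutions to $(G',k)$ are exactly $\{A : A \text{ minimal solution to } (G,k),\ uv \notin A\}$.

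The main obstacle I expect is getting the gadget to be genuinely ``free'' when $uv$ is absent while genuinely ``expensive'' when $uv$ is present, simultaneously and robustly — in particular ensuring that there is no clever alternative repair of the $uv$-induced prisons that is cheaper than merging $I$ (e.g.\ deleting edges is not allowed here since this is the completion problem, which actually helps, but one must still rule out adding a few edges among $a,b,u,v$ and $F$ that would kill the prisons without touching $I$). This is exactly where Lemma~\ref{lem:cardone} is the right tool: if $G'$ restricted to the gadget has a unique induced $K_4$ (which I can arrange), then any minimal prison-free completion of the gadget lives in a single complete multipartite component, so once $uv$ is added the component containing $\{u,v,a,b\}$ must contain all of $I$, pinning down the cost at $\binom{|I|}{2}$. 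The remaining routine checks are the explicit verification via Theorem~\ref{the:handbag} that the $uv$-free version of the gadget has no induced prison, and bookkeeping that $V(G')=V(G)\cup F$ with $G = G'-F$ as required; I would set $|F| = \Theta(\sqrt{k})$ so the blow-up is controlled.
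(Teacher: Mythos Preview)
Your overall strategy---attach a gadget at $\{u,v\}$ that makes the edge $uv$ too expensive to add---is the right one, but the concrete gadget you propose does not do what you claim. With $u,v$ adjacent to all of $F$ and $a,b$ adjacent to everything in $F\cup\{u,v\}$ and to each other, the subgraph induced on $\{u,v,a,b\}\cup I$ is already complete multipartite with parts $\{u,v\},\{a\},\{b\},I$. Adding $uv$ merely refines this to parts $\{u\},\{v\},\{a\},\{b\},I$, which is still complete multipartite and hence prison-free. In particular every vertex of $I$ is adjacent to all four of $u,v,a,b$, not to ``exactly two parts'', so no prison appears and the budget is never stressed. Consequently the cost estimate $\binom{|I|}{2}$ has no basis (and would be wrong anyway: absorbing an independent set into a complete multipartite component never requires making it a clique---it can simply become, or join, a part). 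Your appeal to Lemma~\ref{lem:cardone} is also off: the gadget as written already contains many $K_4$'s (e.g.\ $\{u,a,b,w\}$ for each $w\in I$), and in any case that lemma speaks about the whole graph, which may contain further $K_4$'s coming from $G$.

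The paper's construction is much more direct and sidesteps all of this. For each $i\in[k+1]$ it attaches a fresh triangle $u_{i1}u_{i2}u_{i3}$, makes all three vertices adjacent to $u$, and adds the single edge $u_{i3}v$. Two elementary checks finish the proof. First, if $uv\notin A$ then $\{u,v\}$ is an independent set separating $F$ from $V(G)\setminus\{u,v\}$ in $G'\cup A$; since a prison is $2$-connected and has no independent cutset of size two, no induced prison of $G'\cup A$ can meet both sides, and each side is prison-free on its own. Second, if $uv\in A$ then for every $i$ the set $\{u,u_{i1},u_{i2},u_{i3},v\}$ is a prison, and repairing it forces $u_{i1}v\in A$ or $u_{i2}v\in A$; these $k+1$ required edges are pairwise distinct, so $|A|>k$. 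No complete-multipartite bookkeeping or invocation of Lemma~\ref{lem:cardone} is needed.
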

%
%\begin{lemma}    
%  Let $G$ be a graph, $k\in \mathbb{N}$, and $u,v\in V(G)$ with $u \neq v$ such that $uv\notin E(G)$. There is a graph $F_k$, with $|V(F_k)|=O(k)$ and vertices $u', v'\in V(F_k)$ 
%  with $u'v'\notin E(F_k)$ such that $F_k$ is prison-free and the following holds: 
%  Let $G'$ be the graph produced by gluing $(F_k,(u',v'))$ to $(G,(u,v))$ and let $A$
%  be a solution to $(G',k)$, $|A| \leq k$. Then $uv \notin A$ and $A \cap V(G)^2$ is a solution to $(G,k)$. Conversely, let $A$ be a solution to $(G,k)$ with $uv \notin A$.
%  Then $A$ is a solution to $(G',k)$. 
%\end{lemma}
\iflong
  \begin{proof}
    Let $F=\{u_{ij} \mid i \in [k+1], j \in [3]\}$. Add the vertices $F$ to $G$ along with edges such that for each $i \in [k+1]$
    $\{u,u_{i1}, u_{i2}, u_{i3}\}$ induces a $K_4$ and $u_{i3}v$ is an edge. 
    The resulting graph $G'$ is the graph we need. 

    First let $A$ be a solution to $(G,k)$ with $uv \notin A$. We claim that $G' \cup A$ is also prison-free.
    Indeed, let $P \subseteq V(G')$ be such that $(G' \cup A)[P]$ is a prison. Then $P$ must contain vertices both of $F$ and $V(G)$.
    But then $\{u,v\}$ is an independent separator of $P$, which does not exist in a prison. 
    Hence $A$ is a solution to $(G',k)$. Conversely, let $A'$ be a minimal solution to $(G',k)$
    with $uv \notin A'$. Then the restriction $A=A' \cap V(G)^2$ is clearly a solution to $(G,k)$,
    and by the previous statement also a solution to $(G',k)$. Since $A'$ is minmial, $A'=A$ and $A'$ is vertex-disjoint from $F$.
    It follows that the minimal solutions $A$ to $(G,k)$ and $(G',k)$ with $uv \notin A$ coincide.

    On the other hand, let $A$ be a prison-free completion set for $G'$ with $uv \in A$. Then for all $i \in [k+1]$, $A$ must contain 
    $u_{i1}v$ or $u_{i2}v$ to prevent $u,u_{i1},u_{i2},u_{i3},v$ from being a prison, which implies $|A| > k$. 
\end{proof}
\fi

%In the figures of this section, forbidden edges will be  dotted, such as $b_0c_1$ in Figure~\ref{fig:clon},
%and ``gadget edges'' (i.e., non-edges which may be added with special semantics) will be dashed, such as $e_1$ in Figure~\ref{fig:prop}.

\begin{figure}
     \centering
     \captionsetup[subfigure]{justification=centering}
     \begin{subfigure}[t]{0.2\textwidth}
       \begin{tikzpicture}
         % Vertices
         \foreach \x in {1,...,5}
         \node[circle, draw, inner sep=1pt] (\x) at ({(- \x -2)*72 + 17}:1.2) {\x};

         % Edges
         \draw (1) -- (3) -- (5) -- (2) -- (4) -- (1);
         \draw (5) -- (1) ;
         \draw (3) -- (4);

         \draw[dashed, dash pattern=on 4pt off 4pt, line width=1.5pt] (4) -- node[above] {$e_1$} (5);
         \draw[dashed, dash pattern=on 4pt off 4pt, line width=1.5pt] (1) -- node[above] {$e_2$} (2);
         \draw[dashed, dash pattern=on 4pt off 4pt, line width=1.5pt] (2) -- node[above] {$e_3$} (3);
         
       \end{tikzpicture}
       \caption{Propagational component}
       \label{fig:prop}
     \end{subfigure}
     \begin{subfigure}[t]{0.35\textwidth}
       \begin{tikzpicture}[scale=0.6, every node/.style={inner sep=1pt}]
  % Vertices
  
   \node[circle, draw] (A) at (4,4) {$a_0$};
   \node[circle, draw] (B) at (6,4) {$b_3$};
   \node[circle, draw] (C) at (5,6) {$c_0$};
   
   \node[circle, draw] (D) at (2,4) {$a_1$};
   \node[circle, draw] (E) at (2,6) {$b_0$};
   \node[circle, draw] (F) at (0,5) {$c_1$};
   
   \node[circle, draw] (G) at (2,2) {$a_2$};
   \node[circle, draw] (H) at (0,2) {$b_1$};
   \node[circle, draw] (I) at (1,0) {$c_2$};
   
   \node[circle, draw] (J) at (4,2) {$a_3$};
   \node[circle, draw] (K) at (4,0) {$b_2$};
   \node[circle, draw] (L) at (6,1) {$c_3$};
  % Edges

  \draw (A) -- (B);
  \draw (A) -- (J);
  \draw (A) -- (L);

  \draw (B) -- (J);
  \draw (B) -- (L);

  \draw (C) -- (J);
  \draw (C) -- (L);

  \draw (D) -- (E);
  \draw (D) -- (A);
  \draw (D) -- (C);

  \draw (E) -- (A);
  \draw (E) -- (C);

  \draw (F) -- (A);
  \draw (F) -- (C);

  \draw (G) -- (H);
  \draw (G) -- (D);
  \draw (G) -- (F);

  \draw (H) -- (D);
  \draw (H) -- (F);

  \draw (I) -- (D);
  \draw (I) -- (F);

  \draw (J) -- (K);
  \draw (J) -- (G);
  \draw (J) -- (I);

  \draw (K) -- (G);
  \draw (K) -- (I);

  \draw (L) -- (G);
  \draw (L) -- (I);

  %, dash pattern=on 8pt off 8pt, line width=1pt
   \draw[dashed, dash pattern=on 4pt off 4pt, line width=1.5pt] (A) -- node[above] {$e_1$} (C);
   \draw[dashed, dash pattern=on 4pt off 4pt, line width=1.5pt] (D) -- node[below] {$e_2$} (F);
   \draw[dashed, dash pattern=on 4pt off 4pt, line width=1.5pt] (G) -- node[below] {$e_3$} (I);
   \draw[dashed, dash pattern=on 4pt off 4pt, line width=1.5pt] (J) -- node[above] {$e_4$} (L);

   \draw[dotted, line width=1.5pt] (B) -- (C);
   \draw[dotted, line width=1.5pt] (E) -- (F);
   \draw[dotted, line width=1.5pt] (H) -- (I);
   \draw[dotted, line width=1.5pt] (K) -- (L);
 
 \end{tikzpicture}
 \caption{Cloning component of length 4}
  \label{fig:clon}
\end{subfigure}
\begin{subfigure}[t]{0.4\textwidth}
  \begin{tikzpicture}[scale=0.6, every node/.style={inner sep=1pt}]
  % Vertices
  
   \node[circle, draw] (A) at (3,0) {1};
   \node[circle, draw] (B) at (5,0) {2};
   \node[circle, draw] (C) at (3,2) {3};
   \node[circle, draw] (D) at (5,2) {4};
   \node[circle, draw] (E) at (4,4) {5};
   
   \node[circle, draw] (F) at (6,5) {6};
   \node[circle, draw] (G) at (7,3) {7};
   \node[circle, draw] (H) at (8,5) {8};
   
   \node[circle, draw] (I) at (1,3) {9};
   \node[circle, draw] (J) at (2,5) {10};
   \node[circle, draw] (K) at (0,5) {11};
  % Edges

  \draw (A) -- (C);
  \draw (A) -- (D);
  \draw (A) -- (E);

  \draw (B) -- (C);
  \draw (B) -- (D);
  \draw (B) -- (E);

  \draw (C) -- (I);
  \draw (C) -- (J);
  \draw (C) -- (K);
  
  \draw (D) -- (F);
  \draw (D) -- (G);
  \draw (D) -- (H);

  \draw (E) -- (I);
  \draw (E) -- (J);
  \draw (E) -- (K);
  
  \draw (E) -- (F);
  \draw (E) -- (G);
  \draw (E) -- (H);

  \draw (C) -- (D);
  \draw (I) -- (J);
  \draw (F) -- (G);

   \draw[dashed, dash pattern=on 4pt off 4pt,  line width=1.5pt] (A) -- node[above] {$e_1$} (B);
   \draw[dashed, dash pattern=on 4pt off 4pt,  line width=1.5pt] (K) -- node[above] {$e_2$} (J);
   \draw[dashed, dash pattern=on 4pt off 4pt,  line width=1.5pt] (F) -- node[above] {$e_3$} (H);

   \draw[dotted, line width=1.5pt] (K) -- (I);
   \draw[dotted, line width=1.5pt] (G) -- (H);
   
 \end{tikzpicture}
 \caption{Disjoint propagational component}
  \label{fig:dis}
\end{subfigure}
\caption{Gadgets for Section~\ref{sec:lb}. Dotted lines are forbidden edges; dashed lines are named ``gadget-edges'' with special semantics.}
\end{figure}
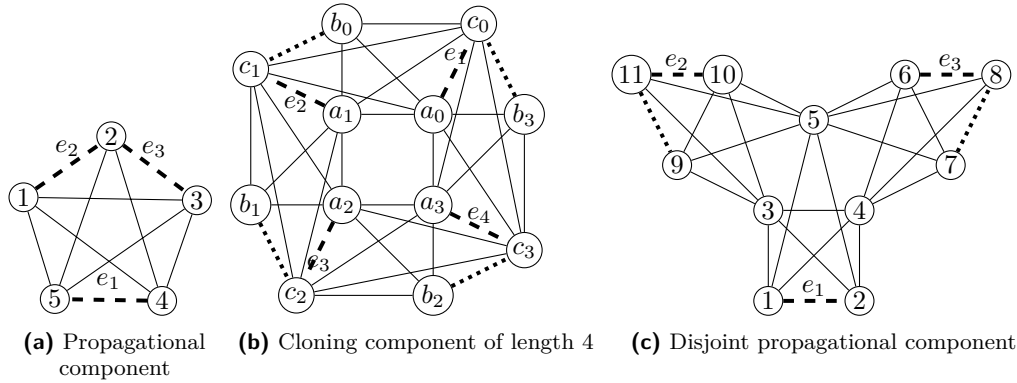

We now proceed with gadget constructions. 
A \emph{propagational component} is a graph containing 3 distinct non-edges $e_1$, $e_2$ and $e_3$ such that for any graph $G$ and subset $A$ of $V(G)^2$, if $G \cup A$ is prison-free and $e_1\in A$, then $e_2\in A$ or $e_3\in A$. 
Figure~\ref{fig:prop} shows this component. 
In what follows, we will deduce gadgets from this propagation property, leaving the proof of their prison-freeness for later. 

\begin{definition}
    Let $l\geq 4$. A \emph{cloning component of length $l$} is a component over the vertices $a_0,...,a_{l-1}$, $b_0,...,b_{l-1}$, $c_0,...,c_{l-1}$ with the edges such that for all $0\leq i \leq l-1$, $a_{i+1},c_{i+1},b_i,c_i,a_i$ induces a propagational component with $e_1^i,e_2^i,e_3^i=a_ic_i,$ $a_{i+1}c_{i+1},$ $c_{i+1}b_i$, and the edge $e_3^i=c_{i+1}b_i$ is forbidden. All arithmetic here is modulo $l$.
\end{definition}

A cloning component is drawn in Figure~\ref{fig:clon}. Note that for all $0\leq i \leq l-1$, if a solution $A$ for an instance $(G,k)$ contains $e_1^i$ in a cloning component, then it contains $e_2^i$ (since $e_2^i \in A$ or $e_3^i \in A$, but $e_3^i$ is forbidden). We inductively obtain the next property.

\iflong
\begin{lemma} \label{lem:cloning-comp-props}
\fi
\ifshort
\begin{lemma}[$\star$] \label{lem:cloning-comp-props}
\fi
  Let $l\geq 4$, $k\geq 1$ and $G$ be a graph containing an induced cloning component $X$ of size $l$ with vertices named $a_i$, $b_i$ and $c_i$ as above. Let $A$ be a subset of non-edges such that $|A|\leq k$ and $G \cup A$ is prison-free. Then either $\{a_ic_i \mid 0 \leq i \leq l-1\} \cap A=\emptyset$
  or $a_ic_i \in A$ for every $0 \leq i \leq l-1$.
  Furthermore, in the latter case all of $X$ is contained in a complete multipartite component of $G \cup A$.
\end{lemma}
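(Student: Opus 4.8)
The plan is to exploit the local propagation property of each embedded propagational component together with the fact that the third edge $e_3^i = c_{i+1}b_i$ is forbidden. Concretely, for each $i$ the vertices $a_{i+1},c_{i+1},b_i,c_i,a_i$ induce a propagational component with gadget-edges $e_1^i = a_ic_i$, $e_2^i = a_{i+1}c_{i+1}$, $e_3^i = c_{i+1}b_i$. Since $G \cup A$ is prison-free and $|A| \le k$, if $a_ic_i \in A$ then by the defining property of a propagational component $a_{i+1}c_{i+1} \in A$ or $c_{i+1}b_i \in A$; but the forbidden-edge construction guarantees $c_{i+1}b_i \notin A$ whenever $|A| \le k$ (this is where I invoke the forbidden-edges lemma, so I should state explicitly that ``forbidden'' means no solution of size $\le k$ uses it, and that $G$ contains the full forbidding gadget). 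Hence $a_ic_i \in A \implies a_{i+1}c_{i+1} \in A$. Iterating this implication around the cycle (all indices mod $l$), we get that if any one $a_ic_i$ is in $A$ then all of them are; contrapositively, if some $a_jc_j \notin A$ then none are in $A$. This gives the dichotomy: either $\{a_ic_i \mid 0 \le i \le l-1\} \cap A = \emptyset$, or $a_ic_i \in A$ for every $i$.

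For the ``furthermore'' part, suppose $a_ic_i \in A$ for all $i$. First I would argue that the four vertices $\{a_i, b_i, a_{i+1}, c_i\}$ (say) form a $K_4$ in the cloning component already in $G$ — this should follow from the edge set of the propagational component in Figure~\ref{fig:prop}, so I need to identify from that figure which edges are present; in the drawing, vertices $1,3,4,5$ form a $K_4$, and under the naming $e_1^i=a_ic_i$, $e_2^i=a_{i+1}c_{i+1}$, $e_3^i=c_{i+1}b_i$ the non-gadget vertices already induce a $K_4$, so each block of the cloning component contains a $K_4$ in $G \cup A$. By Theorem~\ref{the:handbag} (or Corollary~\ref{cor:cmd4}), each such $K_4$ in $G \cup A$ lies inside a unique complete multipartite component $F \in \cmd_4(G\cup A)$. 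The key step is then to show that the $K_4$'s of consecutive blocks $i$ and $i+1$ lie in the \emph{same} component $F$: they share the vertices $a_{i+1}, c_{i+1}$, and since $a_{i+1}c_{i+1} \in A$ is now an edge of $G \cup A$, those two vertices are adjacent and both lie in $F_i \cap F_{i+1}$; by Corollary~\ref{cor:cmd4}(1), if $F_i \neq F_{i+1}$ then $G[F_i \cap F_{i+1}]$ is edgeless, contradicting $a_{i+1}c_{i+1} \in E(G \cup A)$. Hence $F_i = F_{i+1}$ for all $i$, so all blocks lie in one component $F$, and since the blocks cover all of $X$ we get $X \subseteq F$.

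The main obstacle I expect is the ``furthermore'' step — in particular verifying that within each block of the cloning component there is genuinely a $K_4$ of $G \cup A$ whose vertex set carries over to the neighbouring block through an edge of $A$, i.e.\ correctly reading off the adjacencies of the propagational component from Figure~\ref{fig:prop} and checking that the overlap vertices $a_{i+1}, c_{i+1}$ are actually part of the relevant $K_4$ in \emph{both} blocks once the gadget-edges are added. The dichotomy itself is a routine induction once the forbidden-edge behaviour is cited, and the ``single component'' conclusion then follows cleanly from the structural corollary. One subtlety to handle carefully: the induction around the cycle must start from an arbitrary index and wrap all the way round, which works because the cloning component is defined cyclically ($l \ge 4$ ensures all the propagational components are genuinely induced and pairwise consistent), so I should remark that $l \ge 4$ is used exactly to make each five-vertex block an induced propagational component.
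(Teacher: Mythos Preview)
Your dichotomy argument is the same as the paper's: propagate $a_ic_i \in A \Rightarrow a_{i+1}c_{i+1} \in A$ around the cycle using the forbidden edge $c_{i+1}b_i$.

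For the ``furthermore'' part your route genuinely differs from the paper's. The paper invokes Lemma~\ref{lem:cardone}: it observes that the cloning component $X$ together with the single added edge $a_0c_0$ contains exactly one $K_4$, so $(G\cup A)[X]$ has a unique multipartite component $F$ which absorbs all of $V(A)\cap X$ (hence every $a_i,c_i$), and then each $b_i$ lands in $F$ because it sees the adjacent pair $a_i,c_i$. Your plan instead works directly in $G\cup A$: find a $K_4$ in each block, note that consecutive blocks share an edge of $G\cup A$, and use Corollary~\ref{cor:cmd4}(3) to conclude all these $K_4$'s lie in one component of $\cmd_4(G\cup A)$. This is a perfectly valid alternative and arguably cleaner, since it avoids passing through the induced subgraph on $X$ and the minimality hypothesis of Lemma~\ref{lem:cardone}.

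Two small points to tighten. First, the $K_4$ you name, $\{a_i,b_i,a_{i+1},c_i\}$, does not contain $c_{i+1}$, so it does not directly share the edge $a_{i+1}c_{i+1}$ with the next block. Once both $a_ic_i$ and $a_{i+1}c_{i+1}$ are present, block $i$ is $K_5$ minus the forbidden edge $c_{i+1}b_i$, and the $K_4$ you actually want for chaining is $\{a_i,c_i,a_{i+1},c_{i+1}\}$; consecutive such $K_4$'s share the edge $a_{i+1}c_{i+1}$ and Corollary~\ref{cor:cmd4}(3) applies. Second, this chain only captures the $a_i$'s and $c_i$'s; you still need a sentence for the $b_i$'s. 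Either observe that $\{a_i,b_i,a_{i+1},c_i\}$ is a further $K_4$ sharing an edge with the chain, or (as the paper does) note that $b_i$ is adjacent to $a_i$ and $c_i$, which lie in distinct classes of $F$, forcing $b_i\in F$ by Theorem~\ref{the:handbag}. You already flagged both issues as the expected obstacle, so this is just confirming what to look for.
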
\iflong
\begin{proof}
  The former follows easily by induction (as noted above). The latter follows from Lemma~\ref{lem:cardone} applied to $X$ with a single edge $a_0c_0$ added. That is, $X \cup \{a_0c_0\}$ contains a unique $K_4$, thus $(G \cup A)[X]$
  contains a unique complete multipartite component $F$ which contains all vertices of $V(A) \cap X$. 
  This includes every vertex $a_i$ and $c_i$ of $X$, and furthermore every vertex $b_i$ of $X$
  has neighbours $a_i$ and $c_i$, which are in distinct components of $F$ since $a_ic_i \in A$. 
\end{proof}
In what follows, for a cloning component $X$, we will use the names $X.e_1$, \ldots, $X.e_l$ to refer to the edges as in Figure~\ref{fig:clon},
i.e., $X.e_i=a_{i-1}c_{i-1}$ for $i \in [l]$.
  \fi

We define one final gadget, shown in Figure~\ref{fig:dis}. This does the same job as a propagational component -- i.e., if $e_1 \in A$ then $e_2 \in A$ or $e_3 \in A$ -- except that the edges $e_1$, $e_2$, $e_3$ are pairwise vertex-disjoint. 
This is the \emph{propagational gadget} mentioned in the proof overview.

\begin{definition}
  A \emph{disjoint propagational component} is a graph isomorphic to the graph shown in Figure~\ref{fig:dis}, i.e.,
  a graph on a vertex set $V=\{v_1,\ldots,v_{11}\}$ such that vertex sets $\{v_1,\ldots,v_5\}$,
  $\{v_4,\ldots,v_8\}$ and $\{v_3,v_5,v_9,v_{10},v_{11}\}$ all induce propagational components,
  $v_1v_2$, $v_3v_5$, $v_4v_5$, $v_6v_8$ and $v_{10}v_{11}$ are standard non-edges and $v_7v_8$
  and $v_9v_{11}$ are forbidden edges.
  The edge labelled $e_1=v_1v_2$ is referred to as \emph{input edge} and the edges $e_2=v_{10}v_{11}$ and $e_3=v_6v_8$ are \emph{output edges}.
\end{definition}

%\begin{figure}
%     \centering
%\begin{tikzpicture}
%  % Vertices
%  
%   \node[circle, draw] (A) at (3,0) {1};
%   \node[circle, draw] (B) at (5,0) {2};
%   \node[circle, draw] (C) at (3,2) {3};
%   \node[circle, draw] (D) at (5,2) {4};
%   \node[circle, draw] (E) at (4,4) {5};
%   
%   \node[circle, draw] (F) at (6,5) {6};
%   \node[circle, draw] (G) at (7,3) {7};
%   \node[circle, draw] (H) at (8,5) {8};
%   
%   \node[circle, draw] (I) at (1,3) {9};
%   \node[circle, draw] (J) at (2,5) {10};
%   \node[circle, draw] (K) at (0,5) {11};
%  % Edges
%
%  \draw (A) -- (C);
%  \draw (A) -- (D);
%  \draw (A) -- (E);
%
%  \draw (B) -- (C);
%  \draw (B) -- (D);
%  \draw (B) -- (E);
%
%  \draw (C) -- (I);
%  \draw (C) -- (J);
%  \draw (C) -- (K);
%  
%  \draw (D) -- (F);
%  \draw (D) -- (G);
%  \draw (D) -- (H);
%
%  \draw (E) -- (I);
%  \draw (E) -- (J);
%  \draw (E) -- (K);
%  
%  \draw (E) -- (F);
%  \draw (E) -- (G);
%  \draw (E) -- (H);
%
%  \draw (C) -- (D);
%  \draw (I) -- (J);
%  \draw (F) -- (G);
%
%
%   \draw[dashed, dash pattern=on 8pt off 8pt, line width=2pt] (A) -- node[above] {$e_1$} (B);
%   \draw[dashed, dash pattern=on 8pt off 8pt, line width=2pt] (K) -- node[above] {$e_2$} (J);
%   \draw[dashed, dash pattern=on 8pt off 8pt, line width=2pt] (F) -- node[above] {$e_3$} (H);
%
%   \draw[dotted, line width=2pt] (K) -- (I);
%   \draw[dotted, line width=2pt] (G) -- (H);
%   
%\end{tikzpicture}
%
%\caption{Disjoint propagational component}
%  \label{fig:dis}
%\end{figure}

\subsection{NP-hardness of Gap Prison-free Edge Completion}

We now prove the first half of the incompressibility result for \textsc{Prison-Free Edge Completion}, namely that it remains NP-hard even in a strong additive gap version.

Let \textsc{Gap Prison-free Edge Completion} be the variant of \textsc{Prison-free Edge Completion}
where the input is a triple $(G,k,g)$ and the task is to distinguish between the following cases:
\begin{enumerate}
\item $G$ has a prison-free completion set of at most $k$ edges
\item $G$ has no prison-free completion set of fewer than $k+g$ edges
\end{enumerate}
For intermediate cases (where the size $t$ of a minimum-cardinality prison-free editing set is $k < t < k+g$)
the output may be arbitrary. The following is the more precise version of Theorem~\ref{thm:gap} from the introduction.
\ifshort
  We defer the construction to the full version of the paper.
\fi

%\thmgap*

\ifshort
\begin{theorem}[\ifshort{$\star$}\fi] \label{thm:prison-free-gap}
\fi
\iflong
\begin{theorem} \label{thm:prison-free-gap}
\fi
  For any $\varepsilon > 0$, it is NP-hard to distinguish between yes-instances and no-instances $(G,k,g)$
  of \textsc{Gap Prison-free Edge Completion} even if $G$ contains 
  an edge $e$ such that $G-e$ is $K_4$-free
  %a unique induced $K_4$
  and the gap is $g=\Theta(n^{2-\varepsilon})$.
\end{theorem}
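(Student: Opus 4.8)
### Proof Plan for Theorem~\ref{thm:prison-free-gap}

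\textbf{Overall strategy.} The plan is to reduce from a standard NP-hard problem that already has a constant-factor additive-type gap hardness and amplify it into a quadratic additive gap using the cloning components defined above. The natural source is \textsc{Max-Cut} (or, more convenient here, the complement-flavoured problem of finding a partition into few parts, or directly a problem whose objective is ``number of non-edges inside the parts of a complete multipartite graph''). Concretely, by Lemma~\ref{lem:cardone}, if $G$ has exactly one induced $K_4$ then every minimal prison-free completion set lives inside a single complete multipartite component $F$ spanning all of $V(G)$; hence the cost of an optimal completion is exactly $\min_{\text{partition of }V} \big(\text{number of non-edges of }G\text{ lying across classes}\big)$ — i.e. the completion must turn $G$ into a complete multipartite graph on $V(G)$, and the cheapest way to do this is governed by how well the \emph{non-edges} of $G$ can be clustered. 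So the first step is to phrase the base instance as: given a graph, find a partition of its vertices minimizing the number of ``bad pairs'' (pairs that must be filled in). This is essentially a coloring/partition objective for which inapproximability within any constant factor (equivalently, a constant additive gap on suitably scaled instances) is known.

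\textbf{Establishing the constant-gap base case.} First I would fix a graph $G_0$ with exactly one induced $K_4$ (planted by adding a single edge $e$ into an otherwise $K_4$-free graph, which is exactly the extra structural condition demanded in the statement), such that it is NP-hard to distinguish ``$G_0$ can be completed to complete multipartite with $\le k_0$ added edges'' from ``$G_0$ needs $\ge k_0 + g_0$ added edges'' for some $g_0 \ge 1$ that can be taken to be, say, a constant fraction of $k_0$. The key observation making this work despite $K_4$-freeness of $G_0 - e$: the single planted $K_4$ forces (via Lemma~\ref{lem:cardone}) the entire vertex set into one complete-multipartite block, so the $K_4$-freeness of $G_0-e$ does not help the solver avoid the global constraint. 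I would derive this base gap from the APX-hardness of an appropriate partition problem, translating an $\alpha$-factor multiplicative gap on instances of cost $\Theta(n^2)$ into an additive gap $g_0 = \Theta(n^2)$; on such dense instances additive and multiplicative gaps coincide up to constants.

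\textbf{Gap amplification via cloning.} Given the base instance $(G_0, k_0, g_0)$ with the planted edge $e = uv$, the plan is to attach a cloning component $X$ of length $l$, identifying one of its gadget-edges $X.e_i$ with the pair $uv$ (so that $e$ becomes one of the $e_1^i$'s). By Lemma~\ref{lem:cloning-comp-props}, any completion set either contains none of the $l$ gadget edges $a_ic_i$ of $X$ or all of them; and in the latter case all of $X$ joins a single complete multipartite component. Chaining many cloning components — or using cloning components of length $l$ polynomially large in $n$ and wiring a forbidden edge so that ``not editing inside $G_0$'' is forbidden precisely when $G_0$ is a yes-instance — forces that either $G_0$'s own cheap completion is used, or $\Theta(l)$ extra edges are paid just to fill in the cloning components (plus the $\Theta(l^2)$-ish cost of merging all cloned vertices into one multipartite block, since each cloned triple contributes new cross-class pairs). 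By choosing $l = \Theta(n^{1-\varepsilon'})$ for a suitable $\varepsilon'$ and noting the completed cloning component on $\Theta(l)$ vertices forces $\Theta(l^2)$ edges when merged, the additive gap becomes $g = \Theta(l^2) = \Theta(n^{2-\varepsilon})$ while the yes-side budget $k$ stays polynomially bounded and the total number of vertices stays $\mathrm{poly}(n)$ (reabsorbed into the new ``$n$''). One must check the new graph still has exactly one induced $K_4$ (the cloning components are built from propagational components, each of which must be verified to contain exactly one $K_4$ up to the planted structure), so that Lemma~\ref{lem:cardone} still applies globally; this is where the deferred prison-freeness/$K_4$-count bookkeeping for the gadgets of Figures~\ref{fig:prop}–\ref{fig:clon} is needed.

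\textbf{Main obstacle.} The delicate point is making the two sides of the gap genuinely separated \emph{after} amplification: on the yes-side we need a completion of size exactly $k$ that does \emph{not} trigger the expensive cloning-component merge, and on the no-side we must argue every completion either pays the full base gap $g_0$ \emph{or} pays the $\Theta(l^2)$ merge cost — and crucially that these cannot be dodged by a clever completion that partially merges things or routes around the forbidden edges. This requires Lemma~\ref{lem:cardone}'s ``single component'' conclusion to be robust: since the whole construction has one planted $K_4$, every minimal completion is forced into one complete multipartite block on \emph{all} vertices, so there is no ``partial'' option — either you fill $G_0$ cheaply (impossible on no-instances within budget) or you fill everything, incurring the quadratic cost. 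Pinning down this dichotomy cleanly, and choosing $l$ so that $\Theta(l^2)$ dominates all lower-order wiring costs while $k = k_0 + O(l)$ remains polynomial, is the crux of the argument.
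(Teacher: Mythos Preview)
Your proposal has two genuine gaps.

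\textbf{Misreading of Lemma~\ref{lem:cardone}.} You assert that with a single planted $K_4$, every minimal completion set ``lives inside a single complete multipartite component $F$ spanning all of $V(G)$'', and hence that the problem becomes ``partition $V(G)$ into classes minimizing cross-class non-edges''. Lemma~\ref{lem:cardone} says only that $\cmd_4(G \cup A)$ has one component $F$ and $V(A) \subseteq F$; it does \emph{not} say $F = V(G)$. Vertices outside $F$ are perfectly fine as long as each sees at most one class of $F$ (Theorem~\ref{the:handbag}). So the optimal completion is not a global partition problem on $V(G)$: the solver gets to choose which vertices enter $F$ at all, and this freedom is exactly what the construction must remove via gadgetry. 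Your base-case reduction to a partition/Max-Cut-type objective therefore does not go through.

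\textbf{The amplification does not amplify.} Your scheme attaches a cloning component $X$ to the planted edge $e$ and claims a dichotomy: on yes-instances complete $G_0$ cheaply without triggering the $\Theta(l^2)$ merge, and on no-instances pay either $g_0$ or the merge cost. But the planted edge is \emph{already present}, so the $K_4$ sits at the interface of $G_0$ and $X$ in both cases; there is no mechanism by which a yes-instance avoids pulling $X$ into the multipartite block while a no-instance is forced to. The sentence about ``wiring a forbidden edge so that `not editing inside $G_0$' is forbidden precisely when $G_0$ is a yes-instance'' is circular: forbidden edges are structural, not conditional on the answer to the instance. As written, your construction adds the same fixed cloning cost to both sides and the gap remains $g_0$.

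The paper's construction is quite different. It reduces from \textsc{Vertex Cover} on cubic graphs, creating one large cloning component $X_i$ of length $\ell$ \emph{per vertex} $v_i$, plus a seed cloning component $X_0$ and one disjoint propagational component per edge $v_av_b$ that, once $X_0$ is activated by the planted edge, forces activation of $X_a$ or $X_b$. Thus the set of activated $X_i$'s must form a vertex cover, and activating $t$ of them costs $(13/4)(t\ell)^2$ up to lower-order terms (the partition of the merged block is pinned down by forbidden edges among four sets $S_1,\ldots,S_4$). The gap between $t$ and $t+1$ activated components is $\Theta(t\ell^2)$, which is made to dominate all wiring costs by taking $\ell$ large. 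The essential idea you are missing is that the \emph{number} of expensive blocks pulled into $F$ must encode the combinatorial answer, not a binary trigger/no-trigger on a single block.
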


\ifshort
\begin{proof}[Proof sketch]
    We can only give a brief sketch here and refer to the full version for details.
    The result is shown by a reduction from \textsc{Vertex Cover} on cubic graphs. 
    At its heart is the following principle. We create a graph $G$ which has a single induced $K_4$.
    Then by Lemma~\ref{lem:cardone}, for any minimal prison-free completion set $A$ of $G$, 
    $V(A)$ will be contained in a single complete multipartite component $F$ of $G \cup A$. 
    Using forbidden edges, we can have tight control over the partition of $F$,
    which lets us predict $|A|$ well. In particular, we can ensure that the number of edges $|A|$ added
    scales quadratically with $|V(A)|$. 
    
    Our reduction uses two gadgets: cloning components, of sufficiently large length $\ell$, 
    and disjoint propagational components. Let $(G,t)$ where $G=(V,E)$ be an input instance of \textsc{Vertex Cover}
    where $G$ is a cubic graph. Using one initial cloning component $X_0$ with a single seeded active edge $e$, we can 
    force the propagation of $e$ into $m=|E|$ disjoint activation edges $e_i$, one for every edge of $E$. 
    For every edge $ab \in E$, associated with an edge $e_i$ of $X_0$, we create a disjoint propagational component
    with input edge $e_i$ and output edges $f_{a,i}$ and $f_{b,i}$ associated with the vertices $a$ and $b$ of $G$.
    Finally, for every vertex $v \in V$ we create a very large cloning component, which contains all edges
    $f_{v,j}$ associated with it, and whose length $\ell$ depends on the desired gap $g$.  
    Thus, prison-free completion sets $A$ of the resulting graph $G'$ which activate the cloning components
    of $s$ distinct vertices of $G$, lead to a prison-free supergraph $G' \cup A$ of $G'$
    where the unique complete multipartite component $F$ of $cmd_4(G' \cup A)$ contains
    $O(\ell s)$ vertices, guaranteeing that $|A|=\Theta((s\ell)^2)$ for a corresponding minimum solution $A$. 
    We can now achieve the desired gap by tuning $\ell$ to be sufficiently large.
\end{proof}
\fi
% \begin{theorem}[Theorem~\ref{thm:gap} restated] \label{thm:prison-free-gap}
%   For any $\varepsilon > 0$, it is NP-hard to distinguish between yes-instances and no-instances $(G,k,g)$
%   of \textsc{Gap Prison-free Edge Completion} even if $G$ contains 
%   an edge $e$ such that $G-e$ is $K_4$-free
%   %a unique induced $K_4$
%   and the gap is $g=\Theta(n^{2-\varepsilon})$.
%\end{theorem}

\iflong
We show a reduction from \textsc{Vertex Cover} on cubic graphs, which is well known to be NP-complete~\cite{GareyJS74}. Let $(H,k)$ be an instance of \textsc{Vertex Cover} where $H$ is a cubic graph. 
Let $\ell \geq 6$ be a parameter which will control the gap value $g$; $\ell$ is even. 
Enumerate the vertices and edges of $H$ as $V_H=\{v_1,\ldots,v_n\}$ and $E_H=\{e_1,\ldots,e_m\}$.
We create an instance $(G,k,g)$ of \textsc{Gap Prison-free Edge Completion} as follows. 
\begin{enumerate}
\item Create a cloning component $X_0$ of length $m+1$ where every $X_0.e_i$ is a non-edge except for $X_0.e_{m+1}$ which is an edge.
\item For every vertex $v_i \in V_H$, create a cloning component $X_i$ of length $\ell$ and
select three pairs from $\{X_i.e_j\}$ at distance at least two from each other; refer to these pairs as $p_{i,j}q_{i,j}$, $j=1, 2, 3$.
\item For every edge $e_i=v_av_b$ in $E_H$, create a disjoint propagational component with input edge $X_0.e_j$
and output edges $p_{a,j_a}q_{a,j_a}$ and $p_{b,j_b}q_{b,j_b}$ where $j_a, j_b \in [3]$ are chosen so that
all occurrences of each vertex $v_i$ in $E_H$ correspond to distinct pairs $p_{i,j'}q_{i,j'}$.
\item Finally, forbid all edges within the following sets:
  \begin{itemize}
  \item $S_1=\{X_i.a_{2j} \mid i \in [n], 1 \leq j \leq \ell/2\}$
  \item $S_2=\{X_i.a_{2j-1} \mid i \in [n], 1 \leq j \leq \ell/2\}$
  \item $S_3=\{X_i.c_{2j} \mid i \in [n], 1 \leq j \leq \ell/2\} \cup \{X_i.b_{2j-1} \mid i \in [n], 1 \leq j \leq \ell/2\}$
  \item $S_4=\{X_i.b_{2j} \mid i \in [n], 1 \leq j \leq \ell/2\} \cup \{X_i.c_{2j-1} \mid i \in [n], 1 \leq j \leq \ell/2\}$
  \end{itemize}
\end{enumerate}
This completes the description of the graph $G$. 

To discuss the output parameters $k$ and $g$, we need to describe the structure of prison-free supergraphs $G \cup A$ of $G$. Note that $G$ has a unique induced $K_4$. Indeed, all gadget components added are $K_4$-free, except for one $K_4$ involving the edge $X_0.e_{m+1}$. Thus by Lemma~\ref{lem:cardone} any minimal prison-free supergraph $G \cup A$ of $G$ is such that $\cmd_4(G \cup A)$ contains a unique complete multipartite component $F$, and furthermore $V(A) \subseteq F$. 
Thus, the number of edges in $(G \cup A)[F]$ depends (as in Section~\ref{sec:subprel}) on $|F|$ and the sizes of the parts in the complete multipartite decomposition of $F$.

\begin{lemma} \label{lm:vc-lower}
  Let $G=(V,E)$ be the graph constructed above and let $A \subseteq V^2$ be a minimal edge set such that $G \cup A$ is prison-free
  and there are at least $t$ indices $i$ such that $p_{i,j_i}q_{i,j_i} \in A$ for some $j_i \in \{1,2,3\}$.
  Then $|A| \geq (13/4)(t \ell)^2-9t\ell$. 
\end{lemma}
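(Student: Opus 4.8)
The plan is to analyze the structure of the unique complete multipartite component $F$ of $\cmd_4(G \cup A)$ guaranteed by Lemma~\ref{lem:cardone}, and to lower-bound both its size $|F|$ and the number of edges inside it. First I would fix the $t$ indices $i$ for which some pair $p_{i,j_i}q_{i,j_i} \in A$; for each such $i$, the edge $p_{i,j_i}q_{i,j_i}$ is one of the gadget-edges $X_i.e_{j}$ of the cloning component $X_i$, so by Lemma~\ref{lem:cloning-comp-props} all of $X_i$ is contained in $F$ and moreover every edge $X_i.e_j = a_{j-1}c_{j-1}$ lies in $A$. In particular, all $3\ell$ vertices of each such $X_i$ belong to $F$, and since distinct cloning components $X_i$, $X_{i'}$ are vertex-disjoint, we get $|F| \geq 3t\ell$.

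Next I would control the part-structure of $F$ restricted to these vertices, using the forbidden-edge sets $S_1,\dots,S_4$. Within a single activated $X_i$, the vertices $\{a_{2j}\}$, $\{a_{2j-1}\}$, $\{c_{2j}\} \cup \{b_{2j-1}\}$, and $\{b_{2j}\} \cup \{c_{2j-1}\}$ each induce independent sets in $G \cup A$ (they are contained in $S_1,\dots,S_4$ respectively, where all edges are forbidden, hence absent from $A$), so each such set is confined to a single part of $F$. Since the $S_r$ are defined uniformly across all $i \in [n]$, for a given $r$ the union over all activated $i$ of the relevant vertex subset of $X_i$ is also an independent set, hence lies in one part of $F$. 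Thus, restricted to the $3t\ell$ vertices coming from the $t$ activated cloning components, $F$ has at most $4$ parts, of sizes $t\ell/2$, $t\ell/2$, $t\ell$, $t\ell$ respectively (summing over the $\ell/2$ even and $\ell/2$ odd indices in each $X_i$). I should double-check that these four classes are genuinely distinct classes of $F$ — this follows because $a_{j-1}c_{j-1} \in A$ forces $a_{j-1}$ and $c_{j-1}$ into different parts, $a_jc_j$, $b_{j-1}c_j$ being non-edges together with the propagational/prison constraints pin down which vertices of adjacent rungs can coincide — but in any case the bound only improves if they merge, so the cleanest route is: the $3t\ell$ vertices lie in parts whose sizes are each at most $t\ell$ in a way that the sum of squares of part sizes restricted to these vertices is at most $(t\ell/2)^2 + (t\ell/2)^2 + (t\ell)^2 + (t\ell)^2 = (5/2)(t\ell)^2$.

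Finally I would count edges. By the Proposition on complete multipartite graphs, the number of edges inside the sub-blob on these $3t\ell$ vertices is at least $\tfrac12\big((3t\ell)^2 - \tfrac{5}{2}(t\ell)^2\big) = \tfrac12 \cdot \tfrac{13}{2}(t\ell)^2 = \tfrac{13}{4}(t\ell)^2$. Of these, at most those already present in $G$ should be subtracted to bound $|A|$ from below; the edges of $G$ incident to the $3t\ell$ gadget vertices number $O(t\ell)$ (each cloning component is sparse, $O(\ell)$ edges, plus $O(1)$ connections per gadget attachment, and each $X_i$ attaches to $O(1)$ disjoint propagational components since $H$ is cubic), and a careful accounting gives at most $9t\ell$ such edges. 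Hence $|A| \geq \tfrac{13}{4}(t\ell)^2 - 9t\ell$, as claimed. The main obstacle is the bookkeeping in the second step: showing rigorously that the forbidden sets collapse the $3t\ell$ vertices into only four parts of the stated sizes \emph{simultaneously for all $t$ activated components}, and pinning down the exact additive constant $9t\ell$ from the pre-existing edges of $G$; the rest is the routine convexity estimate $\sum a_i^2 \le$ (max part size)$\cdot \sum a_i$ packaged through the edge-count formula.
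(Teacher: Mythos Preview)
Your approach is essentially the paper's: invoke Lemma~\ref{lem:cloning-comp-props} to absorb each activated $X_i$ into the unique component $F$ from Lemma~\ref{lem:cardone}, use the forbidden-edge sets $S_1,\dots,S_4$ to pin down the partition of the $3t\ell$ vertices into four parts of sizes $t\ell/2,t\ell/2,t\ell,t\ell$, compute the edge count via the multipartite formula, and subtract the at most $9t\ell$ pre-existing edges (the paper gets this constant simply from $|V_I|=3t\ell$ and a degree bound of~$6$ within a cloning component).

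One point needs correcting: your hedge ``in any case the bound only improves if they merge'' is backwards. If two of the four groups landed in the \emph{same} part of $F$, the sum of squares of part sizes would \emph{increase}, the number of edges in the complete multipartite graph on $V_I$ would \emph{decrease}, and your lower bound on $|A|$ would weaken --- potentially fatally. So you cannot sidestep the distinctness check. The paper handles this in one line: within every cloning component there are edges of $G$ between each pair $S_r,S_{r'}$ ($r\neq r'$), so no two of the four sets can share a part of $F$. You gesture at this with the $a_{j-1}c_{j-1}\in A$ remark; that is the right idea, but you should make it the argument rather than an aside, and drop the merging hedge entirely.
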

\begin{proof}
  Let $I \subseteq [n]$ be the set of indices $i$ such that at least one pair $p_{i,j_i}q_{i,j}$ is contained in $A$.
  Then for every cloning component $X_i$, $i \in I$ and every $j \in [\ell]$, we have $X_i.e_j \in A$ and by Lemma~\ref{lem:cloning-comp-props}  
  all of $X_i$ is contained in some complete multipartite component of $\cmd_4(G \cup A)$.   
  As noted above, since $G$ has a unique $K_4$ and $A$ is minimal, Lemma~\ref{lem:cardone} implies that
  $\cmd_4(G \cup A)$ has a single complete multipartite component $F$. Thus $V(X_i) \subseteq F$ for every $i \in I$. 
  Let $V_I=\bigcup_{i \in I} V(X_i)$. Note that the cloning components $X_i$ are pairwise vertex-disjoint,
  thus $|V_I|=3t\ell$. 
  
  Furthermore, because of the forbidden edges, each vertex set $S_i$, $i=1, 2, 3, 4$ is represented in only one part of $F$,
  and since all six edges between different parts $S_i$ exist in every cloning component, the only possible partition of $V_I$
  induced by $F$ is as $(V_I \cap S_1, \ldots, V_I \cap S_4)$. 
  These parts have (respectively) cardinality $t\ell/2$, $t\ell/2$, $t\ell$ and $t\ell$. 
  Then, as observed in Section~\ref{sec:subprel} the number of edges of $(G \cup A)[V_i]$ is
  \[
    \frac{(3t\ell)^2 -2(t\ell/2)^2 - 2(t\ell)^2}{2} = \frac{13}{4} (t\ell)^2.
  \]
  The number of edges of $V_I$ already present in $G$ is at most $9t\ell$ since $|V_I|=3t\ell$ and every vertex in a cloning component has degree at most 6. 
  In addition, the cloning component $X_0$ must be present in $F$, as well as some parts of the disjoint propagational components; we refrain from analysing these in detail for a loose lower bound.
\end{proof}

In the other direction, we observe that the bound of Lemma~\ref{lm:vc-lower} is tight up to lower-order terms in $\ell$.

\begin{lemma} \label{lm:vc-upper}
  Assume that $H$ has a minimal vertex cover of cardinality $t$. Then $G$ has a prison-free completion set $A$
  with $|A| \leq (13/4)(t\ell)^2 + 36t \ell (m+1) + 144(m+1)^2$ where $m=|E_H|$. 
\end{lemma}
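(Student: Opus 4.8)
The plan is to exhibit an explicit completion set $A$ realizing a vertex cover of $H$ and then bound its size. Given a minimal vertex cover $C \subseteq V_H$ with $|C| = t$, I would construct $A$ by "activating" exactly those cloning components $X_i$ with $v_i \in C$, together with the initial component $X_0$ and whatever parts of the disjoint propagational components are forced. Concretely: put $X_0.e_j \in A$ for all $j \in [m+1]$ (these are the edges $X_0.e_j = a_{j-1}c_{j-1}$); for each $v_i \in C$, put $X_i.e_j \in A$ for all $j \in [\ell]$; and for each edge $e_i = v_av_b$ of $H$, since at least one endpoint, say $v_a$, lies in $C$, the corresponding disjoint propagational component needs its input edge $X_0.e_j$ and at least the output edge $p_{a,j_a}q_{a,j_a}$ activated, so I also add the internal non-edges of that propagational component that are forced by its three constituent propagational components (there are only a constant number of vertices per disjoint propagational component, so only a constant number of edges need be added there). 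The key claim is then that $G \cup A$ is prison-free: by Lemma~\ref{lem:cardone} applied to the unique $K_4$, it suffices to check that taking $F$ to be the union of all activated cloning components $X_i$ ($v_i \in C$), $X_0$, and the activated parts of the disjoint propagational components, together with the partition into the four classes $F \cap S_1, \dots, F \cap S_4$ (extended consistently across $X_0$ and the propagational components), yields a complete multipartite graph whose completion is exactly $A$, and that no vertex outside $F$ sees two classes of $F$ — the latter holding because any $b_i$-type vertex or any vertex outside the activated components sees at most one class, which is exactly the structural condition of Theorem~\ref{the:handbag}.

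Once prison-freeness is established, the size bound is a counting exercise. The dominant term is the number of edges inside the activated cloning components: by the same computation as in Lemma~\ref{lm:vc-lower}, the complete multipartite graph on $V_C = \bigcup_{v_i \in C} V(X_i)$ with the four classes of sizes $t\ell/2, t\ell/2, t\ell, t\ell$ has exactly $\tfrac{13}{4}(t\ell)^2$ edges, and subtracting the edges already present only decreases the count, so the contribution to $|A|$ from these is at most $\tfrac{13}{4}(t\ell)^2$. The remaining contributions come from (i) edges between $X_0$ or a disjoint propagational component and the activated cloning components, and (ii) edges internal to $X_0$ and the propagational components. Since $X_0$ has $3(m+1)$ vertices and there are $m$ disjoint propagational components each of constant size $11$, the total number of such "auxiliary" vertices is $O(m)$ — more precisely at most $3(m+1) + 11m \le 36(m+1)$ after a crude bound — and each can contribute at most $|V_C| + O(m) \le 3t\ell + O(m)$ new edges, giving a cross term bounded by $36t\ell(m+1)$ and an internal term bounded by $144(m+1)^2$, matching the statement. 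I would be somewhat loose here deliberately, since only the leading $\tfrac{13}{4}(t\ell)^2$ term matters for the eventual gap argument.

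The main obstacle I anticipate is not the counting but verifying prison-freeness of $G \cup A$ rigorously — specifically, pinning down the partition of $F$ so that it is genuinely consistent across the boundaries between $X_0$, the disjoint propagational components, and the activated cloning components, and confirming that every vertex of $G$ lying outside $F$ (including the unused cloning components $X_i$ with $v_i \notin C$, and the $b$-type vertices) has a neighbourhood meeting only one class of $F$. This requires a careful look at how the propagational components' gadget-edges $e_1, e_2, e_3$ sit relative to the classes $S_1, \dots, S_4$, and at the attachment points $p_{i,j}q_{i,j}$ within each cloning component; the choice in step~(3) of the construction that distinct occurrences of a vertex use distinct pairs is what keeps these attachments from conflicting. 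I would handle this by first establishing, as an auxiliary lemma, that when all of $X_i.e_1, \dots, X_i.e_\ell$ are in $A$ the induced structure on $V(X_i)$ in $G \cup A$ is precisely the complete $4$-partite graph with classes $S_1 \cap V(X_i), \dots, S_4 \cap V(X_i)$ (which follows from Lemma~\ref{lem:cloning-comp-props} plus the forbidden-edge constraints $S_1, \dots, S_4$), and then arguing that gluing these along the shared propagational-component vertices respects the four-class structure. With that in hand, Theorem~\ref{the:handbag} finishes the prison-freeness check and the size bound follows as above.
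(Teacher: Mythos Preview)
Your approach is essentially the paper's: take $F$ to be the union of $X_0$, the activated cloning components $X_i$ with $v_i\in C$, and the relevant vertices of the disjoint propagational components; complete $G[F]$ to a complete multipartite graph; verify the Theorem~\ref{the:handbag} condition for vertices outside $F$; and count, with the dominant $(13/4)(t\ell)^2$ term coming from $\bigcup_{v_i\in C}V(X_i)$ exactly as in Lemma~\ref{lm:vc-lower}.

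One point to correct: the partition of $F$ cannot be a four-class extension of $S_1,\dots,S_4$. The paper places the internal propagational-component vertices (vertices $3,4,5$ and possibly $7,9$ in Figure~\ref{fig:dis}) into \emph{new} classes, and when only one endpoint of an $H$-edge is covered it leaves the three vertices on the unused output side outside $F$ altogether. This is forced: once the input edge and one output edge of a disjoint propagational component are added, the five vertices of the inner propagational sub-component induce a graph needing five classes, so your phrase ``respects the four-class structure'' cannot hold literally. This does not damage the argument --- the extra classes contain only $O(m)$ vertices and affect only the lower-order terms you already flagged as loose --- but it is the reason your final constants are slightly off: with the paper's tighter count of at most $12(m+1)$ vertices in $F\setminus\bigcup_{v_i\in C}V(X_i)$ (rather than your $36(m+1)$), the cross term $3t\ell\cdot 12(m+1)=36t\ell(m+1)$ and the quadratic term $(12(m+1))^2=144(m+1)^2$ match the statement exactly.
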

\begin{proof}
  Let $S \subseteq V_H$ be a minimal vertex cover of cardinality $t$. Let $V_I=\bigcup_{v_i \in S} V(X_i)$, $V_0=V(X_0)$
  and let $V_E$ contain all vertices of $G$ from disjoint propagational components; i.e., $V_0 \cup V_I \cup V_E$ contains
  every vertex of $G$ except for those internal to cloning components $X_i$ where $v_i \notin S$. 
  Note that $|V_0| =3(m+1)$ and $|V_E| \leq 11m$. Let $F=V_0 \cup V_I \cup V_E$ and initialize the construction of a partition of $F$
  as $((V_0 \cup V_I) \cap S_1, \ldots, (V_0 \cup V_I) \cap S_4)$. Note that every part induces an independent set.
  We claim that we can complete this into a partition by extending it to the vertices $V_E \setminus (V_0 \cup V_I)$.
  Indeed, consider a disjoint propagational component with input edge $e_1$ and output edges $e_2$ and $e_3$,
  and refer to Figure~\ref{fig:dis} for vertex numbers. Then $e_1$ is also an edge of $X_0$, hence $e_1 \in A$.
  Furthermore, the component represents some edge $v_av_b \in E_H$ and since $S$ is a vertex cover, at least one
  of $v_a$ and $v_b$ is in $S$. Also note that if $v_a \in S$, respectively $v_b \in S$, then $e_2 \subseteq V_I$
  respectively $e_3 \subseteq V_I$, and thus these vertices are already accounted for in the partition.
  We proceed as follows. Again, we take no care to optimize the constant factors except for the leading term $(t\ell)^2$. 
  \begin{itemize}
  \item If $v_a, v_b \in S$ then place $9$ with $11$, $7$ with $8$, and the remaining unplaced vertices $3, 4, 5$ in new parts
  \item If only one of $v_a$ and $v_B$ is in $S$, say the vertex represented by $e_2$, then $1, 2, 10$ and $11$ have been placed. Place $9$ with $11$. Place $3$ in a new part and $4,5$ together in another part. Leave $6, 7, 8$ outside of $F$.
  \item The remaining case that $v_a, v_b \notin S$ cannot happen since $S$ is a vertex cover. 
  \end{itemize}
  We verify that every part of this partition induces an independent set and that every vertex outside of $F$ has neighbours in at most one part of $F$.
  For the former observe that every edge of $G$ is contained in a cloning component or a disjoint propagational component. It is easy to verify that the partition $(S_i)_i$ of cloning components consists of four independent sets, and for disjoint propagational components note that $10$ and $11$ (if placed) respectively $6$ and $8$ (if placed) belong to different parts, hence the placement of vertices $7$ and $9$ does not create problems; and all other vertex placements are clearly safe.
  Next, for vertices outside of $F$, as noted, edges of $G$ are entirely contained in gadget components. Cloning components are pairwise vertex-disjoint, and every cloning component is either contained in $F$ or disjoint from $F$. Similarly, disjoint propagational components are pairwise vertex-disjoint, and each such component is either contained in $F$ or care was taken to ensure that vertices $6, 7, 8$ respectively $9, 10, 11$ see only one part of $F$. 

  Thus, if we add edges to $G$ to complete $F$ into a complete multipartite component, then the resulting graph $G \cup A$ is prison-free by Theorem~\ref{the:handbag}. As computed in Lemma~\ref{lm:vc-lower}, the number of edges in $(G \cup A)[V_I]$ is $(13/4)(t\ell)^2$. 
  In addition, the total number of edges possible with an endpoint in $F \setminus V_I$ is at most
  \[
    |V_I| \cdot |F \setminus V_I| + |F \setminus V_I|^2 \leq (3t\ell)(12(m+1)) + (12(m+1))^2,
  \]
  given that $|V(X_0)|=3(m+1)$ and that every distributional component contributes at most 9 further vertices.
  Clearly $|A|$ is at most this value. 
\end{proof}

\begin{proof}[Proof of Theorem~\ref{thm:gap}]
  Let $\ell$ be some even number and apply the construction above. Clearly it runs in polynomial time assuming $\ell$ itself is polynomially bounded.
  If $H$ has a vertex cover of size at most $t$, then by Lemma~\ref{lm:vc-upper} there is a prison-free completion set $A$
  with $|A| \leq (13/4)(t\ell)^2 + O(t \ell m+m^2)$. On the other hand, if every vertex cover of $H$ consists of at least $t+1$ vertices,
  then by Lemma~\ref{lm:vc-lower} every prison-free completion set $A$ satisfies
  \[
    |A| \geq   (13/4)((t+1)\ell)^2-O(t \ell) = (13/4)(t\ell)^2 + (13/2)t \ell^2 - O(t \ell).
  \]
  The gap between these values is of size $(13/2)t\ell^2-O(t \ell m + m^2)$. 
  Clearly, we can set $\ell$ large enough that the $\ell^2$-term in the first part dominates the second part. 
  Furthermore, if we want a gap of some $g=\Theta(n^{2-\varepsilon})$ then,
  since $\ell^2$ dominates the gap for large values of $\ell$, it suffices to use $\ell=\Theta(n^{1-\varepsilon/2})$.
  Since $|V(G)|=3|V_H|\ell + O(m)$, setting $\ell=\Theta(|V_H|^{2/\varepsilon})$ suffices, asymptotically.
\end{proof}

\fi % ends iflong for NP-hardness proof

\subsection{Compositionality of Prison-Free Edge Completion}

We prove Theorem \ref{the:prisedit} by an or-composition over instances of \textsc{Gap Prison-free Edge Completion},
using Theorem~\ref{thm:gap} to support the composition. 

We recall some definitions~\cite{BodlaenderJK14}. 
A \emph{polynomial equivalence relation} is an equivalence relation on
$\Sigma^*$ such that the following hold:
\begin{enumerate}
\item There is an algorithm that given two strings $x, y \in \Sigma^*$
  decides in time polynomial in $|x|+|y|$ whether $x$ and $y$ are
  equivalent.
\item For any finite set $S \subset \Sigma^*$, the number of
  equivalence classes that $S$ is partitioned to is polynomially
  bounded in the size of the largest element of $S$.
\end{enumerate}
Let $L \subseteq \Sigma^*$ be a language, $\mathcal{R}$ a polynomial
equivalence relation and $Q \subseteq \Sigma^* \times \mathbb{N}$ a
parameterized language. An \emph{OR-cross-composition of $L$ into $Q$
  (with respect to $\mathcal{R}$)}
is an algorithm that given $t$ instances $x_1, \ldots, x_t \in \Sigma^*$
of $L$ belonging to the same equivalence class of $\mathcal{R}$,
uses time polynomial in $\sum_{i=1}^t |x_i|$ and outputs
an instance $(y,k)$ of $Q$ such that the following hold:
\begin{enumerate}
\item The parameter value $k$ is polynomially bounded in $\max_i
  |x_i|+\log t$.
\item $(y,k)$ is a yes-instance of $Q$ if and only if at least one
  instance $x_i$ is a yes-instance of $L$.
\end{enumerate}
If an NP-hard language $L$ has an OR-cross-composition into a
parameterized problem $Q$ then $Q$ admits no polynomial kernelization,
unless the polynomial hierarchy collapses~\cite{BodlaenderJK14}.
We proceed to show this for \textsc{Prison-free Edge Completion}.

\ifshort
  We present here a high-level sketch of the result, based directly on Theorem~\ref{thm:gap}.
  In the full version, we present a more careful proof with explicit parameters.
  Thus for simplicity, let $(G_i, k_i, g_i)_{i=1}^t$ be a sequence of instances of \textsc{Gap Prison-free Edge Completion}
  with a sufficiently high gap value $g_i=\Theta(|V(G_i)|^{2-\varepsilon})$, $\varepsilon>0$, to be tuned later.
  Via a polynomial equivalence relation, we may assume that $|V(G_i)|=n_0$, $|E(G_i)|=m_0$, $k_i=k_0$ and $g_i=g$ holds for every input instance $i$. 
  By Theorem~\ref{thm:gap}, we assume that for every instance $(G_i, k_i, g_i)$ there is a single edge $e_i \in E(G_i)$
  such that $G_i-e_i$ is prison-free; refer to this as the \emph{activation edge} of $G_i$ an d delete it from the graph. 
\fi

\iflong
For transparency, we view Theorem~\ref{thm:gap} as a procedure that transforms an instance of \textsc{Vertex Cover} with cubic input graph to an instance of \textsc{Gap Prison-free Edge Completion} with a given gap parameter $g$.
Thus, let $(G_i, k_i)_{i=1}^t$ be a sequence of cubic instances of \textsc{Vertex Cover}.
Via a polynomial equivalence relation, we may assume that $|V(G_i)|=n_0$, $|E(G_i)|=m_0$ and $k_i=k_0$ holds for every input instance $i$. 
For each $i \in [t]$ and $\ell \in \mathbb{N}$, let $(G_{i,\ell},k',g)$ be the result of applying Theorem~\ref{thm:gap}
to the instance $(G_i,k_i)$ with parameter value $\ell$. Thus by Lemma~\ref{lm:vc-upper} $k'=(13/4)(k_0\ell)^2 + 36k\ell(m_0+1)+144(m_0+1)^2$ and we need to set $\ell$ large enough that $(13/4)((k_0+1)\ell)^2 > k'$. 
We refrain from fixing a precise value here, since the gap $g$ will need to overcome some additional slack due to the construction of an or-composition. Furthermore, for every $i \in [t]$, delete the unique edge of the cloning component $X_0$ in $G_{i,\ell}$ that is present from the start. Refer to this as the \emph{activation edge} of $G_{i,\ell}$. 
\fi

For the composition, let $h \in \mathbb{N}$ be such that $2^{h-1} < t \leq 2^h$. Define a balanced binary tree of height $h$ whose leaves are labelled $L_i$, $i=1, \ldots, t$. Please a disjoint propagational component for every internal node $x$ of the tree, identifying the two output edges with the children of $x$ and the input edge at $x$ with the corresponding output edge of the parent of $x$. Initially, all input and output edges are absent except that the input edge at the root of the tree is present.
\iflong
Finally identify the output edge leading into the leaf $L_i$ with the activation edge of the graph $G_{i,\ell}$. Let $e_i$ refer to this activation edge. 
\fi
\ifshort
Finally identify the output edge leading into the leaf $L_i$ with the activation edge $e_i$ of the graph $G_i$. 
\fi
If there are any unused output edges, for example if $t$ is odd, make these edges forbidden. 
Let $(G,k)$ be the resulting instance of \textsc{Prison-free Edge Completion} where $k$ is yet to be determined. 

\iflong
\begin{lemma} \label{lm:final-tree-recursion}
\fi
\ifshort
\begin{lemma}[$\star$] \label{lm:final-tree-recursion}
\fi
  Let $A$ be a prison-free completion set for $G$. Then there is some $i \in [t]$ such that $A$ contains the activation edge $e_i$ of the graph \iflong $G_{i,\ell}$. \fi \ifshort $G_i$. \fi Furthermore, for every $i \in [t]$, any minimal prison-free edge completion set $A_i$ for \iflong $G_{i,\ell}+e_i$ \fi \ifshort $G_i+e_i$ \fi
  can be completed into a prison-free completion set $A$ for $G$ which contains every output edge in the path from the root to $L_i$ but no other output edges from the tree. 
\end{lemma}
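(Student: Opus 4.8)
The plan rests on one elementary fact about prisons: a prison is $2$-connected and has a \emph{unique} separating pair, namely the two neighbours $\{u_1,u_2\}$ of the degree-$2$ vertex $v$, whose removal leaves exactly the pieces $\{v\}$ and $\{u_3,u_4\}$. I would view $G$ as assembled from \emph{blocks} — the graphs $G_{i,\ell}$ (with $e_i$ deleted), the disjoint propagational components $C_x$ (one per internal node $x$), and the auxiliary forbidden-edge gadgets — any two of which meet in at most one pair of vertices, that pair always being a non-edge of $G$ (all activation, input and output edges are absent in $G$), and whose intersection graph is a tree; moreover, every edge of $G$ and every edge that will be added lies inside a single block. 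Combining this with the separating-pair fact yields a \textbf{locality} principle: a prison $P$ in a graph $G\cup A$ of this form either lies inside one block, or meets exactly two blocks $H_a,H_b$, in which case their shared pair $\{x,y\}$ is an edge of $A$, there is a vertex $v\in H_b\setminus\{x,y\}$ with $N_P(v)=\{x,y\}$, and $\{x,y,u_3,u_4\}$ is a $K_4$ of $(G\cup A)[H_a]$ with $u_3,u_4\in H_a\setminus\{x,y\}$. I would prove this by taking a block that $P$ genuinely meets at a leaf of the (connected) subtree of blocks touched by $P$ and observing that its shared pair with the neighbouring block is a separating pair of $P$.

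For the first half I would take $A$ to be a \emph{solution} to $(G,k)$ (the form in which the statement is used; this is also needed, since a forbidden edge in $A$ would force $|A|>k$). The root's input edge is present in $G\cup A$; rewriting $G\cup A$ as $(G-I_{\mathrm{root}})\cup(A\cup\{I_{\mathrm{root}}\})$ and applying the defining propagation property of the root's disjoint propagational component gives that one of its output edges lies in $A$. That output edge is the input edge of a child's component and, being in $A$, lets the propagation property apply again; iterating, I walk down a root-to-leaf path along edges of $A$. The terminal leaf cannot be a dummy leaf, since its incoming edge would be a forbidden edge contained in the solution $A$; hence it is some $L_i$ and $e_i\in A$.

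For the second half, fix $i$, a minimal prison-free completion $A_i$ of $G_{i,\ell}$, the internal nodes $x_0=\mathrm{root},\dots,x_d$ on the root--$L_i$ path, and the output edges $g_0,\dots,g_d$ along it ($g_d=e_i$). I set $A:=A_i\cup\{g_0,\dots,g_d\}\cup B$, where for each $x_m$ the set $B$ contributes edges internal to $C_{x_m}$ witnessing that $C_{x_m}$ in the state ``input present, on-path output present, off-path output absent'' is prison-free — such edges exist by the (deferred) analysis of disjoint propagational components. Then $A$ contains exactly the claimed output edges, and it remains to check $G\cup A$ is prison-free, which by locality reduces to two kinds of prisons. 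Inside a block: $(G\cup A)$ on $V(G_{i,\ell})$ is $G_{i,\ell}\cup A_i$, prison-free; on $V(G_{j,\ell})$ for $j\neq i$ it is $G_{j,\ell}-e_j$, which is $K_4$-free by Theorem~\ref{thm:gap} hence prison-free; a path component is prison-free by the choice of $B$; an off-path component carries no added edge and is prison-free as built; forbidden-edge gadgets are prison-free. For a two-block prison, the shared pair $\{x,y\}$ must be one of the activated edges $g_0,\dots,g_d$, hence it is shared either by two consecutive path components (a configuration internal to disjoint propagational components in their activated state, excluded by the gadget analysis) or, for $g_d=e_i$, by $C_{x_d}$ and $G_{i,\ell}$; in the latter case I would use that $G_{i,\ell}$ has a unique $K_4$, so by Lemma~\ref{lem:cardone} all $K_4$'s of $G_{i,\ell}\cup A_i$ lie in one complete multipartite component $F_i$, and combine this with gadget-level control over the common neighbourhood of the endpoints of $e_i$ inside $C_{x_d}$.

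The hard part will be exactly this two-block case, and above all the interface between $C_{x_d}$ and $G_{i,\ell}$: activating both an input and an output edge inside a path component could a priori create a $K_4$ through the output edge which, together with a vertex of the neighbouring block adjacent to both its endpoints, forms a prison. Ruling this out forces the deferred statements about disjoint propagational components to be phrased strongly enough — not merely that the activated states are prison-free, but that they control which interface vertices lie in a common $K_4$ and what their common neighbourhoods are — and these must be matched against the structure of $F_i$ on the $G_{i,\ell}$ side. Pinning down the gadget facts in precisely this shape, and discharging the handful of resulting configurations, is where the real work lies.
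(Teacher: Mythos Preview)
Your first half is fine and matches the paper's one-line induction (your remark that the dummy-leaf case needs $|A|\le k$ is a valid quibble with the lemma's phrasing, but the paper applies the lemma only to solutions).

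The second half has a real gap: a completion set of the form $A=A_i\cup\{g_0,\dots,g_d\}\cup B$ with $B$ confined to individual path gadgets cannot be prison-free. Concretely, take two consecutive path gadgets $C_{x_{m-1}}$ (parent) and $C_{x_m}$ (child), sharing $g_{m-1}$. In the child's labelling $g_{m-1}=v_1v_2$, and once this edge is present $\{v_1,v_2,v_3,v_4\}$ is an unavoidable $K_4$ inside the child (since $v_3,v_4$ are common neighbours of $v_1,v_2$ and $v_3v_4\in E(G)$). On the parent side, $g_{m-1}$ is the output edge $v_{10}v_{11}$, and $v_3,v_5$ of the parent are adjacent in $G$ to both $v_{10}$ and $v_{11}$. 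Since your $A$ contains no edge with one endpoint strictly inside the parent and the other strictly inside the child, the parent's $v_3$ is adjacent to the child's $v_1,v_2$ and to neither of the child's $v_3,v_4$: this is a prison. The ``gadget analysis'' you defer to cannot save you here, because the fix requires an edge between two vertices that lie in \emph{different} blocks, which is outside the scope of your $B$ by definition.

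The paper's construction is therefore not a local patch but a genuinely global one. It takes the unique complete multipartite component $F$ of $G_{i,\ell}\cup A_i$ (guaranteed by Lemma~\ref{lem:cardone}) and, walking up the path, \emph{absorbs} eight of the eleven vertices of each path gadget into $F$ by assigning them to parts (e.g.\ when the on-path output is $e_2$: put $v_9$ in the part of $v_{11}$, put $v_3$ in a new part, put $v_4,v_5$ together in a new part, put $v_1,v_2$ in separate new parts). The set $A$ is then \emph{all} edges needed to make $G[F]$ complete multipartite, which in particular includes edges joining vertices of the root gadget to vertices deep inside $G_{i,\ell}$. Prison-freeness is verified in one stroke via Theorem~\ref{the:handbag}: every part is independent, no forbidden edge is added (since $v_9,v_{11}$ share a part and $v_7,v_8$ are left outside $F$), and every vertex outside $F$ --- the triple $v_6,v_7,v_8$ of each path gadget, all off-path gadgets, and all $G_{j,\ell}$ with $j\neq i$ --- sees at most one part of $F$. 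Your block-tree locality principle is a reasonable tool for detecting where prisons can live, but the construction it suggests (local repairs) is the wrong one; the correct construction is a single multipartite component threaded through the whole path.
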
\iflong
\begin{proof}
  The first claim holds easily by induction, starting from the input edge of the disjoint propagational component at the root. For the second, let $A_i$ be a prison-free completion set for $G_{i,\ell}$ containing the activation edge $e_i$ of this instance, but which is otherwise minimal.
  By construction, $G_{i,\ell}$ has a unique $K_4$, thus $\cmd_4(G_{i,\ell} \cup A_i)$ contains a single multipartite component $F$ with $V(A_i) \subseteq F$. We complete $F$ into a suitable partition that includes all input/output edges in the tree for ancestors of $L_i$ but no other input/output edges.  
  Let $x$ be an ancestor of $L_i$ and let $e$ be the output edge of its disjoint propagational component leading to $L_i$.
  We recursively make sure that the two vertices of $e$ are in distinct components of $F$. This guarantee holds initially since $A_i$ contains the activation edge of $G_{i,\ell}$. Thus, using names from Figure~\ref{fig:dis} and assuming w.l.o.g.\ that $e$ represents $e_2$ in the figure, we place vertex $9$ with vertex $11$, vertex $3$ in a new set, vertices $4$ and $5$ together in a new set, and vertices $1$ and $2$ in distinct new sets. Thus $F$ has been extended to include the input edge leading into $x$ as promised. Let $A$ consist of all edges that must be added to $G$ to complete $F$ into a complete multipartite component. As in Lemma~\ref{lm:vc-upper}, it is easy to verify (using that $A_i$ is a prison-free completion set for $G_{i,\ell}$) that every component of $F$ induces an independent set, that $A$ does not contain any forbidden edge, and that every vertex of $G$ not present in $F$ sees vertices of at most one part from $F$. 
\end{proof}
\fi
We are now ready to finish the proof.

% 1. the or-gate from below; the compositional tree

% 2. the height of the tree, the bound on its cost or size

% 3. le

\thmprisonlb*

% \begin{theorem}[Theorem~\ref{thm:intro-prison-lb} restated] \label{the:prisedit}
%   \textsc{Prison-free Edge Completion} does not admit a polynomial kernel unless the polynomial hierarchy collapses. 
% \end{theorem}
\iflong
\begin{proof}
  Let $n_a=3k_0\ell$, $n_b=12(m_0+1)$, $n_c=6h$ and $g=(n_b+n_c)n_a + (n_b+n_c)^2$.
  Set $\ell$ high enough that $(13/4)((k_0+1)\ell)^2  - 9k_0 \ell > (13/4)(k_0\ell)^2+g$.
  Finally, let $G$ be the graph constructed above constructed with parameter $\ell$
  and set $k=(13/4)(k_0\ell)^2+g$. Our output is $(G,k)$.
  We first note that our parameter is not too large. Indeed, as in Theorem~\ref{thm:gap}
  the guaranteed gap grows quadratically with $\ell$ while the parameter $g$ above is linear in $\ell$
  and polynomial in $n_0+\log t$. It is also clear that the construction can be executed in polynomial time.
  Thus it only remains to show that $(G,k)$ is a yes-instance if and only if at least one input instance $(G_i,k_0)$ is a yes-instance. 

  For this, on the one hand, let $A$ be a minimal prison-free completion set for the output $G$, $|A| \leq k$. 
  By Lemma~\ref{lm:final-tree-recursion} there is at least one leaf $L_i$ such that $A$ contains the activation edge for instance $G_{i,\ell}$. Thus $A$ also contains a prison-free completion set for the ``activated'' instance $G_{i,\ell}$. Furthermore, by the choice of $g$ and $\ell$, by Lemma~\ref{lm:vc-lower} $A$ contains edges of at most $k$ cloning components $X_j$ in $G_{i,\ell}$. Thus the cloning components $X_j$ of $G_{i,\ell}$ which are active in $A$ represent a vertex cover of $G_i$ of cardinality at most $k_0$, as in the proof of Theorem~\ref{thm:gap}.

  On the other hand, assume that input instance $(G_i,k_0)$ is a yes-instance. Then by Theorem~\ref{thm:gap}
  there is a prison-free completion set $A_i$ for $G_{i,\ell}$ with $|A_i| \leq (13/4)(k_0\ell)^2 + n_an_b+n_b^2$.
  Furthermore, we may assume that $\cmd_4(G_i \cup A')$ contains a single complete multipartite component $F$ and that
  $|F| \leq n_a+n_b$. By Lemma~\ref{lm:final-tree-recursion} we can complete this into a complete multipartite component $F$ that also covers eight vertices of every disjoint propagational component in every ancestor of $L_i$ and no further vertices.
  This adds at most $n_c=6h$ vertices to $F$ (since the output- and input-edges of the components use the same vertices).
  Thus, the number of edges $A$ needed to complete $F$ into a complete multipartite component is at most $|A_i|+(n_a+n_b)n_c+n_c^2 \leq k$. Finally, $G \cup A$ is prison-free since (per Lemma~\ref{lm:final-tree-recursion}) no vertex outside of $F$ sees more than one part of $F$ and $G$ contains no $K_4$ not included in $F$. 

  Thus we have showed a cross-composition from \textsc{Vertex Cover} to \textsc{Prison-free Edge Completion} with parameter $k$, implying that the latter has no polynomial kernel unless \containment, in which case the polynomial hierarchy collapses.
\end{proof}
\fi    
\ifshort
  \begin{proof}[Proof sketch]
    We show that the construction above is a cross-composition into \textsc{Prison-free Edge Completion} with parameter $k$.
    Consider briefly the two cases. First, if for some $i \in [t]$ the input instance $(G_i, k_0, g_0)$ is positive,
    let $A_i$ be a prison-free completion set for $G_i$ with $|A_i| \leq k$. By Lemma~\ref{lm:final-tree-recursion}
    there is a prison-free completion set $A \supseteq A_i$ for $G$ that modifies edges along the root-leaf path
    to $L_i$ of the binary tree, and does not contain any other output edge of the tree. By Lemma~\ref{lem:cardone}
    and minimality of $A$ and $A_i$ we may now assume that $A$ touches no
    vertices of the binary tree except  on the root-leaf path to $L_i$,
    and contains no further edges inside $G_i$ beyond $A_i$. 
    Let $n_t$ be the number of vertices in the gadgets of the tree incident with edges of $A$; since the tree has height $h=O(\log t)$
    and each node is constant size, we have $n_t=O(\log t)$. Thus
    \[
      |A| \leq |A_i| + (n_h + n_0)n_h \leq k_0 + O((n_0 + \log t)\log t).
    \]
    On the other hand, assume that for every $i \in [t]$ the minimum prison-free completion set
    $A_i$ for $G_i$ has $|A_i| \geq k_0+g_0$. Let $A$ be a prison-free completion set for $G$.
    By Lemma~\ref{lm:final-tree-recursion} there is some $i \in [t]$ such that the activation
    edge $e_i$ of $G_i$ is contained in $A$. Thus $A$ contains a prison-free completion set for $G_i$
    and $|A| \geq k_0 + g_0$. By the construction of Theorem~\ref{thm:gap},
    we can tune the parameters so that these quantities separate, and choose a parameter $k$ 
    where
    \[
      k_0 + O((n_0 + \log t) \log t) \leq k < k_0 + g_0
    \]
    in which case the reduction is complete. 
  \end{proof}
\fi

\section{Polynomial kernel for Prison-free Edge Deletion}
\label{sec:kernel}
In this section we will find a polynomial kernel for \delProblem. 
Let $G$ be a graph and $k \geq 1$. We will fix $(G,k)$ throughout this section to be an instance of \delProblem.% Our goal is to find a $A\subseteq E(G)$ with  $|A| \leq k$ such that $G-A$ is prison-free.

Throughout this section, for a graph $G$, and an edge $e=uv$ of $G$, we call common neighborhood of $e$ the set $N_G(e)=N_G(u)\cap N_G(v)$.
In addition, given a graph $G$ and a set of vertices $S\subseteq V(G)$, we denote by $\bar{S}$ the set $V(G)\setminus S$, when $G$ is clear from the context.

\subsection{Finding a Small Vertex Modulator} 
We start by finding a small subset of vertices $S$ such that any edgeset $A\subseteq E(G)$ with at most $k$ edges that intersects all prisons in $G[S]$ also intersects all prisons in $G$. While this is not sufficient for a kernel, as deleting an edge can create a prison, outside of this set, we only need to focus on prisons that are created by deleting an edge. 
To obtain this set, we will use well known Sunflower Lemma due to Erd{\"o}s and Rado~\cite{Erdos60}.

A \emph{sunflower} in a set family $\FFF$ is a subset $\FFF' \subseteq \FFF$ such that all pairs of elements in $\FFF'$ have the same intersection called \emph{core}.

\begin{lemma}[Sunflower Lemma,\cite{Erdos60,FlumGrohe06}]\label{lem:SF}
  Let $\FFF$ be a family of subsets of a universe $U$, each of cardinality exactly
  $b$, and let $a \in \mathbb{N}$. If $|\FFF|\geq b!(a-1)^{b}$, then $\FFF$
  contains a sunflower $\FFF'$ of cardinality at least $a$. Moreover,
  $\FFF'$ can be computed in time polynomial in $|\FFF|$.
\end{lemma}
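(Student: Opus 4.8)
The plan is to prove this classical Erd\H{o}s--Rado statement by induction on the common set size $b$, following the standard ``maximal disjoint subfamily plus pigeonhole'' argument; the induction will be constructive, so it simultaneously yields the polynomial-time claim. Throughout, call a subfamily of $\FFF$ \emph{disjoint} if its members are pairwise disjoint, and recall that a sunflower with $a$ petals is exactly $a$ sets sharing a common core $C$ whose pairwise intersections all equal $C$.

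For the base case $b=1$ the family consists of distinct singletons, any $a$ of which form a sunflower with empty core, so the threshold $1!(a-1)^{1}$ suffices; here I would treat the boundary between ``$\geq$'' and ``$>$'' explicitly to match the stated bound. For the inductive step, assume the claim for $b-1$ and let $\FFF$ be a family of $b$-sets with $|\FFF|\geq b!(a-1)^{b}$. First I would greedily extract an inclusion-maximal disjoint subfamily $S_1,\dots,S_t\subseteq\FFF$. If $t\geq a$, then $S_1,\dots,S_a$ are themselves a sunflower with empty core and we are done. Otherwise $t\leq a-1$, and the union $Y=S_1\cup\dots\cup S_t$ satisfies $|Y|\leq (a-1)b$; by maximality every $S\in\FFF$ meets $Y$, so by pigeonhole some fixed element $y\in Y$ lies in at least $|\FFF|/|Y|\geq |\FFF|/((a-1)b)$ members of $\FFF$.

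I would then pass to the reduced family $\FFF_y=\{\,S\setminus\{y\}: S\in\FFF,\ y\in S\,\}$ of $(b-1)$-sets. The arithmetic is the crux of the bound: $|\FFF_y|\geq b!(a-1)^{b}/((a-1)b)=(b-1)!(a-1)^{b-1}$, which is precisely the inductive threshold, so $\FFF_y$ contains a sunflower with $a$ petals and some core $C$. Re-inserting $y$ into each of these petals yields $a$ sets of $\FFF$ with common core $C\cup\{y\}$ and the correct pairwise intersections, i.e.\ a sunflower with $a$ petals in $\FFF$, completing the induction.

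Finally, for the algorithmic claim I would observe that the entire argument is effective and, crucially, \emph{non-branching}: computing a maximal disjoint subfamily is a single greedy sweep, locating the popular element $y$ is one pigeonhole count, and the recursion then proceeds on $\FFF_y$ alone. Since each recursive call lowers $b$ by one, the recursion is a single chain of depth at most $b$, each level doing work polynomial in $|\FFF|$; hence $\FFF'$ is produced in time polynomial in $|\FFF|$. The main obstacle I anticipate is purely bookkeeping: pinning down the constant so that the threshold propagates exactly as $b!(a-1)^{b}\mapsto(b-1)!(a-1)^{b-1}$, and resolving the off-by-one at the boundary so that the stated ``$\geq$'' form (rather than the slightly cleaner strict-inequality form) already delivers a sunflower of cardinality at least $a$ in the base case.
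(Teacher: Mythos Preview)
The paper does not give its own proof of this lemma: it is quoted as the classical Sunflower Lemma with citations to Erd\H{o}s--Rado and Flum--Grohe, and is used as a black box in the proof of Lemma~\ref{lem:kernelizing_P_F}. So there is no ``paper's proof'' to compare against.

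That said, your argument is exactly the standard Erd\H{o}s--Rado induction (maximal disjoint subfamily, pigeonhole on its union, recurse after deleting the popular element), and the constructive/non-branching observation for the polynomial-time claim is the right one. The one genuine issue you flag yourself but do not resolve is the boundary: with the weak inequality $|\FFF|\geq b!(a-1)^b$ the base case $b=1$ fails as stated (for $a=2$ a single singleton does not yield a sunflower of size $2$), and the usual formulation uses a strict inequality. This is a wrinkle in the statement as quoted rather than in your method; for the paper's application (where $a=k+2$, $b=8$, and only an $O(k^8)$ bound is needed) the discrepancy is immaterial, but if you want a self-contained proof you should either switch to the strict form or absorb the slack elsewhere.
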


\iflong 
\begin{lemma}
\fi
\ifshort  
\begin{lemma}[$\star$]
\fi
\label{lem:kernelizing_P_F}
We can in in polynomial time either determine that $(G,k)$ is no-instance of \delProblem\ or compute a set $S\subseteq V(G)$ with $|S|\le 5\cdot 8!\cdot (k+1)^8$ such that for every prison $P$ in $G$ and every $A\subseteq E(G)$ with $|A|\le k$, it holds that if $G[S]\Delta A$ is prison-free, then $A\cap E(G[P])\neq \emptyset$. 

\end{lemma}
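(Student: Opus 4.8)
The plan is to apply the Sunflower Lemma to the family of vertex sets (equivalently edge sets) of prisons in $G$, but with a twist: since a prison has $8$ edges and we want a modulator that dominates \emph{all} prisons with respect to any deletion set of size $\le k$, I will work with the family $\FFF$ of all $5$-element vertex sets $P \subseteq V(G)$ such that $G[P]$ is a prison. Each member has cardinality exactly $5$, so if $|\FFF| \ge 5! \cdot k^5$ (or rather $5!\cdot(k+1)^5$ to be safe, but the stated bound suggests working at the edge level) then $\FFF$ contains a sunflower of cardinality $k+2$. Actually, to match the claimed bound $|S| \le 5 \cdot 8! \cdot (k+1)^8$, the right move is to apply the lemma at the edge level: let $\FFF = \{E(G[P]) : P \text{ a prison in } G\}$, a family of $8$-element subsets of $E(G)$; if $|\FFF| \ge 8!\cdot k^8$ it contains a sunflower of cardinality $k+1$, whose petals are pairwise disjoint outside a common core. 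First I would describe a greedy procedure that repeatedly extracts such sunflowers and marks a representative prison from each.

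The core argument is the standard sunflower-kernelization exchange. Suppose $\FFF$ is large; extract a sunflower $\FFF' = \{E(G[P_1]), \dots, E(G[P_{k+1}])\}$ with core $Y \subseteq E(G)$. I claim that any prison whose edge set is a petal of this sunflower is ``redundant'' in the following sense: if $A \subseteq E(G)$ with $|A| \le k$ hits every prison of $G$ except possibly some of the $P_i$, then $A$ still hits all the $P_i$. Indeed, if $A$ misses some $P_i$ entirely, then $A \cap E(G[P_i]) = \emptyset$, so in particular $A \cap Y = \emptyset$; but then for $A$ to hit each of the other $k$ prisons $P_j$ ($j \ne i$) it must contain an edge from $E(G[P_j]) \setminus Y$, and these sets are pairwise disjoint over the $k$ remaining petals, forcing $|A| \ge k$ — fine, that alone is not a contradiction, so I need the sharper statement. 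The correct formulation, as in the lemma being proved, is: keep deleting prisons from the recorded family one sunflower-petal at a time, so that the \emph{retained} family $\PPP'$ of prisons has the property that hitting all of $\PPP'$ with $\le k$ edges forces hitting all prisons of $G$. Concretely: while $|\FFF| > 8!\cdot(k+1)^8$, find a sunflower of size $k+2$, discard one of its petals from $\FFF$, and repeat; at the end $|\FFF| \le 8!\cdot(k+1)^8$. The invariant to maintain is that whenever a prison $P$ is discarded, the remaining family still contains $k+1$ prisons agreeing with $P$ on a common core and otherwise edge-disjoint, so no size-$\le k$ edge set can hit all retained prisons while missing $P$: if $A$ misses $P$ then $A$ is disjoint from the core, and hitting the $k+1$ retained petals (pairwise edge-disjoint off the core) requires $|A| \ge k+1 > k$. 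Then set $S = \bigcup_{P \in \PPP'} V(P)$, giving $|S| \le 5 \cdot 8! \cdot (k+1)^8$.

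Two bookkeeping points complete the proof. First, the family $\FFF$ of prisons may itself be superpolynomially large, but the Sunflower Lemma only needs to be invoked when $|\FFF|$ exceeds the threshold, and each invocation plus removal can be done in polynomial time in $|\FFF|$; since each removal strictly decreases $|\FFF|$ and we stop once $|\FFF| \le 8!\cdot(k+1)^8$, and since enumerating prisons (all $5$-subsets inducing the prison, or incrementally finding one more when needed) is polynomial — more carefully, one enumerates the $O(n^5)$ candidate $5$-sets once at the start, which is polynomial — the whole routine runs in polynomial time. Second, the ``no-instance'' escape: actually the lemma never needs to declare a no-instance for this step alone, but the statement allows it, so if at any point we detect more than $k$ pairwise edge-disjoint prisons (a by-product of the sunflower extraction when the core is empty) we may safely return ``no-instance.'' I would fold this in as a remark rather than a central case.

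The main obstacle I anticipate is getting the exchange invariant exactly right so that it proves the \emph{conditional} statement in the lemma — namely that hitting all prisons in $\PPP'$ (equivalently, making $G[S] \Delta A$ prison-free) forces $A$ to hit every prison of $G$. The subtlety is that $G[S] \Delta A$ being prison-free is about prisons \emph{inside} $S$, and a prison $P$ of $G$ need not lie inside $S$; so I must argue that the recorded prisons $\PPP'$ all lie in $G[S]$ (true by construction, $S = \bigcup V(P)$) and that if $A$ misses some prison $P$ of $G$ not hit by $A$, the sunflower structure retained at the moment $P$ (or rather $P$'s ``twin'' petals) was processed yields $k+1$ edge-disjoint-off-core petals in $\PPP'$, all inside $S$, none of which $A$ can be hitting if $A$ avoids the core — contradicting $|A| \le k$. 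Making this robust against the order of sunflower extractions, and against the fact that $P$ itself might have been discarded, is the delicate part; the clean way is to maintain throughout the reduction the stronger invariant ``$\PPP' \cup \FFF$ has the property that any $A$ with $|A|\le k$ hitting all of $\PPP'$ hits all of $\FFF$ too,'' and check it is preserved by each petal-removal step.
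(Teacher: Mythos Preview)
Your proposal is correct and follows essentially the same approach as the paper: apply the Sunflower Lemma to the family of $8$-element edge sets of prisons, repeatedly discard one petal from a sunflower of size $k+2$ (declaring a no-instance if the core is empty), and take $S$ to be the union of vertex sets of the at most $8!\cdot(k+1)^8$ retained prisons. The paper handles your ``main obstacle'' exactly as you suggest at the end---by maintaining, step by step, the invariant that any $A$ with $|A|\le k$ hitting all sets in $\SSS_{i+1}$ also hits all sets in $\SSS_i$ (since the $k+1$ surviving sunflower members force $A$ into the nonempty core, which lies in the removed set); chaining these equivalences from $\SSS_q$ back to $\SSS_0$ avoids any need to track what happens to individual petals after later removals.
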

\ifshort
\begin{proof}[Proof Sketch]
    The proof is a straightforward application of Lemma~\ref{lem:SF}.
    Let $\SSS = \{E(P)\mid P\text{ is a prison in }G\}$. Note that each set $X\in \SSS_0$ contains precisely $8$ edges. We iteratively apply Lemma~\ref{lem:SF} on $\SSS$ to find a sunflower of size $k+2$. If the core of the sunflower $\SSS'$ is empty, then $G$ contains $k+2$ edge-disjoint prisons and $(G,k)$ is no instance. Else any solution of size at most $k$ interesects the core and that is the case even if we require to hit only $k+1$ prisons of $\SSS'$, so we remove arbitrary prison from $\SSS$ and repeat the procedure until no suflower of size $k+2$ can be find. At that point $\SSS$ contains at most $ 8!\cdot (k+1)^8$ many prisons and any $A\subseteq E(G)$ with $|A|\le k$ that intersect all of the prisons that are left in $\SSS$ intersects all prisons in $G$. We let $S$ to be the set of all vertices in these prisons. 
\end{proof}
\fi
\iflong 
\begin{proof}
    The proof is a straightforward application of Lemma~\ref{lem:SF}.
    Let $\SSS_0 = \{E(P)\mid P\text{ is a prison in }G\}$ and let us now define a sequence of length $q\ge |\SSS_0|\le |V(G)|^5$ ($q$ is to be determined later) of families of subsets of $E(G)$ interactively as follows:
    Given $\SSS_i$, $i\in \mathbb{N}$, if $|\SSS_i| < 8!\cdot (k+1)^8$, then we let $q=i$ and we stop the sequence. Else, notice that each set $S\in \SSS_i$ is a set of edges in a single prison and so $|S| = 8$. It follows from Lemma~\ref{lem:SF} that $\SSS_i$ contains a sunflower $\SSS'_i$ with $|\SSS'_i| \ge k+2$ that can be computed in polynomial time. If the prisons in $\SSS'_i$ are pairwise edge-disjoint, then we stop and return that $(G,k)$ is no-instance. This is because in this case we need at least $k+2$ edges to hit all prisons of $\SSS'_i$. Else,
    let $S_i$ be an arbitrary set of edges representing a prison in $\SSS_i'$. We set $\SSS_{i+1} = \SSS_i\setminus \{S_i\}$.

    If we did not conclude that $(G,k)$ is no-instance in the above procedure, we let $S = \bigcup_{P\in \SSS_q}V(P)$ be the set of all vertices of prisons in $\SSS_q$. Clearly if $G[S]\Delta A$ is prison-free, then $A$ intersects all prisons in $\SSS_q$. 
    We show that for every $A\subseteq E(G)$ with $|A|\le k$, $A$ intersects all sets in $\SSS_i$ if and only if it intersects all sets in $\SSS_{i+1}$. The lemma then follows by induction on $q$. 
    First note that $\SSS_{i+1}\subseteq \SSS_i=\SSS_{i+1}\cup \{S_i\}$. So we only need to show that if $A$ intersects all sets in $\SSS_{i+1}$, then it intersects also $S_i$. Note that $\SSS'_i\setminus \{S_i\}\subseteq \SSS_{i+1}$ is a sunflower with at most $k+1$ sets that pairwise intersect precisely in the same set of edges $C$ that is non-empty. Since $A$ intersects each of these sets and $|A|\le k$, it follows that $A$ has to intersect $C$. But $C\subseteq S_i$ and so $A\cap S_i \neq \emptyset$.
\end{proof}

\fi 
% \todo[inline]{Somewhere, maybe in prelims: Given a graph $G$ and a set of vertices $S\subseteq V(G)$, we denote by $\bar{S}$ the set $V(G)\setminus S$, when $G$ is clear from the context.... Maybe here, where we have fixed $G$ already ...}
For the rest of the section and of the proof, 
we let $S$ be the set computed by Lemma~\ref{lem:kernelizing_P_F}. It follows, as long as we keep $G[S]$ as the subgraph of the reduced instance, we only need to be concerned about the prisons that are created by removing some edge from $G$, as all the prisons that were in $G$ to start with are hit by a set $A$ of at most $A$ edges as long as $G[S]\Delta A$ is prison-free. 
%of the endpoints of all the edges 
%$e\in E(G)$ such that $\var{e}$ is in at least one clause in $L'_\emptyset$. 
%Note that $|S| = \mathcal{O}(k^8)$. 
%For the rest of the section, we denote $\bar{S} = V(G)\setminus S$.
Given the above, the following two reduction rules are \iflong rather \fi straightforward.

\iflong 
\begin{reductionRule}
\fi
\ifshort  
\begin{reductionRule}[$\star$]
\fi
\label{red:notSuperEdge}
    If an edge $e\in (E(G) \setminus E(G[S]))$ is not in a strict supergraph of a prison, delete it. 
\end{reductionRule}

\begin{lemma}
  Reduction Rule~\ref{red:notSuperEdge} is safe. 
\end{lemma}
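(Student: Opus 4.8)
The plan is to show that deleting such an edge $e$ does not change whether $(G,k)$ is a yes-instance, and more precisely that the prisons we must worry about are unaffected. Recall that by the choice of $S$ (Lemma~\ref{lem:kernelizing_P_F}), any set $A \subseteq E(G)$ with $|A|\le k$ such that $G[S]\Delta A$ is prison-free already intersects every prison of $G$; hence the only prisons we need to keep track of when editing are those \emph{created} by deletions, i.e. prisons $P$ in $G\Delta A$ with $E(P)\not\subseteq E(G)$. So the first step is to observe that $G[S]$ is a subgraph of $G-e$ (since $e\notin E(G[S])$), so the ``hit all old prisons'' guarantee transfers verbatim to $G-e$.

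Next I would argue the forward direction: if $A$ is a solution to $(G,k)$, then $A' = A\setminus\{e\}$ is a solution to $(G-e,k)$. The point is that a prison $P$ appearing in $(G-e)\Delta A'$ but not in $G\Delta A$ must use the pair $e$ as a non-edge. Since $e$ is not contained in any strict supergraph of a prison in $G$ — and a prison $P$ on five vertices with $e$ as a non-edge together with $e$ added would be exactly such a strict supergraph (a graph on $V(P)$ containing $P$ as a proper induced-non-edge-filled subgraph) — there is no 5-set in $G$ inducing ``a prison minus the non-edge $e$''. Adding the edges of $A'$ back only adds edges, so it cannot create such a configuration either; more carefully, any prison in $(G-e)\Delta A'$ using $e$ as non-edge would, upon putting $e$ back, give a graph on $5$ vertices in $G\Delta A$ that is a prison-plus-an-edge, contradicting either that $A$ was a solution (if the five vertices already worked in $G\Delta A$) or the hypothesis on $e$. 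The cleanest way to phrase it: every prison of $(G-e)\Delta A'$ is also a prison of $G\Delta A$, because $G\Delta A$ is a supergraph of $(G-e)\Delta A'$ on the same vertex set differing only in the pair $e$, and if some prison $P$ of $(G-e)\Delta A'$ had $e$ as a non-edge then $(G\Delta A)[V(P)]$ would be $P$ plus the edge $e$, a strict prison-supergraph; restricting to $G$ (undoing $A$) this configuration would still be a strict prison-supergraph among the edits, but by the assumption $e$ lies in no strict prison-supergraph of $G$, and since $A$ adds no edges (deletion problem) this cannot arise — I will spell this out. Hence $(G-e)\Delta A'$ is prison-free, and $|A'|\le|A|\le k$.

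For the reverse direction, if $A'$ is a solution to $(G-e, k)$ then $A'$ is also a solution to $(G,k)$: here $G\Delta A'$ is obtained from $(G-e)\Delta A'$ by adding back the single edge $e$ (assuming w.l.o.g. $e\notin A'$, which we may since $e\notin E(G-e)$ so deleting it is vacuous). Any prison created by adding $e$ would again give a strict prison-supergraph of $G$ on the five relevant vertices — contradicting the hypothesis on $e$ — so $G\Delta A'$ is prison-free as well.

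The main obstacle, and the step deserving the most care, is pinning down exactly what ``$e$ is not in a strict supergraph of a prison'' rules out and checking that the $\Delta A$ / $\Delta A'$ operations (which for the deletion problem only remove edges from $G$, never add) preserve this: I must make sure that when I ``undo'' a solution to recover a subgraph of $G$, a prison-plus-$e$ configuration in $G\Delta A$ really does descend to a strict prison-supergraph \emph{present in $G$}. Since $A$ consists of edges of $G$ that are deleted, $(G\Delta A)[W]\subseteq G[W]$ for every vertex set $W$; so if $(G\Delta A)[W]$ is a prison with $e$ added, then $G[W]$ contains that prison-with-$e$ as a subgraph, and in particular $e$ together with the five vertices of a prison sits inside a strict prison-supergraph in $G$ — contradiction. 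Writing this containment argument carefully is the crux; the rest is bookkeeping.
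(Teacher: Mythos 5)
Your forward direction is correct and matches the paper's argument: a prison of $(G-e)\Delta A'$ that is not a prison of $G\Delta A$ must have $e$ as a non-edge, re-inserting $e$ yields a prison-plus-one-edge on those five vertices, and since deletions only remove edges this lifts to a strict prison-supergraph in $G$ containing $e$, contradicting the precondition of Reduction Rule~\ref{red:notSuperEdge}.

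The backward direction, however, has a genuine gap. You claim that ``any prison created by adding $e$ would again give a strict prison-supergraph of $G$ on the five relevant vertices.'' That is not true in general. If $P$ is a prison of $G\Delta A'$ containing $e$ as an \emph{edge}, then $G[V(P)]$ equals $P$ together with whatever edges of $A'$ lie inside $V(P)$; if $A'$ deletes no edge inside $V(P)$, then $G[V(P)]=P$ is an \emph{induced prison of $G$} containing $e$, not a strict supergraph, and the hypothesis of the rule (which only forbids $e$ from lying in a \emph{strict} supergraph of a prison) gives no contradiction. Such original prisons through $e$ are entirely possible since $e$ merely lies outside $E(G[S])$. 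This case has to be closed with a different tool: because $G[S]=G'[S]$ and $G'\Delta A'$ is prison-free, the set $A'$ makes $G[S]$ prison-free, so Lemma~\ref{lem:kernelizing_P_F} guarantees that $A'$ intersects \emph{every} prison of $G$, in particular $P$ — contradicting that $P$ survives in $G\Delta A'$. You state this modulator property in your opening paragraph but never invoke it where it is actually needed; as written, your backward direction proves the wrong dichotomy and would let an original induced prison through $e$ slip past the argument.
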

\iflong 
\begin{proof}
  
  %  We prove the safeness.

    Let $G'$ be the graph obtained by applying the rule on an edge $e$. 
    Assume $(G,k)$ is a yes instance, and let $A\subseteq V(G)$ be a minimal solution. $G\Delta A$ has no prisons. Assume that $G'\Delta A$ has an induced prison. It must contain the vertices of $e$, so $G\Delta A$ has a strict supergraph of a prison containing $e$, a contradiction.

    On the other hand, suppose that $(G',k)$ is a yes instance, and let $A\subseteq V(G')$ be a minimal solution of size at most $k$. If $G\Delta A$ has an induced prison $P$, then this prison contains $e$. Therefore $P$ is a prison of $G$ (it cannot be a strict supergraph of a prison by the assumption on $e$). Note that $G[S]=G'[S]$ and so $A\cap E(G[S])$ intersects every prison that is fully contained in $G[S]$ (as none of these prisons contain $e$). Therefore, by Lemma~\ref{lem:kernelizing_P_F}, $A$ intersects every prison in $G$. Hence, $A$ already contains an edge of $P$, which contradicts the assumption that $G\Delta A$ still contains $P$. 

    This concludes the proof by showing that $(G,k)$ is a yes instance if and only if $(G',k)$ is.
\end{proof}
\fi 

\iflong 
\begin{reductionRule}
\fi
\ifshort  
\begin{reductionRule}[$\star$]
\fi
\label{red:singleEdge}
 For every prison $P$ in $G$, if $E(G[S])\cap P = \{e\}$, then remove $e$ and decrease $k$ by one.
\end{reductionRule}

\begin{lemma}
  Reduction Rule~\ref{red:singleEdge} is safe.
\end{lemma}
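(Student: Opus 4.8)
The plan is to show that if a prison $P$ in $G$ contains exactly one edge $e$ of $G[S]$, then $e$ must belong to every solution of size at most $k$, so that deleting $e$ and decrementing $k$ yields an equivalent instance. First I would argue the forward direction, which is the essential one. Suppose $(G,k)$ is a yes-instance and let $A \subseteq E(G)$ be a minimal solution with $|A| \leq k$. I claim $e \in A$. If not, then I want to produce $k+1$ prisons that are ``almost disjoint'' on the relevant part so that $A$ cannot hit all of them without using $e$ — but more directly, the cleaner route is: since $e = uv$ with $u,v \in S$, and $e$ lies in a prison $P$ with $P \cap E(G[S]) = \{e\}$, consider the graph $G[S] \Delta A$. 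By Lemma~\ref{lem:kernelizing_P_F}, since $A$ is a solution of size at most $k$ and $G \Delta A$ is prison-free (hence so is $G[S]\Delta A$, as $G[S]\Delta A$ is an induced subgraph of $G\Delta A$ once we note $S$ is a vertex set), $A$ intersects $E(G[P])$ for the prison $P$; but $E(G[P]) \cap E(G[S]) = \{e\}$ and $A$ must use an edge of $P$. This does not immediately force $e \in A$, since $A$ could use an edge of $P$ outside $G[S]$. So the real argument must be sharper: I would instead use the $(k+1)$-copies trick implicit in the setup. Since the single edge $e$ is the only $S$-edge of $P$, and $P$ has $8$ edges, the other $7$ edges have an endpoint outside $S$; but crucially, $e$ is forced because any solution not containing $e$ would have to delete one of these other $7$ edges, and we need to show this is impossible within budget $k$. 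The correct mechanism is that $e$ is a ``mandatory'' edge: from the construction of $S$ via the sunflower lemma, the prison $P$ survives into $\SSS_q$ or is dominated by a core contained in $S$-edges.

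More carefully: I expect the intended argument is that for such a prison $P$, any set $A$ with $|A|\le k$ and $G[S]\Delta A$ prison-free, combined with $G\Delta A$ prison-free, forces $e \in A$ because otherwise the prison $P$ is not destroyed by any $S$-internal edge, and the edges of $P$ outside $G[S]$ were already handled by Reduction Rule~\ref{red:notSuperEdge} — after that rule, every edge of $E(G)\setminus E(G[S])$ lies in a strict supergraph of a prison. Hmm, so the cleanest route: after exhaustive application of Rule~\ref{red:notSuperEdge}, suppose $A$ is a minimal solution with $e\notin A$. Since $G\Delta A$ has no prison, some edge $e'\in E(G[P])$ lies in $A$, with $e'\ne e$, so $e'\in E(G)\setminus E(G[S])$. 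Then I would replace $A$ by $A' = (A \setminus \{e'\}) \cup \{e\}$ and show $A'$ is still a solution: deleting $e$ instead of $e'$ destroys $P$, and I must verify no new prison is created and all old prisons remain hit. The latter follows because $e'$ lay only in strict supergraphs of prisons (by Rule~\ref{red:notSuperEdge}) together with copies of $P$-like prisons, all of which also contain $e$; this is the step I expect to be the main obstacle, requiring the structural characterization (Theorem~\ref{the:handbag}, Corollary~\ref{cor:cmd4}) to control which prisons pass through $e'$ versus $e$.

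For the backward direction, it is routine: if $(G - e, k-1)$ is a yes-instance with solution $A'$, then $A = A' \cup \{e\}$ has size at most $k$, and $G \Delta A = (G-e)\Delta A'$ is prison-free, so $(G,k)$ is a yes-instance. Combining the two directions, $(G,k)$ is a yes-instance if and only if $(G-e, k-1)$ is, establishing safety. I would conclude by noting that, as with Rule~\ref{red:notSuperEdge}, we preserve $G[S]$ (since $e \in E(G[S])$ we actually do modify $S$-edges here, so I should be careful: $e$ is removed from $G[S]$ as well, but Lemma~\ref{lem:kernelizing_P_F}'s guarantee only requires $G[S]\Delta A$ be prison-free for sets $A$ hitting all prisons — and removing the forced edge $e$ is consistent with this since every solution contains $e$). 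The main work, and the place I would spend the most care, is verifying in the forward direction that swapping $e'$ for $e$ cannot create a new induced prison and cannot un-hit any previously destroyed prison — this is where the $cmd_4$-decomposition structure and the fact that $e'$ only occurs in supergraphs of prisons get used.
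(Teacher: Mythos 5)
There is a genuine gap. You correctly identify the goal (every solution of size at most $k$ must contain $e$, after which the two directions of the equivalence are routine), and you even get within one step of the right argument when you consider $G[S]\Delta A$ and invoke Lemma~\ref{lem:kernelizing_P_F} — but you then discard that route because ``$A$ could use an edge of $P$ outside $G[S]$'' and fall back on an exchange argument (replace $e'\in A$ by $e$) whose crucial step you explicitly leave unverified. That exchange step is exactly the hard part: you would need to show that deleting $e$ instead of $e'$ neither creates a new prison nor leaves unhit some other prison that $A$ destroyed only via $e'$, and nothing in your sketch establishes this.

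The missing idea is to apply Lemma~\ref{lem:kernelizing_P_F} not to $A$ but to its restriction $A' = A\cap E(G[S])$. Since the problem is deletion-only, $G[S]\Delta A' = G[S]\Delta A$, which is prison-free because $G\Delta A$ is, and $|A'|\le |A|\le k$. The lemma therefore guarantees that $A'$ intersects $E(G[P])$ for \emph{every} prison $P$ of $G$. For the particular prison $P$ with $E(G[P])\cap E(G[S])=\{e\}$, the set $A'$ consists only of edges of $G[S]$, so the only edge of $P$ it can contain is $e$; hence $e\in A'\subseteq A$. This shows $e$ lies in every solution of size at most $k$, with no exchange argument and no appeal to Reduction Rule~\ref{red:notSuperEdge} or to the structural results, and the safety of the rule follows as in your final paragraph.
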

\iflong 
\begin{proof}
    It follows from Lemma~\ref{lem:kernelizing_P_F} that if $A\subseteq E(G)$ with $|A|\le k$ such that $G[S]\Delta A$ is prison-free, then $A$ intersects every prison in $G$. Note that for $A' = A\cap E(G[S])$, $G[S]\Delta A'$ is also prison-free. Hence $A'$ intersects $P$. Since $e$ is the only edge of $P$ in $E(G[S])$, it follows $e\in A'\subseteq A$. It follows that $e$ is in every solution of size at most $k$ and we can safely remove it and decrease $k$ by one.
\end{proof}
\fi 

Thanks to these Reduction Rules, we found a set $S$ of vertices of size polynomial in $k$ such that for any subset $S'$ of vertices of $S$ such that $G[S']$ has at most one edge, $G[\bar{S} \cup S']$ is prison-free. We note that we assume that all reduction rules are applied exhaustively; that is whenever a reduction rule is applicable, we apply it and restart the process from the beginning. Hence, in all statements in the rest of the section, we implicitely assume that none of the reduction rules can be applied.

\subsection{Consequences on the Structure of $G[\bar{S}]$}

Now we are able to show the following properties that will be useful for the kernel. 
The following lemma gives us a stronger property than just the characterisation of prison-free graphs.

\iflong 
\begin{lemma}
\fi
\ifshort  
\begin{lemma}[$\star$]
\fi\label{lem:tripartite_neighbour}
    Let $F\in cmd_3(G[\bar{S}])$ be a maximal complete multipartite subgraph of $G[\bar{S}]$ such that $F$ has exactly three classes. Then there exists $s\in S$ with $F\subseteq N(s)$. Moreover, there is a strict supergraph of a prison that contains $s$ and an edge in $F$.
\end{lemma}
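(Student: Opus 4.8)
My plan is to exploit that after exhaustive application of the reduction rules the graphs $G[\bar{S}]$, $G[\bar{S}\cup\{s\}]$ (for $s\in S$), and $G[\bar{S}\cup\{s,s'\}]$ (for an edge $ss'\in E(G[S])$) are all prison-free, together with the fact that Reduction Rule~\ref{red:notSuperEdge} forces every edge of $G[F]$ to lie in a prison-like obstruction of $G$. The first ingredient I would prove is a strengthening of prison-freeness for $F$, call it $(\star)$: \emph{no edge of $G[F]$ lies in an induced complete multipartite subgraph of $G[\bar{S}]$ with at least four classes} (in particular, in no $K_4$ of $G[\bar{S}]$). Indeed, if $uv\in E(G[F])$ did, extend the witness to some maximal such set $\hat F\in\cmd_4(G[\bar{S}])$. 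Since $G[F]$ has exactly three classes, maximality of $F$ in $\cmd_3(G[\bar{S}])$ gives $F\not\subseteq\hat F$; picking $x\in F\setminus\hat F$ and using that $G[F]$ is complete $3$-partite while $u,v$ sit in distinct classes of $\hat F$, a short case distinction on the class of $x$ produces a vertex of $F$ lying outside $\hat F$ but adjacent to two classes of $\hat F$, contradicting Theorem~\ref{the:handbag} (via Lemma~\ref{lem:kcomp}) applied to the prison-free graph $G[\bar{S}]$.

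Next I would reach into $S$. Fix an edge $uv\in E(G[F])$ (one exists, as $G[F]$ has at least three classes). As $uv\notin E(G[S])$, Reduction Rule~\ref{red:notSuperEdge} did not remove it, so $uv$ lies in a strict supergraph of a prison $H$ in $G$; $H$ contains a prison on five vertices $W'$ with $uv$ among its edges, whence $G[W']$ is a prison, a $K_5$, or $K_{2,1,1,1}$ (the complete multipartite graph with part sizes $2,1,1,1$). If $W'\subseteq\bar{S}$, the first case contradicts prison-freeness of $G[\bar{S}]$ and the other two contradict $(\star)$. Hence $H$ meets $S$, so already $H$ is a strict supergraph of a prison containing both a vertex of $S$ and the $F$-edge $uv$ — this will give the ``moreover'' once the vertex is matched to the $s$ below.

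From here I would extract $s\in S$ adjacent to all three vertices of some \emph{transversal triangle} $\{u_1,u_2,u_3\}$ of $F$ (one vertex per class), so that $\{s,u_1,u_2,u_3\}$ induces a $K_4$: $uv$ lies in a $K_4$ of $G$ meeting $S$ by the previous paragraph, and whenever an $S$-vertex $s$ is adjacent to an $F$-edge $xy$ but misses the third class $P$ of $F$, the $K_4$ through $xy$ together with a vertex of $P$ either induces a prison inside $G[\bar{S}\cup\{s\}]$, or inside $G[\bar{S}\cup\{s,s'\}]$ for an edge $ss'\in E(G[S])$ (both prison-free), or else forces a different vertex of $S$ to see a full transversal triangle. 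Given such $s$, I would then show $s$ dominates $F$: if $s\not\sim t$ with $t$ in class $P_l$, then $t\ne u_l$ and $t\sim u_i$ for $i\ne l$, $t\not\sim u_l$, $t\not\sim s$, so $\{s,u_1,u_2,u_3,t\}$ induces a prison with $K_4$-part $\{s,u_1,u_2,u_3\}$ and apex $t$ seeing the two $u_i$ with $i\ne l$; this prison lies in the prison-free graph $G[\bar{S}\cup\{s\}]$, a contradiction, so $F\subseteq N(s)$. Finally, for the ``moreover'': if some class $P_i$ of $F$ contains a further vertex $t'$, then $\{s,u_1,u_2,u_3,t'\}$ induces $K_{2,1,1,1}$, a strict supergraph of a prison containing $s$ and the $F$-edge $u_1u_2$; and if $F$ is a single triangle, the witness $H$ of Reduction Rule~\ref{red:notSuperEdge} for an $F$-edge running through the $K_4$ $\{s,u_1,u_2,u_3\}$ works — it cannot be exactly a prison, since such a prison would lie in some $G[\bar{S}\cup S']$ with $|E(G[S'])|\le 1$, which is prison-free.

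The technically most involved step is extracting the dominating $s$ (third paragraph). The $S$-vertex supplied by Reduction Rule~\ref{red:notSuperEdge} need not see the third class of $F$, and when the obstruction it supplies is an \emph{induced} prison rather than a denser graph (a $K_{2,1,1,1}$ or $K_5$), one has to iterate the prison-in-$G[\bar{S}\cup S']$ argument while using Reduction Rule~\ref{red:singleEdge} to control how $S$ meets that prison (e.g.\ to rule out a prison with exactly one edge inside $S$). All the remaining steps are short.
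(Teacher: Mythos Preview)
Your approach is the paper's: show no $F$-triangle sits in a $K_4$ of $G[\bar S]$, use Reduction Rule~\ref{red:notSuperEdge} to put an $F$-edge into a $K_4$ meeting $S$, use prison-freeness of $G[\bar S\cup S']$ with $|E(G[S'])|\le 1$ to force an $S$-vertex onto a full transversal triangle, then propagate to all of $F$. Two comments.

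First, your ``technically most involved step'' needs no iteration; it is a two-case split. With $\{u,v,p,q\}$ the $K_4$, $p\in S$, and $w$ in the third class of $F$: if $wp\in E(G)$ set $s=p$; otherwise $\{u,v,p,q,w\}$ cannot be a prison (it lies in $G[\bar S\cup\{p\}]$ or in $G[\bar S\cup\{p,q\}]$ with $pq\in E(G[S])$), hence $wq\in E(G)$, so $\{u,v,w,q\}$ is a $K_4$ containing the $F$-triangle and $(\star)$ gives $q\in S$, $s=q$.

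Second, your separate treatment of the ``moreover'' is where the argument slips. For $|F|\ge 4$ the $K_{2,1,1,1}$ on $\{s,u_1,u_2,u_3,t'\}$ is fine, but for $|F|=3$ the sentence ``the witness $H$ of Reduction Rule~\ref{red:notSuperEdge} for an $F$-edge running through the $K_4$ $\{s,u_1,u_2,u_3\}$ works'' is not justified: that rule hands you \emph{some} strict supergraph of a prison through the edge, with no control over which $K_4$ it contains, so there is no reason it contains your $s$. The fix (and what the paper does) is not to decouple the two conclusions: in either case of the previous paragraph the five vertices $\{u,v,w,p,q\}$ already induce $K_5$ or $K_5-e$, a strict supergraph of a prison containing $s\in\{p,q\}$ and the $F$-edge $uv$.
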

\ifshort
\begin{proof}[Proof Sketch]
    Since $F$ has three classes, there are $u,v,w$ such that $uvw$ is a triangle in $G[\bar{S}]$. %We first show that there is $s\in S$ such that $\{u,v,w\}\subseteq N(s)$, and then show that this implies that also $F\subseteq N(s)$.
    Due to Reduction Rule~\ref{red:notSuperEdge}, each edge of $F$ is in a strict supergraph of a prison and hence in $K_4$ in $G$. Now given a $K_4$ $(a,b,u,v)$ that contains $uv$, we conclude that $w$ is adjacent to either $a$ or $b$, else $(a,b,u,v,w)$ induces a prison with at most one edge in $S$ and Reduction Rule~\ref{red:singleEdge} applies. Hence $\{u,v,w\}$ is a subset of a $K_4$, say $(u,v,w,s)$. As a consequence of Lemma~\ref{lem:kcomp} and since $G[\bar{S}]$ is prison-free, one can show that $F$ has to be fully included in the $K_4$-free part of $G[\bar{S}]$, and hence $s\in S$. Now, any vertex $x\in F\setminus \{u,v,w\}$ is adjacent to exactly two vertices in $\{u,v,w\}$ and hence if $sx\notin E(G)$, then $(u,v,w,s,x)$ induces a prison without an edge in $S$, which is impossible due to the construction of $S$.
\end{proof}
\fi

\iflong
\begin{proof}
    Since $F$ has three classes, there are $u,v,w$ such that $uvw$ is a triangle in $G[\bar{S}]$. Let us first show that there is $s\in S$ such that $\{u,v,w\}\subseteq N(s)$, and then show that this implies that also $F\subseteq N(s)$.
    
    By Lemma~\ref{lem:kcomp}, every $K_4$ in $G[\bar{S}]$ is included in some $F'\in cmd_4(G[\bar{S}])$, moreover every vertex in $V(G[\bar{S}])\setminus F'$ has neighbors in at most one class of $F'$. Therefore, if a triangle in $F$ is in a $K_4$, then $F\subseteq F'$ for some $F'\in cmd_4(G[\bar{S}])$, which contradicts the fact that $F$ is maximal complete multipartite subgraph with only three classes. Hence, we can assume that no triangle in $F$ is in a $K_4$ in $G[\bar{S}]$.
    By Reduction Rule \ref{red:notSuperEdge} we know that $uv$ is in a strict 
  supergraph of a prison. Let $P = (u,v,a,b,c)$ be the strict supergraph of a prison. If $w\in \{a,b,c\}$, say without loss of generality $w=a$, then since $P$ contains at most one non-edge, either $b$ or $c$ forms a $K_4$ with $u,v,w$. Since $u,v,w$ are not in a $K_4$ in $G[\bar{S}]$, it follows that there is a vertex in $S$ that is adjacent to the triangle $(u,v,w)$ and that is a strict supergraph of a prison $P$ that contains the edge $uv$ in $F$. On the other hand, if $w\notin \{a,b,c\}$, then again, we can assume without loss of generality that $(u,v,a,b)$ is a $K_4$. Moreover, $w$ is adjacent to $u$ and $v$. By Reduction Rule~\ref{red:singleEdge} $(u,v,w,a,b)$ cannot be a prison, as it would contain only a single edge in $S$. Hence $w$ is adjacent either to $a$ or $b$, say $a$. Then, $(u,v,w,a)$ is a $K_4$ and $a\in S$, moreover, $a$ is in a strict supergraph of a prison $(u,v,w,a,b)$ that contains the edge $uv$ in $F$. And again we found a vertex $s\in S$ such that $\{u,v,w\}\subseteq N(s)$.

  Let us now show that if for $s\in N(s)$ and a triangle $(u,v,w)$ in $F$ we have $\{u,v,w\}\subseteq N(s)$, then $F\subseteq N(s)$. Let $x\in F\setminus \{u,v,w\}$. Clearly $x$ is adjacent to exactly two out of three vertices in $\{u,v,w\}$, as $F$ is complete multipartite graph with three classes. Moreover, $(u,v,w,x,s)$ cannot be a prison, by the construction of $S$, since it is edge-disjoint from $G[S]$. It follows that  $(u,v,w,x,s)$ is a strict supergraph of a prison, with only non-edge between $x$ and one vertex of $u,v, w$ and $xs\in E(G)$. 
\end{proof}
\fi 
\iflong 
\begin{lemma}
\fi
\ifshort  
\begin{lemma}[$\star$]
\fi\label{lem:superprisonfree} 
    For all $F\in cmd_3(G[\bar{S}])$, for all $x\in \bar{S} \setminus F$, $N_G(x)$ intersects at most one class of $F$.
\end{lemma}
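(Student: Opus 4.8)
The plan is to reduce, in both possible cases for $F$, to the structural characterization Theorem~\ref{the:handbag} applied to a suitable \emph{induced prison-free subgraph} of $G$. The underlying observation is that for $x\in\bar S$ and any complete multipartite piece $F\subseteq\bar S$, the set $N_G(x)\cap F$ coincides with the neighbourhood of $x$ inside any induced subgraph $G[W]$ with $\bar S\subseteq W$; so it suffices to find such a $G[W]$ that is prison-free and in which $F$ (possibly enlarged by vertices of $S$) is part of a maximal complete multipartite piece with at least four parts. The first case is when $F$ already has at least four parts: then $F\in cmd_4(G[\bar S])$ — any strict extension of $F$ to a complete multipartite subgraph of $G[\bar S]$ with at least four parts would in particular have at least three parts, contradicting maximality of $F$ in $cmd_3(G[\bar S])$ — and since $G[\bar S]$ is prison-free (the property of $S$ established above, taken with $S'=\emptyset$), Theorem~\ref{the:handbag} directly gives that $N_{G[\bar S]}(x)$, hence $N_G(x)\cap F$, meets at most one part of $F$.

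The remaining case is when $F$ has exactly three parts $P_1,P_2,P_3$. Here I would invoke Lemma~\ref{lem:tripartite_neighbour} to obtain a vertex $s\in S$ with $F\subseteq N_G(s)$, and then work in $G':=G[\bar S\cup\{s\}]$, which is prison-free because $G[\{s\}]$ has at most one edge. Since $s$ is adjacent to all of $F$, the set $\hat F:=F\cup\{s\}$ induces in $G'$ a complete multipartite graph with the four parts $P_1,P_2,P_3,\{s\}$. The key step is to verify that $\hat F$ is inclusion-wise maximal in $G'$ among vertex sets inducing a complete multipartite graph with at least four parts: if $\hat F\subsetneq\hat F'$ were such a set, then $\hat F'\setminus\{s\}$ lies in $\bar S$, induces a complete multipartite subgraph of $G[\bar S]$ (an induced subgraph of a complete multipartite graph), still has at least three parts (deleting the single vertex $s$ destroys at most one part), and strictly contains $F$ — contradicting $F\in cmd_3(G[\bar S])$. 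Thus $\hat F\in cmd_4(G')$, and as $x\in\bar S\setminus F$ is a vertex of $V(G')\setminus\hat F$, Theorem~\ref{the:handbag} applied to $G'$ shows that $N_{G'}(x)$ meets at most one part of $\hat F$; since $s\notin F$ and $N_G(x)\cap F=N_{G'}(x)\cap F$, we conclude that $N_G(x)$ meets at most one of $P_1,P_2,P_3$, as required.

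The only delicate point — the main obstacle — is the maximality argument for $\hat F$ in the three-part case: one has to be sure that removing the single modulator vertex $s$ when passing from $G'$ to $G[\bar S]$ neither drops the number of parts below three nor destroys strict containment of $F$, so that the contradiction with $F\in cmd_3(G[\bar S])$ genuinely goes through. Everything else is routine: the identification of neighbourhoods across induced subgraphs, the elementary comparison of $cmd_3$ with $cmd_4$, and the appeal to the already-established Theorem~\ref{the:handbag}.
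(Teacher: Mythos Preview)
Your argument is correct. Both you and the paper handle the case $F\in cmd_4(G[\bar S])$ identically, by invoking the structure theorem for prison-free graphs on $G[\bar S]$. The difference is in the three-class case. The paper argues by explicit contradiction: assuming $x$ sees an edge $uv$ of $F$, it picks a third-class vertex $w$, uses the vertex $s\in S$ from Lemma~\ref{lem:tripartite_neighbour} to form the $K_4$ $\{s,u,v,w\}$, rules out $xw\in E(G)$ (which would create a $K_4$ inside $\bar S$ on a triangle of $F$), deduces $xs\in E(G)$, then uses maximality of $F$ to locate a non-neighbour $a$ of $x$ in the class of $u$ or $v$, and exhibits the concrete prison $\{w,v,a,s,x\}$ with no edge in $G[S]$.

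Your route is more conceptual: you adjoin the same vertex $s$ to $F$, observe that $\hat F=F\cup\{s\}$ is complete multipartite with four parts in the prison-free graph $G'=G[\bar S\cup\{s\}]$, and then reduce everything to a single invocation of Theorem~\ref{the:handbag}. The maximality check you flagged is indeed the only nontrivial step, and your justification is sound: any strict complete-multipartite extension $\hat F'\supsetneq\hat F$ in $G'$ with at least four parts yields, after deleting $s$, a complete-multipartite subgraph of $G[\bar S]$ with at least three parts strictly containing $F$, contradicting $F\in cmd_3(G[\bar S])$. Your approach packages the case analysis inside Theorem~\ref{the:handbag} rather than redoing it by hand; the paper's argument is more self-contained but longer. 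Both rely on the same two ingredients --- Lemma~\ref{lem:tripartite_neighbour} and the prison-freeness of $G[\bar S\cup S']$ when $G[S']$ has at most one edge --- so neither is strictly more elementary.
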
\ifshort
\begin{proof}[Proof Sketch]
    The Lemma follows for $F\in cmd_4(G[\bar{S}])$ by Lemma~\ref{lem:kcomp}. Hence, we can assume $F\in cmd_3(G[\bar{S}])\setminus cmd_4(G[\bar{S}])$. For the sake of contradiction, assume that for $x\in \bar{S}\setminus F$, we have that $N(x)\cap F$ contains an edge $uv$. Since $F$ has three classes, for every $w$ in the class $C$ that does not contain $u$ nor $v$, we have that $uvw$ is a triangle. Moreover, similarly as in the proof of Lemma~\ref{lem:tripartite_neighbour}, one can show using Lemma~\ref{lem:kcomp} that $(u,v,w,x)$ cannot be $K_4$ and so $wx\notin E(G)$. Since, $F$ is inclusion maximal, there is $a\in F\setminus C$ such that $xa\notin E(G)$. 
    By Lemma~\ref{lem:tripartite_neighbour}, there is $s\in S$ with $F\subseteq N(s)$. But then either $(u,v,w,s,x)$ or $(u,v,a,s,x)$ induces a prison with no edge in $G[S]$ (depending on whether $sx\in E(G)$ or not), which is impossible due to the construction of $S$. 
\end{proof}
\fi
\iflong
\begin{proof}
For all $F\in cmd_4(G[\bar{S}])$, the lemma follows from Lemma~\ref{lem:kcomp}, since $G[\bar{S}]$ is prison-free.  We will prove it for $F\in cmd_3(G[\bar{S}])$ with exactly three classes. 

Note that by Lemma~\ref{lem:kcomp}, 
 every $K_4$ in $G[\bar{S}]$ is included in some $F'\in cmd_4(G[\bar{S}])$, moreover every vertex in $V(G[\bar{S}])\setminus F'$ has neighbors in at most one class of $F'$. Hence, similarly as in the proof of Lemma~\ref{lem:tripartite_neighbour}, we can assume that no triangle in $F$ is in a $K_4$ in $G[\bar{S}]$. 
 Assume for a contradiction that for some $x\in \bar{S} \setminus F$, the neighborhood of $x$ intersects two classes of $F$. That is an edge $uv$ in $F$ such that $\{u,v\}\subseteq N_G(x)$. Moreover, since $F$ has three classes, there is $w$ such that $uvw$ is a triangle in $G[\bar{S}]$. 
By Lemma~\ref{lem:tripartite_neighbour}, there is $s\in S$ such that $F\subseteq N(s)$ and  $(s,u,v,w)$ is a $K_4$. Moreover, $(s,u,v,w,x)$ cannot be a prison, as it is edge-disjoint from $G[S]$ and so either $sx\in E(G)$ or $wx\in E(G)$.
  If $wx\in E(G)$, then $(x,u,v,w)$ induces $K_4$ in $G[\bar{S}]$, which, as we already argued, is impossible. Hence, $xs\in E(G)$. Note that $x$ cannot be adjacent to any vertex $w'$ in the same class as $w$, as $(x,u,v,w')$ would be a $K_4$ in $G[\bar{S}]$. Since $F$ is inclusion maximal, it follows that there is a vertex $a$ in the class of $u$ or in a class of $v$ that is not adjacent with $x$. Without loss of generality assume $a$ is in the class of $u$. Since $F\subseteq N(s)$, $sa$ is an edge. However, then $(w,v,a,s,x)$ 
  %or $(w,u,a,s,x)$ 
  is a prison, which contradicts construction of $S$, since it is edge-disjoint from $G[S]$.  
\end{proof}
\fi 

\iflong 
\begin{lemma}
\fi
\ifshort  
\begin{lemma}[$\star$]
\fi\label{lem:noCrossPrison3}
    If a supergraph $P$ of a prison of $G$ has an edge in $F\in cmd_3(G[\bar{S}])$, then $P\subseteq S\cup F$.
\end{lemma}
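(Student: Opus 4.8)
The plan is a proof by contradiction. Suppose $P$ is a strict supergraph of a prison, $uv\in E(G[P])\cap E(G[F])$, but $P\not\subseteq S\cup F$, so there is a vertex $x\in P$ with $x\in\bar S\setminus F$. Two standing facts carry the argument. First, a strict supergraph of a prison is (on its five vertices) $K_5$ or $K_5-e$, so every vertex of $P$ has degree $\ge 3$ in $G[P]$, i.e.\ misses at most one other vertex of $P$. Second — and this is the crucial one — by exhaustiveness of Reduction Rules~\ref{red:notSuperEdge} and~\ref{red:singleEdge} (concretely, by the derived fact that $G[\bar S\cup S']$ is prison-free whenever $G[S']$ has at most one edge), \emph{every induced prison $P'$ of $G$ has at least two edges with both endpoints in $S$}: apply that fact with $S'=P'\cap S$. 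Finally, $uv\in E(G[F])$ forces $u$ and $v$ into distinct classes of the complete multipartite graph $G[F]$. Note that this argument never needs to know whether $F$ has exactly three classes or more, so it is uniform over $cmd_3(G[\bar S])$.

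The first half of the proof locates the five vertices of $P$. If $x$ were adjacent to both $u$ and $v$, then $N_G(x)$ would meet two classes of $F$, contradicting Lemma~\ref{lem:superprisonfree}; since $x$ misses at most one vertex of $P$, we may assume $x\sim u$, $x\not\sim v$, and consequently $G[P]=K_5-xv$, so all pairs of $P$ other than $xv$ are edges. Write $P=\{u,v,x,p,q\}$. Since $N_G(x)\cap F$ lies inside $u$'s class and $x$ is adjacent to $p$ and $q$, neither $p$ nor $q$ can lie in $F$ (such a vertex would be in $u$'s class, hence non-adjacent to $u$, contradicting $up,uq\in E(G)$). Moreover each of $p,q$ is adjacent to both $u$ and $v$, which lie in different classes of $F$, so Lemma~\ref{lem:superprisonfree} also forbids $p,q\in\bar S\setminus F$. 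Hence $p,q\in S$ and $P\cap F=\{u,v\}$.

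The second half produces the contradiction. Pick $w$ in a third class of $F$; it is a vertex distinct from $u,v,x,p,q$, it lies in $\bar S$, it is adjacent to $u$ and $v$ (different classes), and $w\not\sim x$ (as $w$ lies outside $u$'s class, hence outside $N_G(x)\cap F$). Now split on the adjacencies $wp,wq$. If both are non-edges, then $G[\{u,v,w,p,q\}]$ has exactly the two non-edges $wp,wq$, which share $w$, so it induces a prison; but the only vertices of this prison in $S$ are $p$ and $q$, so it has just the single $S$-internal edge $pq$, contradicting the standing fact. Otherwise, say $wp\in E(G)$; then $G[\{u,v,w,x,p\}]$ has exactly the two non-edges $vx,wx$, which share $x$, so it induces a prison; but its only vertex in $S$ is $p$, so it has no $S$-internal edge at all — again a contradiction. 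Either way $x$ cannot exist, so $P\subseteq S\cup F$.

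The routine part will be the edge-count bookkeeping in those two ``prison in $G$'' checks: one must confirm that all eight prescribed edges are present, which follows because $G[P]=K_5-xv$ is exactly $K_5$ minus that single pair and because $G[F]$ is complete multipartite. I expect the only genuinely load-bearing point to be the second standing fact; it is what converts the purely structural neighbourhood restriction of Lemma~\ref{lem:superprisonfree} into an impossibility, by exhibiting small prisons whose edges are forced almost entirely outside $S$.
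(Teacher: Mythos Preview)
Your argument is correct and matches the paper's proof almost step for step: the paper also uses Lemma~\ref{lem:superprisonfree} to force the single non-edge of $P$ to be $xv$, to push the two remaining vertices into $S$, and then picks a vertex $w$ from a third class of $F$ to manufacture a prison with at most one edge inside $G[S]$, contradicting the reduction rules. The only presentational gap is that you open with ``Suppose $P$ is a strict supergraph of a prison'' without justifying it; the lemma is stated for arbitrary supergraphs, so you should first note that if $P$ were itself a prison, then with $u,v,x\notin S$ it would have at most one $S$-internal edge, contradicting your own standing fact---this is exactly how the paper disposes of that case.
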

\ifshort
\begin{proof}[Proof Sketch]
    For the sake of contradiction, let $P$ be a supergraph of a prison of $G$ that has an edge $uv$ in $F\in cmd_3(G[\bar{S}])$ and there is $x\in P\setminus (S\cup F)$.
    One can show that due to Lemma~\ref{lem:superprisonfree} and application of Reduction Rule~\ref{red:singleEdge}, $P = (a,b,u,v,x)$ is a strict supergraph of a prison with $\{a,b\}\subseteq S$ and $vx$ (or $ux$) being the only non-edge of $P$. Moreover, $F$ contains triangle $uvw$, as it has at least three classes and $wx\notin E(G)$ by Lemma~\ref{lem:superprisonfree}. Due to Reduction Rule~\ref{red:singleEdge},  $(u,v,w,a,b)$ cannot induce a prison, and either $wa\in E(G)$ or $wb\in E(G)$. If $wa\in E(G)$ (resp. $wb\in E(G)$), then $(x,a,u,w,v)$ (resp. $(x,b,u,w,v)$) induces a prison in $G$ with at most one edge in $G[S]$, contradicting application of Reduction Rule~\ref{red:singleEdge}.
\end{proof}
\fi

\iflong
\begin{proof}
    Let $P$ be a supergraph of a prison of $G$ that has an edge $uv$ in $F\in cmd_3(G[\bar{S}])$.
    
    Assume that there is $x\in P$ neither in $S$ nor $F$. Additionally, note that $P$ contains at most two vertices and hence at most one edge in $S$. It follows from construction of $S$ and Reduction Rule~\ref{red:singleEdge} that $P$ is a strict supergraph of a prison. 
    Since $x\in \bar{S}\setminus F$, $\{u,v\} \not\subseteq N_G(x)$ by Lemma \ref{lem:superprisonfree}. Without loss of generality $vx$ is the only non-edge between vertices of $P$, and $u$ is a neighbor of $x$. If a third vertex of $P$ is in $F$, $x$ would see more than one class from $F$ which contradicts Lemma~\ref{lem:superprisonfree}. If a second vertex is out of $F\cup S$, it would see the classes of $u$ and $v$ which again contradicts Lemma~\ref{lem:superprisonfree}. So the other two vertices of $P$, noted $a,b$, are in $S$.

     Note that $abuv$ is isomorphic to a $K_4$. Let $w\in F$ be in a class of $F$ different than $u$ and $v$. Since $(w,u,v,a,b)$ contains only one edge in $S$, it cannot be isomorphic to a prison. Hence $w$ is adjacent to either $a$ or $b$. Without loss of generality we can assume that $w$ is adjacent to $a$. But then $(x,v,a,u,w)$ is a prison that is edge-disjoint from $G[S]$, which contradicts the construction of $S$. Therefore, there is no vertex $x\in P$ outside of $S\cup F$ and the lemma follows.
\end{proof}
\fi 

%Leaving the triangles for later, 
Let's now focus on the edges of $\bar{S}$ that are not in any triangle. We show that since Reduction Rules~\ref{red:notSuperEdge}~and~\ref{red:singleEdge} has been exhaustively applied, even these edges can be partitioned to maximal complete bipartite subgraphs. 

Let $B$ be the set of edges of $G[\bar{S}]$ that are not in any triangle in $G[\bar{S}]$. For all $e=ab\in E(G[S])$, we note $B_e=B\cap \{uv\colon u,v\in N(a)\cap N(b)\}$. Note that every edge $f\in B$ belongs to a $K_4$ in $G$ due to the application of Reduction Rule~\ref{red:notSuperEdge}. Since, $f$ is not in any triangle in $G[\bar{S}]$, it follows that it is in some $B_e$ for $e\in E(G[S])$. On the other hand, we show that if $B_{e_1}\cap B_{e_2} \neq \emptyset$, then $B_{e_1} = B_{e_2}$, otherwise $G$ contains a prison with at most one edge in $G[S]$, which gives us the following lemma. 

\iflong 
\begin{lemma}
\fi
\ifshort  
\begin{lemma}[$\star$]
\fi\label{lem:weakCrosscmc2}
$\{B_e\mid e\in E(G[S])\}$ is a partition of $B$.
\end{lemma}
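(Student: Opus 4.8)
The plan is to show that $\{B_e \mid e \in E(G[S])\}$ forms a partition of $B$, where I must verify two things: that every edge of $B$ lies in some $B_e$, and that the sets $B_e$ are pairwise disjoint (equivalently, that $B_{e_1} \cap B_{e_2} \neq \emptyset$ implies $B_{e_1} = B_{e_2}$). The first part is essentially already argued in the paragraph preceding the lemma: take $f = uv \in B$. By Reduction Rule~\ref{red:notSuperEdge}, $f$ lies in a strict supergraph of a prison, hence in some $K_4$ in $G$. Since $f$ is in no triangle of $G[\bar{S}]$, the other two vertices of this $K_4$ cannot both lie in $\bar{S}$ with both adjacent to $u$ and $v$ (that would form a triangle in $G[\bar S]$ on $\{u,v\}$ and one of them); in fact at least one of the two extra vertices must be in $S$. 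A short case analysis shows the two extra vertices $a, b$ both lie in $S$: if only one, say $a \in S$ and $b \in \bar S$, then $uvb$ would be a triangle in $G[\bar S]$ containing $f$, contradiction. So $e = ab \in E(G[S])$ and $u, v \in N(a) \cap N(b)$, giving $f \in B_e$.

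For the disjointness, I would argue by contradiction. Suppose $f = uv \in B_{e_1} \cap B_{e_2}$ with $e_1 = ab$ and $e_2 = cd$ distinct edges of $G[S]$, so both $\{a,b\}$ and $\{c,d\}$ are contained in $N(u) \cap N(v)$, and $\{a,b\} \neq \{c,d\}$. Pick $w \in \{c,d\} \setminus \{a,b\}$ (nonempty since the edges differ, and at least one endpoint of $e_2$ is not an endpoint of $e_1$; if both endpoints coincide the edges are equal). Then $w \in S$ and $wu, wv \in E(G)$. Consider the five vertices $\{u, v, a, b, w\}$: we have the $K_4$ on $\{u,v,a,b\}$ (since $ab \in E(G)$, $au,av,bu,bv \in E(G)$, and $uv \in E(G)$), plus $w$ adjacent to both $u$ and $v$. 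For this not to contain an induced prison, $w$ must also be adjacent to $a$ or to $b$. Now I want to derive a contradiction with Reduction Rule~\ref{red:singleEdge}: the vertex set $\{u,v,a,b,w\}$ has only the edge $ab$ (and possibly $wa$ or $wb$) inside $G[S]$, so I need to engineer a five-vertex set inducing a prison whose unique $G[S]$-edge count is at most one. The key observation is that $uv \in E(G[\bar S])$ is itself not an $S$-edge, so in any prison on vertices $\{u, v\} \cup \{x, y, z\}$ with at most one of $x, y, z$ in $S$, Reduction Rule~\ref{red:singleEdge} would force deletion — but we seek a contradiction, so this subset must not induce a prison, which constrains the adjacencies and propagates the contradiction.

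Concretely, assume without loss of generality $wa \in E(G)$; then on $\{u, v, a, w, b\}$ we may instead examine whether $wb \in E(G)$. If $wb \notin E(G)$, then $\{u, v, a, w, b\}$ has edges $uv, ua, va, ab, uw, vw, aw$ and the single non-edge $wb$: that is precisely a strict supergraph of a prison, not a prison, so no direct contradiction — hence I should instead pick the fifth vertex from $B_{e_2}$'s structure more carefully, or use a different triangle. The cleaner route: since $\{a, b\} \neq \{c, d\}$, either $\{c,d\}$ shares exactly one vertex with $\{a,b\}$ or none. If disjoint, take $w = c$; then $\{u, v, a, b, w\}$ has $w \in S$ with $wu, wv \in E(G)$ and $wa, wb$ unknown, and I use the prison-freeness forced by construction of $S$ (since this set is edge-disjoint from $G[S]$ once we note $ab$ is the only possible $G[S]$-edge, and after Reduction Rule~\ref{red:singleEdge} any prison with one $G[S]$-edge is already deleted) to force $w$ adjacent to $a$ or $b$; then swapping roles of $e_1$ and $e_2$ and using $a$ (or $b$) as the fifth vertex relative to $\{u, v, c, d\}$ yields $a$ adjacent to $c$ or $d$, and iterating this produces enough edges to build $\{u,v\} \cup \{a, c, \text{something}\}$ inducing a genuine prison with at most one $S$-edge, the contradiction.

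The main obstacle I anticipate is the bookkeeping in this last case analysis: correctly tracking which of the up-to-four vertices $a, b, c, d$ are mutually adjacent, which non-edges remain, and ensuring that the five-vertex set ultimately exhibited is an \emph{induced} prison (not merely a strict supergraph) with at most one edge inside $G[S]$ so that either the construction of $S$ (for the edge-disjoint-from-$G[S]$ case) or Reduction Rule~\ref{red:singleEdge} (for the one-$S$-edge case) delivers the contradiction. I expect the argument works out to: if $B_{e_1} \cap B_{e_2} \neq \emptyset$ on edge $uv$, then every vertex of $\{a, b\} \cup \{c, d\}$ is adjacent to both $u$ and $v$, all of $a, b, c, d$ are pairwise adjacent (forced by repeated prison-avoidance), hence $N(u) \cap N(v) \supseteq \{a,b,c,d\}$ induces a clique together with the non-$S$ vertices, and $B_{e_1} = \{uv\} = B_{e_2}$ follows because the $B$-edge in the common neighborhood is determined; more precisely $B_{e_1}$ and $B_{e_2}$ both equal the set of $B$-edges inside the single maximal complete multipartite piece of $G[\bar S]$ containing $uv$, which is independent of which $S$-edge we used to witness it.
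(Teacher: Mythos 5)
Your first half (every edge of $B$ lies in some $B_e$) is correct and is exactly the paper's argument: the $K_4$ containing $uv$ guaranteed by Reduction Rule~\ref{red:notSuperEdge} must have both of its other vertices in $S$, since a vertex of $\bar{S}$ adjacent to both $u$ and $v$ would put $uv$ in a triangle of $G[\bar{S}]$.

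The disjointness half has a genuine gap. To show that $B_{e_1}\cap B_{e_2}\neq\emptyset$ implies $B_{e_1}=B_{e_2}$ you must rule out the existence of a \emph{second} edge $xy\in B_{e_2}\setminus B_{e_1}$, but your argument never introduces such an edge: it only constrains adjacencies between the shared edge $uv$ and the four $S$-vertices $a,b,c,d$. Even the strongest conclusion reachable along that route --- that $a,b,c,d$ are pairwise adjacent --- would not yield $B_{e_1}=B_{e_2}$, because an edge $xy\in B$ can have both endpoints in $N(a)\cap N(b)$ while $x\notin N(c)$; nothing about the clique structure on $\{a,b,c,d\}$ forbids this. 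Your parenthetical ``$B_{e_1}=\{uv\}=B_{e_2}$'' is false in general ($B_e$ typically contains many edges), and the fallback appeal to ``the single maximal complete multipartite piece of $G[\bar S]$ containing $uv$, which is independent of which $S$-edge we used'' is circular: that independence is essentially the statement being proven (in combination with Lemma~\ref{lem:bipart}). You also concede a dead end mid-argument (the five-vertex set you build is only a strict supergraph of a prison) and replace the rest with ``iterating this produces enough edges,'' which is not a proof. The paper's proof instead starts from a witness $xy\in B_{e_2}\setminus B_{e_1}$, picks an endpoint $x$ with $x\notin\{u,v\}$ and $xa_1\notin E(G)$, and uses two prison-avoidance steps (on $(x,a_2,b_2,u,v)$, forcing $xv\in E(G)$, and on $(a_1,u,v,a_2,b_2)$, forcing $a_1b_2\in E(G)$) together with the fact that $uv$ lies in no triangle of $G[\bar S]$ (so $xu\notin E(G)$) to exhibit the induced prison $(x,b_2,v,u,a_1)$ with a single edge in $G[S]$, contradicting the exhaustive application of Reduction Rule~\ref{red:singleEdge}. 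Some step of this kind, engaging with the differing edge, is indispensable.
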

\iflong
\begin{proof}
    Let $uv\in B$. This edge is in a $K_4$ $uvab$ by Reduction Rule \ref{red:notSuperEdge}. It is not in a triangle of $G[\bar{S}]$ so $ab$ is an edge of $G[S]$ that has $uv$ as a common neighbor, e.g. $uv\in N_G(ab)$. So every element of $B$ is in the common neighborhood of an edge of $G[S]$.

    Let $e_1,e_2 \in E(G[S])$, noted $e_1=a_1b_1, e_2=a_2b_2$. Assume $B_{e_1} \neq B_{e_2}$ but there is a common edge $uv$ and an edge $xy$ that is not in both $B_{e_1}$ and $B_{e_2}$. Without loss of generality, we can say that the edge $xy$ is in $B_{e_2}$ and is not in $B_{e_1}$. It follows that at least one of vertices among $x,y$, say $x$, is not endpoint of the edge $uv$, i.e., $u\neq x\neq v$, and at the same time $x$ is not adjacent to one of the vertices $a_1$ or $b_1$, say $x$ is not adjacent to $a_1$ and there
    is no edge $xa_1$ (if $x$ is adjacent to both $a_1$ and $b_1$, then $y$ is not adjacent to one of them). Since $(x,a_2,b_2,u,v)$ is not a prison, there is an edge between $x$ and $uv$, we can assume without loss of generality that there is edge $xv$. Since $(a_1,u,v,a_2,b_2)$ is also not a prison, there is an edge $a_1a_2$ or $a_1b_2$, say $a_1b_2$. Moreover, note that $uv$ is not in any triangle in $G[\bar{S}]$, hence $xu$ is not an edge. It follows that $(x,b_2,v,u,a_1)$ is a prison which contradicts that Reduction Rule~\ref{red:singleEdge} has been applied exhaustively and the lemma follows.
\end{proof}
\fi 
The following lemma is a straightforward consequence of Reduction Rule~\ref{red:singleEdge}, since for every $e\in E(S)$, $G[\bar{S}\cup e]$ is prison-free and hence $N_G(e)\cap \bar{S}$ is $\bar{P_3}$-free.

\iflong 
\begin{lemma}
\fi
\ifshort  
\begin{lemma}[$\star$]
\fi\label{lem:bipart}
    For all $e\in E(G[S])$, $N_G(e)\cap \bar{S}$ is complete multipartite and if $B_e$ is non empty, $N_G(e)$ is a complete bipartite graph.
\end{lemma}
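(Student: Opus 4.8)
The plan is to reduce both assertions to a single forbidden-subgraph observation about the set $X = N_G(e)\cap\bar{S}$, where $e=ab$. The key input is the fact recorded just before the lemma: since $\{a,b\}$ induces exactly one edge of $G[S]$ and Reduction Rule~\ref{red:singleEdge} (together with Reduction Rule~\ref{red:notSuperEdge}) has been applied exhaustively, $G[\bar{S}\cup\{a,b\}]$ is prison-free.

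First I would establish that $G[X]$ is $\overline{P_3}$-free, which is equivalent to $G[X]$ being complete multipartite (its complement is then $P_3$-free, hence a disjoint union of cliques). Suppose towards a contradiction that there are distinct $u,v,w\in X$ with $uv\in E(G)$ but $uw,vw\notin E(G)$. Every vertex of $X$ is adjacent to both $a$ and $b$, and $ab\in E(G)$; moreover, since $\{u,v,w\}\subseteq\bar{S}$ and $a,b\in S$, the five vertices $a,b,u,v,w$ are pairwise distinct. Among them the only non-adjacent pairs are $\{u,w\}$ and $\{v,w\}$, which share the vertex $w$, so $G[\{a,b,u,v,w\}]$ is isomorphic to the prison (the complement of $P_3+2K_1$), and it lies inside $G[\bar{S}\cup\{a,b\}]$ — contradicting prison-freeness of that graph. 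Hence $N_G(e)\cap\bar{S}$ is complete multipartite.

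For the second part I would take any edge $uv\in B_e$. By the definition of $B_e$ we have $u,v\in N_G(a)\cap N_G(b)$ and $uv\in E(G[\bar{S}])$, so $u,v\in X$ and $uv$ is an edge of $G[X]$; moreover $uv$ lies in no triangle of $G[\bar{S}]$, hence in no triangle of the induced subgraph $G[X]$. If the complete multipartite graph $G[X]$ had at least three classes, then picking a vertex $x$ from a class containing neither $u$ nor $v$ would make $\{u,v,x\}$ a triangle of $G[X]$, a contradiction; so $G[X]$ has at most two classes, and since it contains the edge $uv$ it has exactly two, that is, $G[N_G(e)\cap\bar{S}]$ is complete bipartite, as claimed.

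I do not expect any genuine obstacle: the statement is essentially immediate once the earlier structural facts are available. The only points needing a little care are invoking the correct prison-freeness guarantee in the first step — namely that adjoining to $\bar{S}$ the two endpoints of a single edge of $G[S]$ keeps the graph prison-free, which is exactly what exhaustive application of Reduction Rules~\ref{red:notSuperEdge} and~\ref{red:singleEdge} provides — and the bookkeeping that $a,b,u,v,w$ are five distinct vertices, which follows from $\{u,v,w\}\subseteq\bar{S}$ and $a,b\in S$.
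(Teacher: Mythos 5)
Your proof is correct and follows essentially the same route as the paper's: exclude an induced $\overline{P_3}$ in $N_G(e)\cap\bar{S}$ because together with the endpoints of $e$ it would form a prison whose only edge inside $S$ is $e$ (contradicting exhaustive application of Reduction Rule~\ref{red:singleEdge}), then use a triangle-free edge of $B_e$ to cap the number of classes at two. Your reading of the second assertion as concerning $N_G(e)\cap\bar{S}$ matches what the paper's own proof actually establishes.
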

\iflong
\begin{proof}
    $N_G(e)\cap \bar{S}$ is $\bar{P_3}$-free otherwise there would be a prison that intersects $S$ only in $e$. This means that it is complete multipartite of width $\geq 0$. Furthermore, 
    if $B_e$ is non empty, there is a triangle-free edge in the complete multipartite component so it is complete bipartite.
\end{proof}
\fi 

\iflong 
\begin{lemma}
\fi
\ifshort  
\begin{lemma}
[$\star$]\fi 
\label{lem:noCrossPrison2}
    Let $e\in E(G[S])$. Assume that $B_e$ is not empty. Then any induced supergraphs of a prison $P$ that has an edge in $B_e$ is in $B_e \cup S$. 
\end{lemma}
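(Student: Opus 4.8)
The proof should follow the same recipe as Lemma~\ref{lem:noCrossPrison3}: every step will be closed by producing a five-vertex induced prison of $G$ that has at most one edge inside $G[S]$, which cannot exist once the reduction rules have been applied exhaustively. Concretely, recall that $G[\bar S\cup S']$ is prison-free for every $S'\subseteq S$ with $|E(G[S'])|\le 1$, and that Reduction Rule~\ref{red:singleEdge} forbids prisons with exactly one edge in $G[S]$; combining these, every induced prison of $G$ has at least three vertices in $S$ and at least two edges in $G[S]$. Now fix $e=ab\in E(G[S])$ with $B_e\neq\emptyset$ and an edge $uv\in B_e$ of $P$; by Lemma~\ref{lem:bipart}, $N:=N_G(e)\cap\bar S$ is complete bipartite with parts $L_1\ni u$ and $L_2\ni v$. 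Two facts will be used repeatedly: $\{a,b,u,v\}$ induces a $K_4$ (since $u,v\in N$), and no vertex of $\bar S$ is adjacent to both $u$ and $v$ (since $uv$ lies in no triangle of $G[\bar S]$). Finally, as $G[V(P)]$ is an induced five-vertex supergraph of a prison, it equals $K_5$, $K_5$ minus an edge, or the prison; in particular it misses at most two edges, and if it misses two they share a vertex.

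First I would establish the structural statement $V(P)\cap\bar S\subseteq N$. Let $x\in V(P)\cap\bar S$ with $x\notin\{u,v\}$; by the second fact, $x$ is non-adjacent to at least one of $u,v$. If $x$ is adjacent to neither, then $xu$ and $xv$ are exactly the two missing edges of $G[V(P)]$, so $G[V(P)]$ is an induced prison, its two universal vertices are common neighbours of $u$ and $v$ and hence lie in $S$, leaving $\{u,v,x\}$ in $\bar S$ and at most two vertices in $S$ --- contradiction. So, after relabelling, $x\sim v$, $x\not\sim u$, hence $x\in L_1$. If $x$ is adjacent to exactly one of $a,b$, then $\{a,b,u,v,x\}$ induces a prison meeting $S$ only in $\{a,b\}$ --- contradiction; and if $x$ is adjacent to both $a$ and $b$ then $x\in N$ and we are done. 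The remaining case, $x\sim v$ and $x\not\sim u,a,b$, is the heart of the argument, since now $\{a,b,u,v,x\}$ is not a prison. Here I introduce the two vertices $y,z$ of $P$ other than $u,v,x$. Because $G[V(P)]$ misses at most two edges and $xu$ is one of them, $v$ is universal in $G[V(P)]$ and at least one of $y,z$, say $y$, is adjacent to both $u$ and $v$, so $y\in S$. If $y$ is adjacent to $a$ or to $b$, then according to whether $x\sim y$, either $\{a,u,v,x,y\}$ induces a prison meeting $S$ only in $\{a,y\}$, or $G[V(P)]$ is itself an induced prison meeting $S$ within $\{y,z\}$ --- a contradiction in either subcase. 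If $y$ is adjacent to neither $a$ nor $b$, then $\{a,b,u,v,y\}$ induces a prison whose only edge in $G[S]$ is $ab$ --- contradiction. This proves $V(P)\cap\bar S\subseteq N$.

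It remains to upgrade this to $V(P)\cap\bar S\subseteq V(B_e)$ (writing $V(B_e)$ for the set of endpoints of edges of $B_e$, which is contained in $N$), which gives $V(P)\subseteq V(B_e)\cup S$ as claimed. Fix $x\in V(P)\cap\bar S$; if $x\in\{u,v\}$ it is an endpoint of $uv\in B_e$, so assume otherwise, say $x\in L_1$, so $x\sim v$; it suffices to show $xv$ lies in no triangle of $G[\bar S]$, which puts $xv$ in $B_e$ (both endpoints are in $N$). Suppose $w\in\bar S$ is adjacent to both $x$ and $v$. Then $w\not\sim u$ (else $uvw$ is a triangle of $G[\bar S]$ through $uv$), and $w\notin N$ (a vertex adjacent to the $L_1$-vertex $x$ and the $L_2$-vertex $v$ cannot lie in $L_1\cup L_2$), so $w$ is non-adjacent to $a$ or to $b$, say to $a$. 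Then $\{b,u,v,x,w\}$ (if $w\sim b$) or $\{a,b,v,x,w\}$ (if $w\not\sim b$) induces a prison with at most one edge in $G[S]$, a contradiction. Hence $xv\in B_e$, completing the proof.

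The step I expect to be the main obstacle is the case $x\sim v$, $x\not\sim u,a,b$ in the proof of $V(P)\cap\bar S\subseteq N$: unlike Lemma~\ref{lem:noCrossPrison3}, there is no triangle of $G[\bar S]$ through the edge $uv$ to feed into the argument, so the forbidden five-vertex prison has to be assembled using a fourth vertex of $P$ together with a secondary case split on its adjacency to $a$ and $b$. Everything else is the routine bookkeeping of identifying, in each branch, five vertices inducing a prison that meets $G[S]$ in at most one edge.
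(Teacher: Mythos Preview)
Your argument is correct, but it takes a noticeably longer route than the paper's. The paper's proof begins by observing that once you fix a bad vertex $w\in V(P)\cap(\bar S\setminus B_e)$, the three vertices $u,v,w$ are all in $\bar S$, so $P$ has at most two vertices in $S$; hence $P$ cannot itself be a prison and must be $K_5$ minus a single edge. That single non-edge must involve $w$ (otherwise $uvw$ is a triangle of $G[\bar S]$), so the remaining two vertices $a',b'$ of $P$ lie in $S$ and are adjacent to all of $u,v,w$. Now $uv\in B_{a'b'}$, so Lemma~\ref{lem:weakCrosscmc2} gives $B_e=B_{a'b'}$, and $w\in N_G(a'b')\cap\bar S$ comes for free. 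The contradiction is then obtained from a triangle through the edge $vw$ (which exists since $vw\notin B$), exactly as in your second step.

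In other words, the paper identifies the two $S$-vertices of $P$ and uses Lemma~\ref{lem:weakCrosscmc2} to replace the original edge $e$ by the edge $a'b'$ inside $P$; this makes your ``hard case'' $x\sim v$, $x\not\sim u,a,b$ disappear, because by construction $x$ \emph{is} adjacent to both $a'$ and $b'$. Your approach, keeping $a,b$ fixed as the endpoints of the original $e$, forces the extra case analysis on a fourth vertex $y$ of $P$ and its adjacencies to $a,b$. Both arguments land on the same kind of contradiction, and your two-step decomposition (first $V(P)\cap\bar S\subseteq N$, then $\subseteq V(B_e)$) is perfectly valid, just less economical.

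One cosmetic slip: in the first step you write ``hence $x\in L_1$'' immediately after establishing $x\sim v$, $x\not\sim u$, but at that point you have not yet shown $x\in N=L_1\cup L_2$. You do not actually use this until the second step, where it is justified, so the argument is unaffected.
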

\ifshort
\begin{proof}[Proof Sketch]
   Assume that there is an induced supergraphs of a prison $P = (a,b,u,v,w)$, where $uv$ in $B_e$ and $w\in \bar{S} \setminus B_e$. It follows from Reduction Rule~\ref{red:singleEdge} and the fact that $uv$ is not in a triangle in $G[\bar{S}]$ that (1) $P$ is a strict supergraph of a prison with only non-edge $uw$ (resp. $vw$) and (2) $\{a,b\}\subseteq S$. By Lemma~\ref{lem:weakCrosscmc2}, $B_e = B_{ab}$. Since $B_e$ is maximal complete bipartite subgraph of $G[\bar{S}]$, the class that contains $v$ (resp. $u$), contains a vertex $v'$ (resp. $u'$) with $v'w\notin E(G)$. But then $(a,b,v,v',w)$ (resp. $(a,b,u,u',w)$) induces a prison in $G$ with only one edge inside $S$, which is a contradiction.
\end{proof}
\fi
\iflong
\begin{proof}
    Assume that there is an induced supergraphs of a prison $P$ that has an edge $uv$ in $B_e$ and a vertex $w$ in $\bar{S} \setminus B_e$. Since $P$ contains at most two vertices and hence at most one edge in $S$, by the construction of $S$ and Reduction Rule~\ref{red:singleEdge}, $P$ is a strict supergraph of a prison and so is missing only a single edge. Since  
    $uv$ is not in a triangle of $G[\bar{S}]$, it follows that the only non-edge in $P$ is either $uw$ or $vw$ and $u,v,w$ are the only vertices of $P$ that are not in $S$. Without loss of generality $uw$ is not an edge of $G$. Let $a,b$ be the two remaining vertices of $P$. The edge $ab$ of $G[S]$ has $u,v$ and $w$ in its common neighborhood, but $vw$ is not an edge of $B_e=B_{ab}$. This means that $vw$ is not in $B$ and therefore occurs in a triangle $vwx$ where $x \in \bar{S} \setminus \{u,v,w\}$. Since Reduction Rule~\ref{red:singleEdge} has been applied exhaustively, $(x,v,w,a,b)$ is not a prison and so, without loss of generality, $xa$ is an edge. Similarly, $(u,v,a,x,w)$ cannot be a prison because it is edge-disjoint from $G[S]$. But, we already argued that $(v,x,w,a)$ is a $K_4$, moreover, $u$ is adjacent to $a$ and $v$, but non-adjacent to $w$ and $x$, as $uv$ is not in any triangle in $G[\bar{S}]$. Therefore, $(u,v,a,x,w)$ is a prison, which is a contradiction. 
\end{proof}
\fi 

It follows that we can partition the edges of $G[\bar{S}]$ in complete multipartite subgraphs, where one of these complete multipartite subgraphs can intersect another in at most one class. The following reduction rule lets us reduce the number of complete multipartite subgraphs in this partition to $|S|^3+|S|^2 = \mathcal{O}(k^{24})$. Given this bound, we will reduce size of each of these components as well as number of isolated vertices in $G[\bar{S}]$ by a polynomial function in $k$ as well.

\iflong 
\begin{reductionRule}
\fi 
\ifshort
\begin{reductionRule}[$\star$]
\fi \label{red:buble}
    Let $F'\in cmd_3(G[\bar{S}])$. If $F\subseteq V(G)$ is a maximal complete multipartite component such that $F\supseteq F'$ and for every vertex $v \notin F$, $N(v)$ intersects at most one class of $F$, remove all edges of $F$. %\todo{to check, but I assume all edges of $F$}
\end{reductionRule}

\begin{lemma}
  Reduction Rule~\ref{red:buble} is safe.
\end{lemma}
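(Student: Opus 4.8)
The plan is to show that the instances $(G,k)$ and $(G',k)$ are equivalent, where $G'=G-E(G[F])$ is the graph in which every edge of $G[F]$ has been deleted; note that $F$ is then an independent set of $G'$, while the classes of the complete multipartite graph $G[F]$ remain well defined and are used throughout. Three elementary observations will be invoked repeatedly: \emph{(i)} a complete multipartite graph, hence every induced subgraph of $G[F]$, is prison-free, since the prison contains an induced $K_1+K_2$ (on the vertices $a,b,c$ in the notation of Section~\ref{sec:prison}, where $ac$ is an edge and $ab,bc$ are non-edges) whereas complete multipartite graphs are precisely the $(K_1+K_2)$-free graphs; \emph{(ii)} the prison has exactly two non-edges, these share a vertex, its independence number is $2$, and each of its two maximum independent sets consists of the unique degree-$2$ vertex together with a degree-$3$ vertex, so neither degree-$4$ vertex of the prison lies in a maximum independent set; \emph{(iii)} because every $v\in V(G)\setminus F$ has $N_G(v)$ meeting at most one class of $F$, the same holds for $N_H(v)$ in every subgraph $H\subseteq G$.

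First I would prove that a solution survives the reduction. Given a solution $A$ to $(G,k)$, I would take $A'=A\setminus E(G[F])$, which lies in $E(G')$ and has $|A'|\le k$, and argue $G'\Delta A'$ is prison-free. If some induced prison of $G'\Delta A'$ on a $5$-set $Q$ contained no edge of $G[F]$ inside $Q$, then $(G'\Delta A')[Q]$ would coincide with $(G\Delta A)[Q]$ (deleting $E(G[F])$ and passing from $A$ to $A'$ change nothing inside $\binom{Q}{2}$), contradicting that $A$ is a solution. Otherwise $Q\cap F$ contains two vertices in distinct classes; since $F$ is edgeless in $G'$, the set $Q\cap F$ is independent in the prison $(G'\Delta A')[Q]$, so by \emph{(ii)} it has size exactly $2$ and forms a maximum independent set, whence by \emph{(ii)} some degree-$4$ vertex $d$ of that prison lies outside $F$. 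But $d$ is adjacent in $G'\Delta A'\subseteq G$ to both vertices of $Q\cap F$, which lie in distinct classes of $F$, contradicting \emph{(iii)}.

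Next I would prove the converse. Given a solution $A'$ to $(G',k)$, note that $A'\subseteq E(G')\subseteq E(G)$, that $A'\cap E(G[F])=\emptyset$, hence $G\Delta A'=(G'\Delta A')\cup E(G[F])$ and in particular $(G\Delta A')[F]=G[F]$ is complete multipartite. Suppose $(G\Delta A')[Q]$ is a prison for a $5$-set $Q$. If no edge of $G[F]$ lies inside $Q$, then $(G\Delta A')[Q]=(G'\Delta A')[Q]$ and $A'$ being a solution gives a contradiction; if $Q\subseteq F$, then $(G\Delta A')[Q]=G[Q]$ is complete multipartite, contradicting \emph{(i)}. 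Otherwise $m:=|Q\cap F|\in\{2,3,4\}$ and $Q\cap F$ meets $r\ge 2$ classes of $F$. I would then lower-bound the number of non-edges of the prison $(G\Delta A')[Q]$: at least $m-r$ of them are same-class pairs inside $Q\cap F$, and by \emph{(iii)} each of the $5-m$ vertices of $Q\setminus F$ is non-adjacent to at least $r-1$ vertices of $Q\cap F$; these non-edges are pairwise distinct (the latter ones each have exactly one endpoint outside $F$), so the prison has at least $(m-r)+(5-m)(r-1)$ non-edges, which a short check shows is at least $3$ for all $2\le r\le m\le 4$ — contradicting \emph{(ii)}.

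The heart of the argument, and the step I expect to need the most care, is the dichotomy forbidding a prison that ``straddles'' $F$ while using an edge of $G[F]$: it powers both directions (as the ``independence number $2$ plus a degree-$4$ vertex'' contradiction when edges are deleted, and as the non-edge count when they are added back), and it is the precise form of the informal claim that no supergraph of a prison can have edges both inside and outside $F$. Everything else — in particular the bookkeeping that deleting or re-adding all of $E(G[F])$ does not affect an induced subgraph on a $5$-set containing no edge of $G[F]$ — is routine.
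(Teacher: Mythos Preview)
Your proof is correct. Both your argument and the paper's hinge on the same key property (your observation \emph{(iii)}: vertices outside $F$ see at most one class of $F$), and both conclude that no prison can ``straddle'' $F$. The paper factors this into a single structural claim about $G$ --- any induced supergraph of a prison in $G$ containing an edge of $G[F]$ must be entirely contained in $F$ --- and then applies it uniformly to both directions, since $G\Delta A$ and $G'\Delta A'$ are subgraphs of $G$. You instead argue each direction separately, using the independence-number/degree-$4$ observation for $G\to G'$ and the non-edge count for $G'\to G$. Your non-edge inequality $(m-r)+(5-m)(r-1)\ge 3$ is in fact strong enough to recover the paper's unified claim (it rules out any $5$-set with at most two non-edges straddling $F$), so the difference is purely organizational: the paper's packaging is a bit cleaner, while yours is more explicit about what happens in each direction.
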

\ifshort
\begin{proof}[Proof Sketch]
    It is not difficult to show that there is no supergraph of a prison in $G$ that contains an edge with both endpoints in $F$ and at the same time an edge with at least one endpoint outside of $F$. Given this for every $A\subseteq E(G)$, every prison in $G\Delta A$ is either all edge in $F$ or all edges in $G - E(F)$, since $F$ is complete multipartite and hence prison-free, it suffices to hit all prisons in $G-E(F)$. 
\end{proof}
\fi
\iflong
\begin{proof}
    Let $G'$ be the graph obtained from $G$ by removing all edges of $F$. 

    The safeness of the reduction rule follows from the fact that there is no supergraph of a prison in $G$ that contains an edge with both endpoints in $F$ and at the same time an edge with at least one endpoint outside of $F$. Given this for every $A\subseteq E(G)$, every prison in $G\Delta A$ is either all edge in $F$ or all edges in $G'$, since $F$ is complete multipartite and hence prison-free, it suffices to hit all prisons in $G'$. 

    To see that no prison can have an edge in $F$ and outside of $F$, let us assume for contradiction that $P$ is a supergraph of a prison with vertices $a,b,c,d,e$ such that $(a,b,c,d)$ is $K_4$ and $e$ is adjacent to $c,d$. If any of the edges in $(a,b,c,d)$ is in $F$, then $(a,b,c,d)$ is fully in $F$ as the remaining two vertices would see two classes of $F$. But then $e$ also sees two classes of $F$ (the class of $c$ and the class of $d$) and so $e$ and the edges $ed$ and $ec$ are also in $F$. It follows that all edges of $P$ are in $F$. On the other hand, if one of the edges $ec$ or $ed$, say $ec$ is in $F$, then $d$ would see two classes of $F$ and so $cd$ would be in $F$ as well and the same argument shows that all edges of $P$ are again in $F$. Therefore any supergraph of a prison that contains at least one edge in $F$ is fully contained in $F$. Since $F$ is prison-free, it follows that the reduction rule is safe.
\end{proof}
\fi 

Recall that we always assume that none of the previous reduction rules can be applied, in particular from now on we assume also that Reduction Rule~\ref{red:buble} is not applicable.

From now on, we denote $\cmsG=cmd_3(G[\bar{S}]) \cup \{B_e\mid e\in E(G[S])\}.$ %\todo{Todo: notation for this other than $\mathbb{C}$?, maybe $\mathcal{C}$?}

\iflong 
\begin{lemma}
\fi
\ifshort  
\begin{lemma}[$\star$]
\fi 
\label{lem:boundOnNbC}
 The edges of $G[\bar{S}]$ can be partitioned into at most $|S|^3+|S|^2$ many maximal complete multipartite subgraphs of $G[\bar{S}]$. 
%    $|\cmsG|\leq 2\dot |S|^2$.
\end{lemma}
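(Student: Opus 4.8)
The plan is to exhibit a partition of $E(G[\bar{S}])$ into at most $|S|^3+|S|^2$ complete multipartite subgraphs of $G[\bar{S}]$, each contained in a maximal one, built from $\cmsG = cmd_3(G[\bar{S}]) \cup \{B_e \mid e \in E(G[S])\}$. First I would check that this family covers $E(G[\bar{S}])$ edge-disjointly. By Reduction Rule~\ref{red:notSuperEdge} every edge $f$ of $G[\bar{S}]$ lies in a strict supergraph of a prison, hence in a $K_4$ of $G$. If $f$ lies in a triangle of $G[\bar{S}]$, that triangle is complete $3$-partite and extends (greedily) to some $F\in cmd_3(G[\bar{S}])$ with $f\in E(G[\bar{S}][F])$; otherwise $f\in B$ and by Lemma~\ref{lem:weakCrosscmc2} $f$ lies in exactly one $B_e$. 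Conversely, every edge in a triangle of $G[\bar{S}]$ lies in no $B_e$ (as $B_e\subseteq B$), and no edge lies in two distinct members of $cmd_3(G[\bar{S}])$: for two pieces with $\ge 4$ classes this is Corollary~\ref{cor:cmd4}(3) applied to the prison-free graph $G[\bar{S}]$, while in general Lemma~\ref{lem:superprisonfree} forbids a vertex of one piece — which would see two classes of the other — once the two pieces share an edge. Each member of $cmd_3(G[\bar{S}])$ is maximal complete multipartite in $G[\bar{S}]$ (adding vertices cannot drop the number of classes below three), and by Lemma~\ref{lem:bipart} each nonempty $B_e$ equals the edge set of the complete bipartite graph $G[\bar{S}][N_G(e)\cap\bar{S}]$, which extends to a maximal complete multipartite subgraph (absorbing it into $cmd_3(G[\bar{S}])$ if that extension has at least three classes).

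It remains to count. The number of nonempty $B_e$ is at most $|E(G[S])|\le\binom{|S|}{2}<|S|^2$, and distinct $B_e$ mapping to the same maximal extension only helps. The heart of the argument is the bound $|cmd_3(G[\bar{S}])|\le|S|^3$, which I would prove by constructing an injective map $\tau$ from $cmd_3(G[\bar{S}])$ to ordered triples of vertices of $S$. Fix $F\in cmd_3(G[\bar{S}])$. If $F$ has exactly three classes, Lemma~\ref{lem:tripartite_neighbour} supplies a first distinguished vertex $s_1=s_1(F)\in S$ with $F\subseteq N(s_1)$ together with a strict supergraph of a prison on a set $\{u,v,w,s_1,b\}\subseteq S\cup F$ in which $\{u,v,w\}$ is a triangle of $F$; arguing as in the proofs of Lemma~\ref{lem:kcomp} and Lemma~\ref{lem:tripartite_neighbour} and using that $G[\bar{S}\cup\{s_1\}]$ is prison-free, one gets that $F\cup\{s_1\}$ is a maximal complete multipartite subgraph of $G[\bar{S}\cup\{s_1\}]$ that is ``nicely separated'' there. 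Since Reduction Rule~\ref{red:buble} is inapplicable to $F$, this separation must fail once we pass to $G$: any maximal complete multipartite $\hat{F}\supseteq F\cup\{s_1\}$ of $G$ — which necessarily satisfies $\hat{F}\cap\bar{S}=F$ — has a vertex $x\notin\hat{F}$ whose neighbourhood meets at least two classes of $\hat{F}$. Feeding $x$, two $S$-vertices of $\hat{F}$ witnessing those classes, and representatives of classes of $F$ that $x$ misses into the ``prism/prison'' argument used throughout this section produces an induced prison all of whose edges inside $G[S]$ lie among those $S$-vertices; since Reduction Rule~\ref{red:singleEdge} is applied exhaustively and no prison of $G$ can be edge-disjoint from $G[S]$, this both constrains the configuration and forces the remaining distinguished vertices $s_2(F),s_3(F)\in S$. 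One then checks that $F$ is recovered from $(s_1,s_2,s_3)$ as the maximal complete $(\ge 3)$-partite subgraph of $G[\bar{S}]$ lying inside an explicit set determined by these three vertices and containing the forced triangle. The case $F\in cmd_4(G[\bar{S}])$ runs through the same scheme — $F$ now carries a genuine $K_4$ and at least four classes, which if anything makes the prison argument easier and leaves room to spare — and again charges $F$ injectively into $S^3$. Summing, $|\cmsG|\le|S|^3+|S|^2$.

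The step I expect to be the main obstacle is the injectivity of $\tau$: one must push through a somewhat delicate case analysis of precisely which vertices of $S$ may be adjacent to which classes of $F$ (or of an extension $\hat{F}$) without creating a prison edge-disjoint from $G[S]$ — the corner cases where the bad vertex $x$ is adjacent to an \emph{entire} class of $F$ need separate treatment — and one must pick $s_1(F),s_2(F),s_3(F)$ canonically enough that $F$ is genuinely reconstructible from them. Everything else (the cover, edge-disjointness, and the count of the $B_e$'s) is routine bookkeeping on top of Lemmas~\ref{lem:superprisonfree}, \ref{lem:tripartite_neighbour}, \ref{lem:noCrossPrison3}, \ref{lem:noCrossPrison2}, \ref{lem:weakCrosscmc2} and \ref{lem:bipart}.
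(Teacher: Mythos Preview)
Your treatment of the edge partition --- covering via $\cmd_3(G[\bar S])$ and the $B_e$'s, edge-disjointness via Lemmas~\ref{lem:superprisonfree} and~\ref{lem:weakCrosscmc2}, and the bound $|S|^2$ on the bipartite pieces --- is correct and matches the paper. The gap is in the bound $|\cmd_3(G[\bar S])|\le |S|^3$: you propose an injection $\tau$ into $S^3$ but never actually produce the triple or prove injectivity. In particular, your sketch of how the failure of Reduction Rule~\ref{red:buble} yields $s_2,s_3$ already speaks of ``two $S$-vertices of $\hat F$ witnessing those classes'', yet at that point $\hat F$ is only known to contain the single $S$-vertex $s_1$; forcing a second $S$-vertex --- in fact an \emph{edge} of $G[S]$ --- into the extension is a genuine, non-obvious step that you have not carried out, and the reconstruction of $F$ from three chosen vertices is left entirely as a hope.

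The paper sidesteps canonical reconstruction. It defines an injective $f\colon \cmd_3(G[\bar S])\to \cmd_4(G)$ by extending $F$ together with one suitable vertex of $S$ (the $s$ from Lemma~\ref{lem:tripartite_neighbour}, or any vertex seeing two classes in the $\cmd_4$ case) to a maximal complete multipartite subgraph of $G$; injectivity is automatic since $f(F)\cap\bar S=F$ by maximality of $F$ in $G[\bar S]$. It then proves, by two separate prison arguments, that (i) every $f(F)$ contains an edge of $G[S]$ --- here the failure of Reduction Rule~\ref{red:buble} is combined with Reduction Rule~\ref{red:singleEdge} --- and (ii) each fixed edge of $G[S]$ lies in at most $|S|$ elements of the image (via a distinguishing vertex $s_F\in S$). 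Double-counting then gives $|\cmd_3(G[\bar S])|\le |S|\cdot|E(G[S])|\le |S|^3$, with no need to recover $F$ from a triple.
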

\iflong
\begin{proof}
\fi
\ifshort
\begin{proof}[Proof Sketch]
\fi
Clearly every edge in $G[\bar{S}]$ is in a maximal complete multipartite subgraph of $G[\bar{S}]$\iflong, as $K_2$ is a complete multipartite graph\fi. We only need to show that each edge is in precisely one such subgraph and that their number is at most $|S|^3+|S|^2$. First note the edges in $B$\iflong,i.e., edges of $G[\bar{S}]$ that are not in a triangle of $G[\bar{S}]$,\fi{} are partitioned into at most $|E(G[S])|\le |S|^2$ many complete bipartite graphs by Lemma~\ref{lem:weakCrosscmc2}. 
Hence, we only need to consider the edges that are \iflong in a triangle of $G[\bar{S}]$, i.e., it is an edge of \fi in some $F\in cmd_3(G[\bar{S}])$. 
\ifshort
It is a rather straightforward consequence of Lemma~\ref{lem:superprisonfree} that any edge $e$ in $F$ cannot be in any other maximal complete multipartite subgraph in $cmd_3(G[\bar{S}])$. 
\fi
\iflong 
Note that for $F\in cmd_3(G[\bar{S}])$, every vertex $x\in \bar{S}\setminus F$ has neighbors in at most one class of $F$ by Lemma~\ref{lem:superprisonfree}, hence $x$ cannot be adjacent to any edge in $F$. Therefore, an edge $e$ in $F$ cannot be in a complete multipartite subgraph of $G[\bar{S}]$ that contains a vertex outside of $F$. Since $F$ is maximal, edges that are in $F$ cannot be in any other maximal complete multipartite subgraph in $G[\bar{S}]$ and elements of $cmd_3(G[\bar{S}])$ are pairwise edge-disjoint.
\fi 
It remains to show that show that $|cmd_3(G[\bar{S}])|\le |S|^3$. 

We will now define a function $f:cmd_3(G[\bar{S}]) \rightarrow cmd_4(G)$ such that for all $F\in cmd_3(G[\bar{S}])$, it holds that (1) $F\subseteq f(F)$ and (2) $f(F)$ contains at least one vertex in $S$. That is $f$ is injective. We will show that every edge in $G[S]$ is in at most $|S|$ many complete multipartite subgraph in the image of $f$ and every element of the image of $f$ contains an edge of $G[S]$, bounding $cmd_3(G[\bar{S}])$ by $|E(G[S])|\le |S|^3$. 

If $F\in cmd_4(G[\bar{S}])$, then \ifshort since Reduction Rule \ref{red:buble} has been exhaustively applied, there is $v\in S$ such that $N(v)$ intersects at least two classes of $G[F]$, and since $G[\bar{S}\cup\{v\}]$ is prison-free and $F$ contains a $K_4$, $G[F\cup\{v\}]$ is complete multipartite. \fi \iflong if for all $v\in S$, $N(v)$ intersects at most one class of $G[F]$, Reduction Rule \ref{red:buble} can be applied on $F$. So we can assume that there is $v\in S$ such that $N(v)$ intersects at least two classes of $G[F]$, and since $G[\bar{S}\cup\{v\}]$ is prison-free and $F$ contains a $K_4$, $G[F\cup\{v\}]$ is complete multipartite. \fi We define $f(F)$ as any maximal complete multipartite component of $G$ containing $F\cup\{v\}$. 
Else, if $F\in cmd_3(G[\bar{S}])-cmd_4(G[\bar{S}])$, then by Lemma~\ref{lem:tripartite_neighbour}, there is a vertex $s\in S$ such $F\subseteq N(s)$ and $G[F\cup\{s\}]$ is a complete multipartite graphs with at least four parts\iflong{} (importantly, it contains $K_4$)\fi. We define $f(F)$ as a maximal complete multipartite subgraph of $G$ containing $F\cup\{s\}$. 

\iflong 
Note that in both cases $f(F)\subseteq F\cup S$, because any subgraph of a complete multipartite graph is also complete multipartite and $F$ is already inclusion maximal in $G[\bar{S}]$. 
We thus defined a function $f:cmd_3(G[\bar{S}]) \rightarrow cmd_4(G)$ that is injective, and for all $F$, $f(F)$ has at least one vertex in $S$.

\fi 
\iflong
\begin{claim}
\fi
\ifshort
\begin{claim}[$\star$]
\fi
    Let $e=uv$ be an edge of $G[S]$. Then at most $|S|$ elements of $Im(f)$ contains the vertices of $e$.
\end{claim}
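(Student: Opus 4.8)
Since $f$ is defined as a map into $cmd_4(G)$, we have $Im(f) \subseteq cmd_4(G)$, so the structural facts of Corollary~\ref{cor:cmd4} apply to any two members of $Im(f)$. The plan is to prove something stronger than the claim, namely that \emph{at most one} element of $Im(f)$ contains both endpoints of $e$; this immediately gives the stated bound of $|S|$ (the claim is only non-vacuous when $S$ contains an edge, so $|S| \geq 2$ there).

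First I would suppose, towards a contradiction, that there are two distinct $F_1', F_2' \in Im(f)$ with $u, v \in F_1'$ and $u, v \in F_2'$. Then $F_1', F_2' \in cmd_4(G)$ are distinct and $u \in F_1' \cap F_2'$, so the intersection is non-empty and the first item of Corollary~\ref{cor:cmd4} gives that $G[F_1' \cap F_2']$ is edgeless. But $e = uv$ is an edge of $G[S]$, hence an edge of $G$, and both $u$ and $v$ lie in $F_1' \cap F_2'$; thus $uv \in E(G[F_1' \cap F_2'])$, contradicting edgelessness. Hence at most one element of $Im(f)$ contains both $u$ and $v$, which is at most $|S|$ as required. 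Alternatively, one can bypass this and quote the third item of Corollary~\ref{cor:cmd4} directly: each $F' \in Im(f)$ is a complete multipartite subgraph of $G$ with at least four parts, so if $\{u,v\}\subseteq F'$ then $uv$ — being an edge of the induced subgraph $G[F']$ — has its endpoints in two distinct parts of $F'$ and, adding one vertex from each of two further parts, lies in a $K_4$ of $G$; uniqueness of the $cmd_4$-component carrying that $K_4$-edge then bounds the count by $1$.

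I do not expect a genuine obstacle here — the claim is essentially an immediate consequence of the $cmd_4$-decomposition of prison-free graphs established earlier. The only steps that deserve a line of justification are (i) that $Im(f) \subseteq cmd_4(G)$, which holds by the very definition of $f$ (and because the sets $F \cup \{v\}$, $F \cup \{s\}$ used to define $f(F)$ already carry a $K_4$, so $f(F)$ indeed has at least four parts), and (ii) that an edge of $G$ with both endpoints in a set $F' \in cmd_4(G)$ is automatically an edge of $G[F']$, so it cannot live in the edgeless overlap of two distinct such components. I would keep the loose bound $|S|$ in the statement since that is all the subsequent counting uses, even though the argument actually yields the constant $1$.
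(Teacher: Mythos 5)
Your argument has a fatal flaw: both routes you propose rest on Corollary~\ref{cor:cmd4}, but that corollary is stated and proved only for \emph{prison-free} graphs (its proof goes through Lemma~\ref{lem:kcomp}, which needs prison-freeness). In Section~\ref{sec:kernel} the graph $G$ is the \emph{input} to \delProblem{} and in general contains prisons --- that is the whole point of the problem --- so you cannot apply the corollary to $cmd_4(G)$. Indeed, in a graph that is not prison-free, two distinct maximal complete multipartite subgraphs with at least four parts can perfectly well share an edge: take two $K_4$'s, $\{a,b,c,d\}$ and $\{a,b,e,f\}$, glued along the edge $ab$ with no other edges; both are maximal elements of $cmd_4$ and both contain $ab$ (and, tellingly, this graph contains a prison on $\{a,b,c,d,e\}$). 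So neither the ``edgeless intersection'' argument nor the ``unique component per $K_4$-edge'' argument is available, and your strengthened conclusion that at most \emph{one} element of $Im(f)$ contains $u$ and $v$ is not established --- the paper itself only proves the bound $|S|$, and its proof structure (an injection from the relevant components into vertices of $S$) indicates that more than one such component is genuinely possible.

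The paper's actual proof has to work much harder precisely because of this. It restricts attention to the induced subgraph $G[F_1\cup F_2\cup\{u,v\}]$, which \emph{is} prison-free because $F_1,F_2\subseteq \bar S$ and $uv$ is the only edge of $G[S]$ present (this is where the exhaustive application of Reduction Rule~\ref{red:singleEdge} is used), and applies Theorem~\ref{the:handbag} inside that subgraph. It then splits into two cases: if $F_i\cup\{u,v\}$ contains a $K_4$ one deduces $F_1=F_2$; otherwise each component $F$ with $\{u,v\}\subseteq f(F)$ comes with a witness vertex $s_F\in S$ with $F\subseteq N(s_F)$, and a careful prison-finding argument shows $s_F\notin f(F')$ for any other such $F'$, so $F\mapsto s_F$ is injective into $S$. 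To repair your proof you would need to reproduce something like this local, modulator-aware argument; the global $cmd_4(G)$ decomposition simply does not exist for the input graph.
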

\iffalse
\begin{claimproof}[Proof Sketch]
For the sake of contradiction, assume that there is $F_1,F_2\in cmd_3(G[\bar{S}])$ such that $F_1\neq F_2$ and $u,v \in f(F_1) \cap f(F_2)$. $G[F_1\cup F_2 \cup \{u,v\}]$ is prison-free due to Reduction Rule~\ref{red:singleEdge}.
    If $F_1\cup \{u,v\}$ (or $F_2\cup \{u,v\}$) contains $K_4$, then it is relatively easy to show $F_1=F_2$ using Theorem~\ref{the:handbag}, as $G[\bar{S}\cup \{u,v\}]]$ is prison-free.   
    On the other hand, assume that for all $F\in cmd_3(G[\bar{S}])$ such that $\{u,v\}\subseteq f(F)$ and $G[F\cup \{u,v\}]$ does not contain $K_4$. 
    Let $s_F$ be a vertex such that $s_F\in f(F)$ and $F\subseteq N(s_F)$. Since in this case any $F$ with $\{u,v\}\subseteq f(F)$ belongs to $cmd_3(G[\bar{S}])- cmd_4(G[\bar{S}])$, such vertex $s_F$ exists by definition of $F$. 
    By an exhaustive analysis of all possibilities, we can show that $s_F$ cannot belong to any other $F'$ with $\{u,v\}\subseteq f(F')$. Therefore, there are at most $|S|$ such $F\in cmd_3(G[\bar{S}])$.
\end{claimproof}
\fi
\iflong
\begin{claimproof}  
    For the sake of contradiction, assume that there is $F_1,F_2\in cmd_3(G[\bar{S}])$ such that $F_1\neq F_2$ and $u,v \in f(F_1) \cap f(F_2)$. $G[F_1\cup F_2 \cup \{u,v\}]$ is prison-free due to Reduction Rule~\ref{red:singleEdge}.
    If $F_1\cup \{u,v\}$ (or $F_2\cup \{u,v\}$) contains $K_4$, then let $F'$ be a maximal complete multipartite subgraph of $G[F_1\cup F_2 \cup \{u,v\}]$ that contains $F_1\cup \{u,v\}$ (or $F_2\cup \{u,v\}$). Then every vertex $x\in F_2\setminus F'$ (or $x\in F_1\setminus F'$) is adjacent to the edge $uv$ and hence to at least two classes of $F'$. By Theorem~\ref{the:handbag}, no such vertex $x$ exists and $F' = F_1\cup F_2 \cup \{u,v\}$. But then $F_1\cup F_2$ induces a complete multipartite subgraph of $G[\bar{S}]$ and by maximality of $F_1$ and $F_2$, we have $F_1=F_2$. 
    Hence, if for some $F\in cmd_3(G[\bar{S}])$ such that $\{u,v\}\subseteq f(F)$, $F\cup \{u,v\}$ contains $K_4$, then $F$ is the unique element of $cmd_3(G[\bar{S}])$ with $\{u,v\}\subseteq f(F)$. 

    For the rest of the proof, let us assume that both $F_1\cup \{u,v\}$ and $F_2\cup \{u,v\}$ are $K_4$-free. By the definition of $f$, %Lemma~\ref{lem:tripartite_neighbour}, 
    there exist $s_1,s_2\in S$ such that \begin{itemize}
        \item $F_1\subseteq N(s_1)$ and $F_2\subseteq N(s_2)$ and 
        \item $s_1\in f(F_1)$ and $s_2\in f(F_2)$.
    \end{itemize} 
    We will show that $s_1\notin f(F_2)$ (and symmetrically $s_2\notin f(F_1)$). Let $x\in F_2\setminus F_1$, we show that $xs_1$ cannot be an edge. Since, $x$ is a vertex in $\bar{S}\setminus F_1$, by Lemma~\ref{lem:superprisonfree}, $N_G(x)$ intersect at most one class of $F_1$. 
    %and so there is an edge $ab$ in $F_1$ such that $\{a,b\}\cap N_G(x) = \emptyset$. 
    Let us distinguish two cases depending on whether $x$ has a neighbor in $F_1$ or not. If $x$ has a neighbor $a\in F_1$, then there is a triangle $(a,b,c)$ in $F_1$ such that $(a,b,c,s_1)$ is $K_4$ and $x$ is neighbor of precisely $a$ and $s_1$. Hence, $(a,b,c,s_1,x)$ is a prison without any edge in $S$, a contradiction to the construction of $S$, hence $x$ cannot be a neighbor of $s_1$.
    Now assume $N(x)\cap F_1=\emptyset$, notice that $u$ and $v$ are both in different classes of $f(F_1)$ than $s_1$, since neither of them form $K_4$ with $F_1$ and  $s_1u, s_1v\in E(G)$ and $f(F_1)$ contains  a $K_4$ $(s_1,u,a,b)$, where $a,b\in F_1$. If $xs_1\in E(G)$, then $(s_1,u,a,b,x)$ is a prison with a single edge in $S$, which is impossible due to Reduction Rule~\ref{red:singleEdge}. It follows that for every $x\in F_2\setminus F_1$ we have that $xs_1$ is not an edge. Since $F_2$ and $F_1$ do not share any edge, they can intersect in at most one class and $F_2$ contains at least two classes that are not adjacent to $s_1$, hence $s_1\notin f(F_2)$. Hence for every $F\in cmd_3(G[\bar{S}])\setminus cmd_4(G[\bar{S}])$ with $\{u,v\}\subseteq f(F)$ there is a unique $s_F\in f(F)$ with $F\subseteq N(s_F)$. Therefore, there are at most $|S|$ such $F\in cmd_3(G[\bar{S}])$.
\end{claimproof}
\fi

    There are thus at most $|S|^3$ elements of $Im(f)$ that contains an edge in $S$. 
\iflong
\begin{claim}
\fi
\ifshort
\begin{claim}[$\star$]
\fi
    Each element of $Im(f)$ contains an edge in $S$.
\end{claim}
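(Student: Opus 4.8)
Let $F^{*}\in Im(f)$, say $F^{*}=f(F)$ for some $F\in cmd_3(G[\bar{S}])$, and put $S'':=F^{*}\cap S$. The plan is to show that $S''$ is \emph{not} contained in a single class of $F^{*}$: since $F^{*}$ is complete multipartite, two vertices of $S''$ lying in different classes are adjacent, and that pair is then an edge of $G[S]$ with both endpoints in $F^{*}$, which is exactly what the claim asks. Recall from the construction of $f$ that $F^{*}$ is a maximal complete multipartite component of $G$ with at least four parts, that $F\subseteq F^{*}\subseteq F\cup S$, and that $S''\neq\emptyset$. From $F\subseteq F^{*}\subseteq F\cup S$ together with $F\subseteq\bar{S}$ we get $F^{*}=F\cup S''$ (disjoint union), so every class of $F^{*}$ disjoint from $S$ is a subset of $F$. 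Now suppose for contradiction that $S''\subseteq C$ for a single class $C$ of $F^{*}$; then every class of $F^{*}$ other than $C$ is a subset of $F\subseteq\bar{S}$ and hence contains no vertex of $S$.

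Next I would produce a witness vertex. Since $F^{*}$ is a maximal complete multipartite component of $G$ containing $F$ and Reduction Rule~\ref{red:buble} is not applicable, there must be a vertex $x\in V(G)\setminus F^{*}$ whose neighbourhood meets at least two classes of $F^{*}$ (otherwise the rule applies with $F^{*}$ in the role of the component and $F$ in the role of the $cmd_3$-set). I would then invoke the prison-constructing argument inside the proof of Lemma~\ref{lem:kcomp}, which makes no use of prison-freeness: applied to the maximal complete multipartite set $F^{*}$ (which has at least four parts) and the vertex $x$, it produces an \emph{induced} prison $Q$ of $G$ with $V(Q)\subseteq F^{*}\cup\{x\}$. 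Reading off that argument, $V(Q)\cap F^{*}$ has one of two shapes: either it consists of one vertex from each of four distinct classes of $F^{*}$, with $x$ adjacent to exactly two of them; or it consists of two vertices $w,w'$ of one ``doubled'' class of $F^{*}$ together with one vertex from each of two further classes, and in this case $x$ is adjacent to exactly one of $w,w'$, the doubled class is one on which $x$ is ``mixed'', and the two single-vertex classes may be chosen freely among the at least three classes of $F^{*}$ lying entirely in $N(x)$.

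It then remains to count edges of $G[S]$ inside $Q$. Because $F^{*}\cap S=S''\subseteq C$, we have $V(Q)\cap S\subseteq(V(Q)\cap C)\cup\{x\}$. In the first shape each class of $F^{*}$ contributes at most one vertex to $V(Q)$, so $|V(Q)\cap C|\le 1$ and $|V(Q)\cap S|\le 2$. In the second shape, if $C$ is not the doubled class we choose the two single-vertex classes to avoid $C$, so $V(Q)\cap C=\emptyset$ and $V(Q)\cap S\subseteq\{x\}$; if $C$ is the doubled class then $V(Q)\cap C=\{w,w'\}$, but $w$ and $w'$ are non-adjacent (same class) and $x$ is adjacent to at most one of them. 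In every case $G[V(Q)\cap S]$ has at most one edge, i.e.\ the induced prison $Q$ satisfies $|E(Q)\cap E(G[S])|\le 1$. If this intersection is a single edge $e$, then Reduction Rule~\ref{red:singleEdge} is applicable to the prison $Q$; if it is empty, then $S':=V(Q)\cap S$ satisfies $|E(G[S'])|\le 1$, so $G[\bar{S}\cup S']$ is prison-free by the defining property of $S$, contradicting that $Q$ is an induced prison inside it. Either way we contradict exhaustive application of the reduction rules, so $S''$ is not contained in one class of $F^{*}$ and the claim follows.

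The only delicate point is the bookkeeping in the middle step — extracting from the proof of Lemma~\ref{lem:kcomp} precisely which classes of $F^{*}$ the prison $Q$ meets and with what multiplicity — since the rest is a short case check. Combined with the previous claim (each edge of $G[S]$ lies in at most $|S|$ members of $Im(f)$) this gives $|cmd_3(G[\bar{S}])|=|Im(f)|\le|E(G[S])|\cdot|S|\le|S|^{3}$, which together with the bound on the sets $B_e$ completes the proof of Lemma~\ref{lem:boundOnNbC}.
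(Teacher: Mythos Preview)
Your proof is correct and follows the same overall strategy as the paper: obtain $x\notin F^{*}$ seeing two classes of $F^{*}$ from the non-applicability of Reduction Rule~\ref{red:buble}, produce an induced prison $Q$ inside $G[F^{*}\cup\{x\}]$, and argue that $Q$ has at most one edge in $G[S]$, contradicting Reduction Rule~\ref{red:singleEdge} (or the defining property of $S$ in the zero-edge case).

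The only difference is in how the ``at most one $S$-edge'' step is carried out. You re-run the explicit prison construction from the proof of Lemma~\ref{lem:kcomp} and do a case analysis on the two possible shapes of $Q$, using the freedom in choosing the single-vertex classes to avoid $C$ in the second shape. The paper argues more abstractly: since $Q\setminus\{x\}\subseteq F^{*}$ is complete multipartite, $x$ must play the role of one of the prison vertices whose neighbourhood in the prison is a clique (i.e., $x\in\{a,b,e\}$ when the $K_4$ is $abcd$). Then, since all $S$-edges of $G[F^{*}\cup\{x\}]$ are incident to $x$, two such edges $xu,xv$ inside $Q$ would force the edge $uv$ with $u,v\in F^{*}\cap S$, contradicting the assumption. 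This avoids the shape analysis entirely and does not require re-entering the proof of Lemma~\ref{lem:kcomp}; your approach is a bit more hands-on but arrives at the same contradiction.
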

\iflong 
 \begin{claimproof}
     Assume that there is $F\in Im(f)$ that does not. Since Reduction Rule \ref{red:buble} cannot be applied anymore, there is $x\notin F$ that sees at least two classes of $F$. Moreover, $F$ is inclusion maximal, so $G[F\cup \{x\}]$ is not complete multipatite subgraph. Since $F$ contains at least 4 parts, it follows from Theorem~\ref{the:handbag} that $G[F\cup \{x\}]$ contains a prison $P$. Moreover, $P-x$ is a complete multipartite graph. It follows that if $P=(a,b,c,d,e)$ such that $(a,b,c,d)$ is a $K_4$ and $e$ is adjacent to $c$ and $d$, then $x\in \{a,b,e\}$. In particular $N_G(x)\cap P$ is either $K_2$ or $K_3$.  Since Reduction Rule~\ref{red:singleEdge} is not applicable, $G[P]$ contains at least two edges in $S$. By our assumption, all edges in $G[F\cup \{x\}]$ in $S$ are incident on $x$. However, as we argued $N_G(x)\cap P$ is complete, so if $G[P\cap S]$ contains two edges incident on $x$, say $xu$ and $xv$, then it also contains the edge $uv$. But $u,v \in F$, which is a contradiction.   
 \end{claimproof}   
\fi    

    So $|Im(f)| \leq |S|^3$. Since $f$ is injective, $|cmd_3(G[\bar{S}])| \leq |S|^3$. 
Therefore,  $|\cmsG| = |cmd_3(G[\bar{S}])| + |\{B_e\mid e\in E(G[S])\}| \leq |S|^3+|S|^2$.
\end{proof}

 We will have a kernel once we have bounded the size of a maximal complete multipartite subgraphs and the number of isolated vertices in $G[\bar{S}]$. Before we show how to bound those, let us observe the following two simple lemmas.

\iflong 
\begin{lemma}
\fi
\ifshort  
\begin{lemma}[$\star$]
\fi\label{prop:consistentFs3}
    Let $F\in cmd_3(G[\bar{S}])$ and $s\in S$.
     If $N(s)$ intersects more than one class of $F$, then either $N(s)\cap F = F\setminus C$, where $C$ is a single class in $F$, or $N(s)\cap F = F$.% it either contains all but one class of $G[C]$ or all classes. 
\end{lemma}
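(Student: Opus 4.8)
I would deduce the statement from two ingredients: a small self‑contained fact about adding one vertex to a complete multipartite graph, and the structural guarantee of Lemma~\ref{lem:kcomp} applied inside a carefully chosen prison‑free supergraph of $G[\bar{S}]$. Recall that $G[\bar{S}\cup S']$ is prison‑free whenever $S'\subseteq S$ induces at most one edge; in particular $G[\bar{S}\cup\{s\}]$ is prison‑free, and so is $G[\bar{S}\cup\{s,s_0\}]$ for any second vertex $s_0\in S$ (two vertices induce at most one edge).

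First I would record the following routine fact: if $H$ is a complete multipartite graph with classes $D_1,\dots,D_r$ and $x$ is a vertex such that $H+x$ is again complete multipartite, then $N_{H+x}(x)\cap V(H)$ is either $V(H)$ or $V(H)\setminus D_i$ for some $i$. This follows because in a complete multipartite graph the non‑adjacency relation is transitive: if $x$ has a non‑neighbour $y\in D_i$, then transitivity forces $x$ to be adjacent to every vertex outside $D_i$ (else $z\not\sim x$ and $x\not\sim y$ would give $z\not\sim y$ for some $z\notin D_i$) and non‑adjacent to every vertex of $D_i$.

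For the main argument, fix $F\in cmd_3(G[\bar{S}])$ and $s\in S$ with $N(s)$ meeting at least two classes of $F$ (note $s\notin\bar{S}$, so $s\notin F$). If $F$ has at least four classes, put $G^\dagger:=G[\bar{S}\cup\{s\}]$ and $\hat F:=F$. If $F$ has exactly three classes, use Lemma~\ref{lem:tripartite_neighbour} to obtain $s_0\in S$ with $F\subseteq N(s_0)$; if $s=s_0$ we are already done, since then $N(s)\cap F=F$, so assume $s\neq s_0$ and put $G^\dagger:=G[\bar{S}\cup\{s,s_0\}]$, $\hat F:=F\cup\{s_0\}$. In both cases $G^\dagger$ is prison‑free, $G^\dagger[\hat F]$ is complete multipartite with at least four classes (hence contains a $K_4$), and the class partition of $\hat F$ restricts on $F$ to the classes of $F$. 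Extend $\hat F$ greedily to an inclusion‑maximal set $F^\ast$ with $G^\dagger[F^\ast]$ complete multipartite; then $F^\ast$ still contains a $K_4$, so by Lemma~\ref{lem:kcomp} every vertex of $G^\dagger$ outside $F^\ast$ has neighbours in at most one class of $F^\ast$. If $s\notin F^\ast$, this would force $N_G(s)\cap F$ to lie inside a single class of $F$ (each class of $F$ sits within one class of $F^\ast$), contradicting the choice of $s$. Hence $s\in F^\ast$, so $G^\dagger[F\cup\{s\}]$ is complete multipartite, and the fact from the previous paragraph (with $H=G^\dagger[F]$ and $x=s$) gives $N(s)\cap F\in\{F\}\cup\{F\setminus C:\ C\text{ a class of }F\}$, which is the claim.

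The step I expect to require the most care is the choice of the auxiliary graph $G^\dagger$: when $F$ has only three classes, working in $G[\bar{S}\cup\{s\}]$ is useless because there $F$ contains no $K_4$ and Lemma~\ref{lem:kcomp} says nothing, so one must first inflate $F$ to the four‑class set $F\cup\{s_0\}$ via Lemma~\ref{lem:tripartite_neighbour}, which is legitimate precisely because $\{s,s_0\}\subseteq S$ induces at most one edge and hence $G[\bar{S}\cup\{s,s_0\}]$ is prison‑free. The remaining points — that the class partition of $F^\ast$ restricts correctly to the classes of $F$ (which identifies ``component of $F^\ast$'' in Lemma~\ref{lem:kcomp} with ``class''), that $s$ is a genuine new vertex of $G^\dagger$ (i.e.\ $s\notin\bar{S}$), and the degenerate case $s=s_0$ — are routine bookkeeping.
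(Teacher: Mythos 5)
Your proof is correct and follows essentially the same route as the paper: split on whether $F$ has three or at least four classes, in the three-class case use Lemma~\ref{lem:tripartite_neighbour} to obtain $s_0\in S$ with $F\subseteq N(s_0)$ so that $F\cup\{s_0\}$ is complete multipartite with a $K_4$ inside the prison-free graph $G[\bar S\cup\{s,s_0\}]$, and then invoke Lemma~\ref{lem:kcomp}/Theorem~\ref{the:handbag} to force $s$ into the multipartite component. The only difference is that you spell out the maximal extension $F^\ast$ and the elementary fact about adding one vertex to a complete multipartite graph, which the paper leaves implicit.
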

\iflong
\begin{proof}
    %Note that for all $s'\in S$ if $N(s')$ intersects $F$, then the neighborhood of $s'$ in $F$ is either the same as in the complete multipartite subgraph of $G$ corresponding to $F$ or it is empty.
    
    If $F\in cmd_4(G[\bar{S}])$, the  statement follows from the fact that $G[F\cup \{s\}]$ is prison-free, the fact that $F$ induces a complete multipartite graphs that contains $K_4$ and from Theorem~\ref{the:handbag}.

    From now on assume that $F$ has exactly three classes.
    Due to Lemma~\ref{lem:tripartite_neighbour}, there is $s_1\in S$ such that $F\subseteq N(s_1)$. $G[F\cup\{s_1,s\}]$ is prison-free, because of Reduction Rule~\ref{red:singleEdge}. Moreover, $F\cup \{s_1\}$ induces a complete multipartite graph that contains a $K_4$. Since $N(s)$ intersects more than one parts of  $F\cup \{s_1\}$, it follows from Theorem~\ref{the:handbag} that $F\cup \{s_1,s\}$ is also complete multipartite and the statement follows.
\end{proof}
\fi 
\iflong 
\begin{lemma}
\fi
\ifshort  
\begin{lemma}[$\star$]
\fi\label{prop:consistentFs2}
    Let $e$ such that $B_e$ is non empty, and let $s\in S$. 
    $N(s)$ either contains $B_e$ or it intersects at most one class of $G[B_e]$.
\end{lemma}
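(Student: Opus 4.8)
The plan is to argue by contradiction, producing in the bad case an induced prison with exactly one edge inside $G[S]$, which cannot exist once Reduction Rule~\ref{red:singleEdge} has been applied exhaustively. First I would fix notation: write $e=ab$ with $a,b\in S$. Since $B_e\neq\emptyset$, Lemma~\ref{lem:bipart} gives that $G[B_e]$ is complete bipartite with two nonempty sides $L$ and $R$; moreover $L\cup R\subseteq N(a)\cap N(b)$ by the definition of $B_e$, $L\cup R\subseteq\bar{S}$, and $ab\in E(G)$. If $s\in\{a,b\}$ then $N(s)\supseteq N(a)\cap N(b)\supseteq L\cup R$ and we are done, so assume $s\notin\{a,b\}$; also assume $N(s)$ meets both $L$ and $R$ (otherwise the second alternative of the lemma already holds), and fix $u\in L\cap N(s)$ and $v\in R\cap N(s)$, noting $uv\in E(G)$ because $G[B_e]$ is complete bipartite.

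The one small idea needed is to produce a $K_4$ through $s$. I claim $s$ is adjacent to $a$ or to $b$: the vertices $\{a,b,u,v\}$ induce a $K_4$ (the six pairs $ab$, $au$, $av$, $bu$, $bv$, $uv$ are all edges), so if $sa\notin E(G)$ and $sb\notin E(G)$, then $s$ is adjacent to exactly the two vertices $u,v$ of this $K_4$, whence $\{a,b,u,v,s\}$ induces a prison whose only edge in $G[S]$ is $ab$ — contradicting exhaustive application of Reduction Rule~\ref{red:singleEdge}. After possibly swapping the names $a$ and $b$, assume $sa\in E(G)$; then $\{a,u,v,s\}$ induces a $K_4$.

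Finally I would show $L\cup R\subseteq N(s)$. Suppose not and pick $w\in(L\cup R)\setminus N(s)$; then $w\notin\{u,v\}$, so $a,u,v,s,w$ are five distinct vertices ($a,s\in S$ and $u,v,w\in\bar{S}$, and $s\notin\{a,b\}$). Using $L\cup R\subseteq N(a)$ and the complete bipartite structure of $G[B_e]$: if $w\in R$ then $w$ is adjacent to $a$ and $u$ but not to $v$ (same side) nor to $s$; if $w\in L$ then $w$ is adjacent to $a$ and $v$ but not to $u$ nor to $s$. In both cases $w$ sees exactly two vertices of the $K_4$ $\{a,u,v,s\}$, so $\{a,u,v,s,w\}$ induces a prison, and its only edge in $G[S]$ is $as$ — again contradicting Reduction Rule~\ref{red:singleEdge}. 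Hence $N(s)\supseteq L\cup R=V(B_e)$, i.e.\ $N(s)$ contains $B_e$, which completes the proof.

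The argument is short and essentially mechanical; the only points requiring attention are that the extra prison vertex must hang off an honest $K_4$ (which is exactly why the preliminary ``$s\sim a$ or $s\sim b$'' step is needed before the main case analysis), and that for each prison one exhibits one must check that precisely one of its edges lies in $G[S]$, so that it is indeed Reduction Rule~\ref{red:singleEdge} that becomes applicable.
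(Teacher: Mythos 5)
Your proof is correct, and its first half coincides with the paper's: both exhibit the candidate prison on $\{s,u,v,a,b\}$ (a $K_4$ on $\{a,b,u,v\}$ with $s$ seeing only $u$ and $v$) whose unique edge in $G[S]$ is $ab$, and conclude from Reduction Rule~\ref{red:singleEdge} that $sa$ or $sb$ must be an edge. (You obtain the $K_4$ directly from the definition of $B_e$ with $e=ab$, whereas the paper first re-derives suitable $a,b\in S$ via Reduction Rule~\ref{red:notSuperEdge} and Lemma~\ref{lem:weakCrosscmc2}; your shortcut is legitimate.) The two arguments then diverge in how they upgrade ``$s$ sees the edge $uv$'' to ``$s$ sees all of $B_e$'': the paper observes that $uv\in B_{sa}$ and invokes the partition property of Lemma~\ref{lem:weakCrosscmc2} to get $B_e=B_{sa}\subseteq N(s)$ in one line, while you reprove the containment from scratch by exhibiting a second prison $\{a,u,v,s,w\}$ for any hypothetical $w\in V(B_e)\setminus N(s)$, whose only edge in $G[S]$ is $as$. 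Both are sound; the paper's route is shorter because it reuses machinery already established, while yours is more self-contained and makes the role of Reduction Rule~\ref{red:singleEdge} completely explicit at the cost of one extra case check (that $w$ meets the $K_4$ $\{a,u,v,s\}$ in exactly two vertices, which indeed follows from the complete bipartite structure guaranteed by Lemma~\ref{lem:bipart}).
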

\iflong
\begin{proof}
    Note that by Reduction Rule~\ref{red:notSuperEdge}, every edge $uv$ in $B_e$ is in a strict supergraph of a prison $P$. This supergraph of a prison contains a $K_4$ $(a,b,u,v)$ and by definition of $B_e$, $\{a,b\}\subseteq S$. By Lemma~\ref{lem:weakCrosscmc2}, $B_e=B_{ab}$.
    Assume that $N(s)$ contains an edge $uv$ of $B_e$.
    If $s\in \{a,b\}$, then $B_e\subseteq N(s)$ by definition of $B_e$. If $s\notin \{a,b\}$, then it follows from Reduction Rule~\ref{red:singleEdge} that either $sa$ or $sb$ is an edge in $G[S]$, else $(s,u,v,a,b)$ would be a prison with a single edge in $S$. Without loss of generality, assume that $sa\in E(G)$. Then $uv\in B_{sa}$ and by Lemma~\ref{lem:weakCrosscmc2} $B_e=B_{sa}$, so $B_e\subseteq N(s)$.
\end{proof}
\fi

\subsection{Marking important vertices}

We will now mark some important vertices that we will keep to preserve the solutions and afterwards argue that we can remove all the remaining vertices without changing the value of an optimal solution. To better understand why this marking procedure works, it is useful to recall Lemmas~\ref{lem:noCrossPrison3}~and~\ref{lem:noCrossPrison2} that state that any (not necessarily strict) supergraph of a prison that has at least one edge in some $F\in \cmsG$ is fully contained in $S\cup F$. That is only supergraphs of the prison that can contain vertices in more than one complete multipartite subgraph from $\cmsG$ are those with all edges either in $S$ or between $S$ and $N(S)$. Such supergraphs of a prison have always at most two vertices outside of $S$. 
Before we give the details of the marking procedure, let us introduce a concept of \emph{neighborhood pattern}. Let $X\subseteq V(G)$ and $a\in V(G)\setminus X$, then the neighborhood pattern of $a$ in $X$ is $N(a)\cap X$. Moreover, for two vertices $a,b\in V(G)\setminus X$, we say $a$ and $b$ have the same \emph{neighborhood pattern} in $X$ if $N(a)\cap X = N(b)\cap X$.

Now, let $F\in \cmsG$. By Lemmas~\ref{prop:consistentFs3}~and~\ref{prop:consistentFs2}, for every $s\in S$, if $N(s)\cap F \neq \emptyset$, then there are only three possibities how $N(s)$ and $F$ can interact. Namely,
\begin{enumerate}
    \item $N(s)\cap F\subseteq C$ for a single part $C$ of $F$;
    \item $N(s)\cap F = F\setminus C$ for a single part $C$ of $F$;
    \item $N(s)\cap F = F$.
    \end{enumerate}
Given the above, we can split the classes $C$ of $F$ into two types. 
\begin{description}
    \item[Type 1.] There is $s\in S$ such that $N(s)\cap F\subseteq C$ or $N(s)\cap F = F\setminus C$. We denote the set of classes of Type 1. as 
    $\typeOne{F}$;
    \item[Type 2.] The rest, which we denote $\typeThree{F}$.
\end{description}

Note that $|\typeOne{F}|\le |S|$ due to Lemmas~\ref{prop:consistentFs3}~and~\ref{prop:consistentFs2} all classes $C$ in $F$ with $C\notin \typeOne{F}$ have exactly the same neighborhood in $S$.

We compute a set $M_F$ of marked vertices for the component $F\in \cmsG$ as follows. We note that whenever we say, we mark some number $x$ of vertices with some property, we mean that if there are more than $x$ many vertices with that property, we mark arbitrary $x$ many of them, else we mark all of them.
Let us start with marking vertices in a class $C$ of Type 1. for every $C\in \typeOne{F}$.
For each $S'\subseteq S$ with $|S'|= 4$, and for each neighborhood pattern $\xi$ in $S'$, we add to $M_F$ arbitrary $2k+5$ vertices of $C$ with the neighborhood pattern $\xi$ in $S'$. Observe that for any $S''$ with $|S''| < 4$, there is $S'\supset S''$ with $|S'|=4$ and so for any neighborhood pattern $\xi''$ in $S''$, there is a neighborhood pattern in $S'$ that is equal to $\xi''$ if restricted to $S''$. Hence, using this marking, we marked at least $2k+5$ vertices with any neighborhood pattern in any $S'$ with $|S'|\le 4$ as well.

In addition, we pick arbitrary $2k+5$ classes of $F$ that are in $\typeThree{F}$ and add arbitrary $2k+5$ vertices from each picked class to $M_F$. 

\iflong 
\begin{lemma}
\fi
\ifshort  
\begin{lemma}[$\star$]
\fi\label{lem:component_marking_bound}
    Given the above marking procedure, for all $F\in \cmsG$, it holds that $|M_F| \le |S|^5\cdot (2k+5) + (2k+5)^2$.
\end{lemma}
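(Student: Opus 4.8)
The plan is a direct count of the two phases of the marking procedure, using only the already-recorded fact that $|\typeOne{F}|\le |S|$ (which follows from Lemmas~\ref{prop:consistentFs3} and~\ref{prop:consistentFs2}: every class $C\notin\typeOne{F}$ has the same neighbourhood in $S$, so distinct Type~1 classes can be charged to distinct vertices of $S$). No structural property of prisons beyond this is needed; the whole statement is a bookkeeping exercise read off from the definition of $M_F$.

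First I would bound the contribution of the Type~1 classes. Fix $C\in\typeOne{F}$. The procedure ranges over the $4$-subsets $S'\subseteq S$, of which there are $\binom{|S|}{4}$, and for each $S'$ over the neighbourhood patterns $\xi$ on $S'$, of which there are at most $2^4=16$ (one bit per vertex of $S'$); each pair $(S',\xi)$ contributes at most $2k+5$ vertices of $C$. Summing over the at most $|S|$ classes in $\typeOne{F}$ gives at most
\[
  |S|\cdot\binom{|S|}{4}\cdot 2^4\cdot(2k+5)
  \;\le\; |S|\cdot\frac{|S|^4}{24}\cdot 16\cdot(2k+5)
  \;\le\; |S|^5(2k+5)
\]
vertices, where the last inequality uses $16/24<1$ (and the estimate $\binom{|S|}{4}\le |S|^4/24$).

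Next I would handle the Type~2 classes: the procedure picks at most $2k+5$ of them and marks at most $2k+5$ vertices from each, contributing at most $(2k+5)^2$ vertices. Adding the two phases gives $|M_F|\le |S|^5(2k+5)+(2k+5)^2$, as claimed.

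The only point needing a little care — and it is the nearest thing to an ``obstacle'' — is the constant bookkeeping in the first phase: the crude estimates $\binom{|S|}{4}\le |S|^4$ together with ``at most $16$ patterns'' would only yield $16|S|^5(2k+5)$, so one must keep the factor $1/24$ in $\binom{|S|}{4}\le |S|^4/24$ in order to absorb the $2^4$ and land exactly on the stated bound $|S|^5(2k+5)$. Everything else is immediate from the definition of the marking procedure and the inequality $|\typeOne{F}|\le |S|$.
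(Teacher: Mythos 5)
Your proposal is correct and matches the paper's proof essentially verbatim: both bound the Type~1 contribution by $|S|\cdot 2^4\cdot\binom{|S|}{4}\cdot(2k+5)\le |S|^5(2k+5)$ using $\binom{|S|}{4}\le |S|^4/24$ to absorb the factor $16$, and both bound the Type~2 contribution by $(2k+5)^2$. No differences worth noting.
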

\iflong
\begin{proof}
It follows from Properties~\ref{prop:consistentFs3}~and~\ref{prop:consistentFs2} that
    $\typeOne{F}$ is well defined and $|\typeOne{F}|\le |S|$. 
    Now for each $C\in \typeOne{F}$, $M_F$ contains at most $2^4\cdot \binom{|S|}{4} \cdot (2k+5)\le 16\frac{|S|^4}{24} \cdot (2k+5)\le |S|^4\cdot (2k+5)$ vertices. Hence, there are at most $|S|^5\cdot (2k+5)$ vertices from $\typeOne{F}$ in $M_F$. 
    Finally, there are  
    at most $(2k+5)^2$ vertices from classes in $\typeThree{F}$. 
\end{proof}
\fi 
In addition to marking the set $M_F$ for each $F\in \cmsG$, we mark additional set of at most $2^4\cdot \binom{|S|}{4}\cdot (2k+5)$ vertices by going over all subsets $S'$ of $S$ of size $4$ and for each neighborhood pattern $\xi$ in $S'$, we mark at most $2k+5$ additional vertices of $V(G)\setminus S$ that are not in any $F\in \cmsG$ with the given neighborhood pattern $\xi$ in $S'$. We denote this set of at most $|S|^4\cdot (2k+5)$ vertices as $M_\emptyset$. Additionally, observe that this way, we mark at least $2k+5$ vertices for each neighborhood pattern in any subset of $S$ of size at most four as well. 

\iflong 
\begin{lemma}
\fi
\ifshort  
\begin{lemma}[$\star$]
\fi\label{lem:marking_correctness}
    Let $G' = G[S\cup M_\emptyset\cup \bigcup_{F\in \cmsG}M_F]$. Then $(G,k)$ is yes-instance of \delProblem if and only if $(G',k)$ is. In addition $|V(G')| =  \bigoh(k^{65})$. 
\end{lemma}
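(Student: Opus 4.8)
The plan is to prove the two directions of the equivalence separately, the first being routine and the second carrying essentially all of the work, and then to read off the size bound directly from the preceding lemmas.

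\medskip
\textbf{The easy direction.} Suppose $(G,k)$ is a yes-instance and let $A$ be a solution for $G$ with $|A|\le k$. Since $G'$ is the induced subgraph of $G$ on $V(G')=S\cup M_\emptyset\cup\bigcup_{F\in\cmsG}M_F$, the graph $G'-(A\cap E(G'))$ equals the induced subgraph of $G-A$ on $V(G')$, hence is prison-free, and $|A\cap E(G')|\le k$; so $(G',k)$ is a yes-instance.

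\medskip
\textbf{The hard direction.} Let $A'$ be a solution for $G'$ with $|A'|\le k$; viewing $A'$ as a subset of $E(G)$ (legitimate since $E(G')\subseteq E(G)$), we show that $G-A'$ is prison-free, so that $A'$ is a solution for $(G,k)$. Suppose not and let $W\subseteq V(G)$ with $|W|=5$ be such that $(G-A')[W]$ is an induced prison. Since $A'$ contains only edges among vertices of $V(G')$, if $W\subseteq V(G')$ then $(G-A')[W]=(G'-A')[W]$, contradicting the choice of $A'$; so $W$ contains a \emph{removed} vertex, i.e.\ a vertex of $V(G)\setminus V(G')$, and every such vertex lies in $\bar S$ because $S\subseteq V(G')$. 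Now $G[W]$ is a supergraph of a prison, so it contains a $K_4$, hence an edge, and we split into two cases. If $G[W]$ has an edge with both endpoints in $\bar S$, then that edge lies inside some $F\in\cmsG$ (the edges of $G[\bar S]$ are partitioned by $\cmsG$, by Lemma~\ref{lem:boundOnNbC} and the discussion preceding it), and then Lemmas~\ref{lem:noCrossPrison3} and~\ref{lem:noCrossPrison2} force $W\subseteq S\cup F$. Otherwise every edge of $G[W]$ has an endpoint in $S$, so $W\cap\bar S$ is an independent set in $G[W]$; since a prison has independence number $2$, $|W\cap\bar S|\le 2$. In both cases we use the following key observation: $A'\subseteq E(G')$ contains no edge incident to a removed vertex, so for every removed vertex $v\in W$ the adjacency of $v$ to $W\cap S$ is the same in $G$ and in $G-A'$; in particular no "cut-edge" issues arise and it suffices to match patterns in $G$.

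\medskip
\textbf{The substitution.} In either case the idea is to replace each removed vertex $v\in W$ by a marked vertex $v^{\star}$ so that the resulting $5$-set $W^{\star}\subseteq V(G')$ still induces a prison in $G'-A'$, a contradiction. We pick $v^{\star}$ so that: (i) $v^{\star}$ lies in the same cell of the decomposition as $v$ --- in $M_\emptyset$ when $N_G(v)\subseteq S$, and otherwise in $M_F$ for the $F\in\cmsG$ witnessing the relevant edge, inside the same class of $F$ as $v$ when that class also contains another vertex of $W$, and inside an otherwise-unused class of the appropriate \emph{type} otherwise (here we use that, by Lemmas~\ref{prop:consistentFs3} and~\ref{prop:consistentFs2}, all Type-$2$ classes of $F$ have the same neighbourhood in $S$, hence are interchangeable, and that at least $2k+5$ of them are marked); (ii) $v^{\star}$ has the same neighbourhood as $v$ on the at most four vertices of $W\cap S$ --- possible because, for every $S'\subseteq S$ with $|S'|\le 4$ and every neighbourhood pattern on $S'$, the marking retains at least $2k+5$ realizing vertices in $M_\emptyset$ and, restricted to each class, in each relevant $M_F$, and moreover, since $v$ is \emph{not} marked, its own pattern is realized by more than the marked number, so there is always at least one suitable candidate distinct from $v$; and (iii) $v^{\star}$ is not an endpoint of any edge of $A'$ --- possible because $A'$ has at most $2k$ endpoints while at least $2k+5$ candidates meeting (i)--(ii) are available, leaving at least $5\ge 2$ usable ones even when two removed vertices must be replaced at once. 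A short case check (splitting into $W\subseteq S\cup F$ and $|W\cap\bar S|\le 2$, and using that adjacency inside $F$ depends only on the class and that a vertex outside $F$ sees at most one class of $F$) then shows $G[W^{\star}]\cong G[W]$ and that the edges of $A'$ inside $W^{\star}$ are exactly those inside $W\cap S$ (the only edges of $W$ that $A'$ can contain), whence $(G'-A')[W^{\star}]=(G-A')[W^{\star}]\cong(G-A')[W]$ is a prison; contradiction. Therefore $A'$ is a solution for $G$ and $(G,k)$ is a yes-instance.

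\medskip
\textbf{Size bound.} By Lemma~\ref{lem:kernelizing_P_F}, $|S|=\bigoh(k^8)$; by Lemma~\ref{lem:boundOnNbC}, $|\cmsG|\le |S|^3+|S|^2=\bigoh(k^{24})$; by Lemma~\ref{lem:component_marking_bound}, $|M_F|\le |S|^5(2k+5)+(2k+5)^2=\bigoh(k^{41})$ for each $F\in\cmsG$; and $|M_\emptyset|\le |S|^4(2k+5)=\bigoh(k^{33})$. Hence $|V(G')|\le |S|+|M_\emptyset|+\sum_{F\in\cmsG}|M_F|\le \bigoh(k^8)+\bigoh(k^{33})+\bigoh(k^{24})\cdot\bigoh(k^{41})=\bigoh(k^{65})$.

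\medskip
The main obstacle is the consistency verification in the substitution step: one must guarantee that the substituted vertices have the correct adjacencies not only to $S$ --- which the marking controls directly --- but also to one another and to the other vertex of $W\cap\bar S$ when $|W\cap\bar S|=2$ and the two $\bar S$-vertices lie in distinct elements of $\cmsG$. This is exactly where the full structural analysis of $\cmsG$ (Lemmas~\ref{lem:superprisonfree}, \ref{lem:noCrossPrison3}, \ref{lem:noCrossPrison2}, \ref{prop:consistentFs3}, \ref{prop:consistentFs2}) and the Type-$1$/Type-$2$ dichotomy are needed, together with the observation that $A'$ never cuts an edge at a removed vertex.
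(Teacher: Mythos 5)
Your overall architecture (easy direction by induced subgraphs, case split on whether the prison meets an edge of $G[\bar S]$, substitution of unmarked vertices by marked ones with the same neighbourhood pattern on the at most four vertices of $W\cap S$, a $2k+5$ versus $2k$ counting argument to dodge endpoints of $A'$, and the arithmetic for the $\bigoh(k^{65})$ bound) matches the paper, and the easy direction and size bound are correct. However, the hard direction has a genuine gap, and it is precisely the one you flag as ``the main obstacle'' without closing it: the marking only controls neighbourhood patterns \emph{inside $S$}, so when $|W\cap\bar S|=2$, say $W\cap\bar S=\{a,b\}$ with $b$ removed, nothing guarantees that a candidate $b^{\star}$ with the right pattern on $W\cap S$ satisfies $ab^{\star}\notin E(G)$, and the structural lemmas you cite (Lemmas~\ref{lem:superprisonfree}, \ref{lem:noCrossPrison3}, \ref{lem:noCrossPrison2}, \ref{prop:consistentFs3}, \ref{prop:consistentFs2}) say nothing about adjacencies between two vertices of $\bar S$ lying in no common member of $\cmsG$. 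The paper does \emph{not} resolve this by a consistency check; it changes the contradiction target. First it uses Lemma~\ref{lem:kernelizing_P_F} to argue that if the prison of $G\Delta A'$ were already a prison of $G$ it would be hit by $A'$, which forces the second non-edge $ac$ to lie in $A'$ and hence forces $a\in V(G')$, so only $b$ can be removed. Then, for a candidate $b'$ with $bb'\notin E(G)$, it splits on $ab'$: if $ab'\notin E(G)$ then $(a,b',c,d,e)$ is a prison of $G'\Delta A'$, but if $ab'\in E(G)$ it instead exhibits $(a,b,b',d,e)$ as a prison of $G$ with a single edge in $G[S]$, contradicting Reduction Rule~\ref{red:singleEdge}. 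Without some such device your substitution simply does not go through.

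A second, smaller gap sits in the case $W\subseteq S\cup F$: your rule ``put $v^{\star}$ in the same class of $F$ as $v$ when that class contains another vertex of $W$'' is not always executable, because for a Type-2 class only $2k+5$ of the (possibly many) classes are marked in $M_F$, so the class of $v$ may contain a surviving vertex of $W$ yet contain no marked vertices at all. The paper's proof avoids this by re-mapping \emph{all} vertices of $W\cap F$ lying in Type-2 classes simultaneously, sending distinct classes to distinct marked Type-2 classes (legitimate since all Type-2 classes have identical neighbourhoods in $S$), rather than keeping the surviving ones in place. Both points are fixable, but as written the proof of the backward implication is incomplete.
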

%\ifshort 
\begin{proof}
  % $|\cmsG|\cdot |M_C| = \bigoh (|S|^8\cdot k) = k^{89}$
    The bound on the size of $G'$ follows from the fact that that $|S| = \bigoh(k^8)$, $|\cmsG| \le |S|^3 + |S|^2$,\ifshort{} and Lemma~\ref{lem:component_marking_bound}. \fi\iflong{} and the fact that marking procedure marked at most $|\cmsG|\cdot (|S|^5\cdot (2k+5) + (2k+5)^2) + |S|^4\cdot (2k+5)$ many vertices outside of $S$, which is $\bigoh((k^8)^3\cdot (k^8)^5\cdot k) = \bigoh(k^{65})$ many vertices outside of $S$. 
    
    \fi
    Now, $G'$ is an induced subgraph of $G$ and hence for every $A\subseteq E(G)$, if $G\Delta A$ is prison-free, then also $G'\Delta (A\cap E(G'))$ is. Therefore, if $(G,k)$ is yes-instance, then so is $(G',k)$. 

    For the rest of the proof, let us assume that $(G',k)$ is yes-instance and let $A\subseteq E(G')$ be such that $|A|\le k$ and $G'\Delta A$ is prison-free. We show that $G\Delta A$ is also prison-free. For the sake of contradiction, let's assume that $P = (a,b,c,d,e)$ induces a prison in $G\Delta A$. Let us distinguish two cases depending on whether $P$ contains an edge between two vertices outside of $S$ or not. 
    \begin{description}
        \item[Case 1.] All edges of $P$ have at least one endpoint in $S$. Note that in this case, there are at most two vertices of $P$ outside of $S$. 
        
        If only one vertex of $P$, say $a$, is outside of $S$, then clearly $a$ is the only vertex of $P$ not in $G'$. Note that this also means that none of the edges incident with $a$ is in $A$. Moreover, $G'$ contains at least $2k+5$ vertices with the same neighborhood pattern in $\{b,c,d,e\}$, else $a$ would have been either in $M_\emptyset$ or in $M_F$ for some $F\in \cmsG$. Since $|A|\le k$, it follows that at least one of these $2k+5$ vertices, let's call it $a'$, is not incident to an edge in $A$ and in particular to an edge between $a'$ and a vertex in $\{b,c,d,e\}$.
        However, then $(G'\Delta A)[\{a',b,c,d,e\}]$ is isomorphic to $P$ and induces a prison, which is a contradiction with $G'\Delta A$ being prison-free.

        Now assume that two vertices $a$ and $b$ are outside of $S$, and there is no edge between $a$ and $b$ in $G \Delta A$.
        Note that the other non-edge of $P$ share an endpoint with $ab$, and w.l.o.g., we can assume it is $ac$. As at least one of $a$ and $b$ is not in $G'$, $ab\notin E(G)$. Moreover, $ac\in E(G)\cap A$, else $P$ is a prison in $G$ and Lemma~\ref{lem:kernelizing_P_F} implies that $A$ intersects $P$. Hence, $b\notin V(G')$, by our marking procedure, there are at least $2k+5$ vertices $b'$ in $V(G')\setminus S$ with $\{c,d,e\}\subseteq N(b')$ and $bb'\notin E(G)$. One of these vertices is not incident on any edge in $A$. For such a vertex $b'$, either $(a,b',c,d,e)$ or $(a,b,b',d,e)$ is a prison in $G'\Delta A$ depending on whether $ab'\in E(G)$ or $ab'\notin E(G)$. 
        \iflong  Note that at least one $a$ or $b$ does not belong to $G'$, else $P$ is an induced prison in $G'\Delta A$ as well. It follows that $ab\notin E(G)$, since $ab\notin E(G \Delta A)$ and $ab\notin A$. Finally, the second non-edge of $P$ has to be incident on $a$ or $b$, since the two non-edges of a prison share a common endpoint. Without loss of generality, let us assume that the non-edge is $ac$. Note that if $ac\notin E(G)$, then $P$ is a prison in $G$ as well. However, $G[S]$ is an induced subgraph of $G'$ and hence $G[S]\Delta A$ is prison-free. 
        It then follows by Lemma~\ref{lem:kernelizing_P_F} that $P$ is also hit by $A$, contradicting the fact that $P$ is a prison in $G\Delta A$. Hence, we can assume that $ac\in A$ and hence $a \in V(G')$. Now, we can assume that $b\in V(G)\setminus V(G')$, else $P$ would be contained in $G'$.  
        Since $|A|\le k$, there are at most $2k$ vertices incident on edges in $A$. Hence, by construction of $M_\emptyset$ and of $M_F$ for all $F\in \cmsG$, there exists $b'\in V(G')$ such that (1) $\{c,d,e\}\subseteq N(b')$, (2) $bb'\notin E(G)$, and (3) $b'$ is not incident on any edge in $A$. Now, if $ab'\notin E(G)$, then $P' = (a,b',c,d,e)$ is a prison $G'\Delta A$ with the two non-edges $ac$ and $ab'$. 
        On the other hand, if $ab'\in E(G)$, then $(a,b,b',d,e)$ is a prison in $G$ with the two non-edges $ab$ and $bb'$ and with a single edge in $S$, which contradicts the assumption that Reduction Rule~\ref{red:singleEdge} has been applied exhaustively. This concludes the proof of the first case.
        \fi
        
        \item[Case 2.] $P$ contains an edge in $G-S$. Then by Lemmas~\ref{lem:noCrossPrison3}~and~\ref{lem:noCrossPrison2}, it follows that there exists $F\in \cmsG$ such that $P\subseteq S\cup F$. Let $S' = P\cap S$. Now, for $x\in P\setminus S$, If $x$ belongs to $C\in \typeOne{F}$, then by construction of $M_F$ and the fact that $|A|\le k$, $M_F$ either contains $x$ or a vertex $x'$ such that (1) $x'\in C$, (2) $N(x)\cap S' = N(x)\cap S'$, and (3) $x'$ is not incident to any edge of $A$. On the other hand, if some of the vertices in $P\cap F$ are in the classes in $\typeThree{F}$, then all such vertices have exactly same neighborhood in $S$, moreover, either all their respective classes contain vertices in $M_F$, or there are at least five classes of $F$ that each has $2k+5$ vertices in $G'$ and no vertex in these classes is incident on an edge in $A$. Hence, for each vertex $x$ in $P\cap F$ that belongs to a class $C\in \typeThree{F}$, we can find $C'\in \typeThree{F}$ and $x'\in C'$ such that (1) $x'\in M_F$, (2) $x'$ is not incident on any edge in $A$, and (3) if $x$ and $y$ are in $P\cap F$, then $x'$ and $y'$ are in the same class of $F$ if and only $x$ and $y$ are. Let $P'=(a',b',c',d',e')$ be a subgraph of $G'$ such that for all $x\in \{a,b,c,d,e\}$, if $x\in G'$, then $x'=x$ and else $x'$ is computed as described above, depending on whether $x\in \typeOne{F}$ or $x\in \typeThree{F}$. It follows that for all $x,y\in \{a,b,c,d,e\}$, there is an edge $xy\in E(G)$ if and only if $x'y'\in E(G')$. Moreover, since $xy\in A$ implies that $\{x,y\}\subseteq V(G')$, it follows that  $xy\in A$ if and only if $x'y'\in A$. Therefore, $P'$ induces a prison in $G'$, which is a contradiction. 
    \end{description}
    Hence, such prison $P$ in $G\Delta A$ cannot exist and $G\Delta A$ is prison-free as well. Consequently, $(G',k)$ is yes-instance of \delProblem\ if and only if $(G,k)$ is and the Lemma follows. 
\end{proof}
%\fi 
The polynomial kernel for \delProblem\ then follows from Lemma~\ref{lem:marking_correctness} by observing that all reduction rules as well as marking procedure can be applied in polynomial time. 
\thmkernel*
% \begin{theorem}
%     \delProblem admits a polynomial kernel.
% \end{theorem}
\iflong
\begin{proof}
    Let us summarize the whole kernel and argue that the running time is polynomial. The correctness and the size of the kernel then follows from Lemma~\ref{lem:marking_correctness}. 

    We start by running the algorithm of Lemma~\ref{lem:kernelizing_P_F} to find in polynomial time a set of vertices $S$ such that if some $A\subseteq E(G)$ with $|A|\le k$ hits all the prisons in $G[S]$, then it hits all the prisons in $G$. That is for any prison $P$ in $G$, if $A$ intersects all prisons in $G[S]$, then it intersects $P$. Afterwards we exhaustively apply Reduction Rules~\ref{red:notSuperEdge}~and~\ref{red:singleEdge}. Each of these two rules removes an edge when applied, so they can be applied at most $|E(G)|$ times before the instance becomes trivial. In addition, we can check whether any of the two rules apply by exhaustively enumerating all at most $|V(G)|^5$ subsets of 5 vertices. 
    By Lemma~\ref{lem:superprisonfree}, any edge of $G[\bar{S}]$ that belongs to a triangle in $G[\bar{S}]$ belongs to unique maximal complete multipartite subgraph of $G[\bar{S}]$. Similarly, any edge of $G[\bar{S}]$ that does not  belong to a triangle in $G[\bar{S}]$ belongs to a single maximal complete bipartite subgraph of $G[\bar{S}]$ by Lemmas~\ref{lem:weakCrosscmc2}~and~\ref{lem:bipart}. Hence, we can easily enumerate all maximal complete multipartite subgraphs of $G[\bar{S}]$ by starting from an edge and greedily adding vertices one by one if they form a complete multipartite graph with the vertices already picked. Now for a each maximal complete $C$ multipartite subgraphs of $G[\bar{S}]$ with at least three classes, we can easily verify whether we can apply Reduction Rule~\ref{red:buble} to it. We add to $C$ all the vertices of $S$ that neighbor vertices in at least two classes of $C$ to obtain $C'\subseteq C$. If $G[C']$ is not complete multipartite graph or there is $x\in V(G)\setminus C'$ with neighbors in two classes of $C'$, then Reduction Rule~\ref{red:buble} cannot be applied to any supergraph of $C$, else it can be applied to $C'$. Therefore, Reduction Rule~\ref{red:buble} can also be applied in polynomial time. 

    Assuming exhaustive application of all reduction rules, Lemma~\ref{lem:boundOnNbC} shows that we can assume that there are at most $|S|^3+|S|^2$ many maximal complete multipartite subgraphs of $G[\bar{S}]$ and they are all pairwise edge-disjoint. For each such subgraph $C$, we apply a marking procedure to obtain the set of vertices $M_C$. To compute $M_C$, we need to enumerate all subsets $S'$ of $S$ of size four and partition vertices of each class in $C$ depending on their neighborhood in $S'$, which can be easily done in polynomial time as well. Therefore, the kernelization algorithm can be performed in polynomial time and the correctness and the size bound follow from Lemma~\ref{lem:marking_correctness} as mentioned above. 
\end{proof}

\fi

\section{Conclusions} \label{sec:conc}

We have showed that \textsc{$H$-free Edge Deletion} has a polynomial kernel when $H$ is the 5-vertex graph we call the ``prison'' (consisting of $K_5$ minus two adjacent edges). On the other hand, \textsc{$H$-free Edge Completion} for the same graph $H$ does not have a polynomial kernel unless the polynomial hierarchy collapses. By edge complementation, this is equivalent to the statement that
\textsc{$\overline{H}$-free Edge Deletion} has no polynomial kernel, where $\overline{H}$ is the edge complement of $H$. 
The positive result refutes a conjecture by Marx and Sandeep~\cite{1}, who conjectured that \textsc{$H$-free Edge Deletion} has no polynomial kernel for any graph $H$ on at least five vertices except trivial cases. 

In~\cite{1}, the conjecture is reduced to the statement that \textsc{$H$-free Edge Deletion} has no polynomial kernel for any graph $H$ in a list $\mathcal{H}$ of nineteen small graphs, via a sequence of problem reductions. In this naming scheme, the prison is the complement of $H_1$. The exclusion of co-$H_1$ from this list introduces a sequence of new minimal graphs $H'$ for which the kernelization problem is open, out of which the smallest are the prison plus a vertex $v$ which is (respectively) an isolated vertex; attached to a degree-3 vertex of the prison; or attached to both a degree-3 and a degree-2 vertex of the prison 
(R.~B.~Sandeep, personal communication, using software published along with~\cite{1}).
It is at the moment not known whether the new list $\mathcal{H}'$ is finite using the methods of~\cite{1}.

More broadly, the result suggests that the picture of kernelizability of $H$-free Edge Modification problems could be more complex than conjectured by Marx and Sandeep. If so, the question of precisely where the tractability line goes for polynomial kernelization seems highly challenging, as all kernelization results so far (including ours) rely on highly case-specific structural characterizations of $H$-free graphs. 
We leave these deeper investigations into the problem open for future work. 
We also leave open the question of a polynomial kernel for \textsc{Prison-free Edge Editing}. 

%%
%% Bibliography
%%

%% Please use bibtex, 
\bibliography{refs}

% \bibliography{lipics-v2021-sample-article}

% \appendix

\end{document}